\documentclass[11pt]{entcs}
\usepackage{entcsmacro,latexsym,amssymb,amsmath}
\def\.{{\cdot}}            
\def\<{\langle}            
\def\>{\rangle}            
\def\({\big(}              
\def\){\big)}              
\def\implies
{\hbox{$\Rightarrow$}}     
\def\lead{\leaders\hbox to 1.5ex{\hss${.}$\hss}\hfill}
\def\arr{\hbox to 60pt{\rightarrowfill}}
\def\larr{\hbox to 60pt{\leftarrowfill}}

\newskip\aline \newskip\halfaline
\aline=12pt plus 1pt minus 1pt
\halfaline=6pt plus 1pt minus 1pt

\def\1{{\bf1}}

\renewcommand{\Re}{{\mathbb R}}
\newcommand{\Nat}{{\mathbb N}}
\renewcommand{\P}{\mathcal{P}}

\newcommand{\B}{{\mathcal{B}}}

\newcommand{\ua}{{\uparrow}}
\newcommand{\da}{{\downarrow}}

\newcommand{\DCPO}{\mbox{\sf DCPO }}

 \newcommand{\DOM}{\mbox{\textsf{DOM} }} 
  \newcommand{\SDOM}{\mbox{\textsf{SDOM} }} 
 \newcommand{\BCD}{\mbox{\textsf{BCD} }} 
 \newcommand{\RB}{\mbox{\textsf{RB} }}

\renewcommand{\O}{{\mathcal O}}

\newcommand{\M}{{\mathcal M}}

\newcommand{\Da}{\mathord{\mbox{\makebox[0pt][l]{\raisebox{-.4ex}
                           {$\downarrow$}}$\downarrow$}}}
\newcommand{\Ua}{\mathord{\mbox{\makebox[0pt][l]{\raisebox{.4ex}
                           {$\uparrow$}}$\uparrow$}}}
\renewcommand{\Pr}{\textsf{Prob}}

\sloppy

\renewcommand{\Pr}[1]{{\textsf{Prob}\,#1}}
\newcommand{\SPr}[1]{{\textsf{SProb}\,#1}}

\newcommand{\dom}{\text{dom}\,}
\newcommand{\V}{{\mathbb V}}

\renewcommand{\to}{\longrightarrow}
\newcommand{\ran}{\text{ran}\,} 
\newcommand{\wid}[1]{\widehat{#1}}

\begin{document}
\begin{frontmatter}
  \title{Domain Theory and Random Variables}
\author{Michael Mislove}
\address{Department of Computer Science\\Tulane University, New Orleans, LA 70118}
\begin{abstract}
The aim of this paper is to establish a theory of random variables on domains. Domain theory is a fundamental component of theoretical computer science, providing mathematical models of computational processes. Random variables are the mainstay of probability theory. Since computational models increasingly involve probabilistic aspects, it's only natural to explore the relationship between these two areas. Our main results show how to cast results about random variables using a domain-theoretic approach. The pay-off is an extension of the results from probability measures to sub-probability measures. We also use our approach to extend the class of domains for which we can classify the domain structure of the space of sub-probability measures. 
\end{abstract}
\begin{keyword}
Domain theory, random variables, 
Skorohod Representation Theorem
\end{keyword}
\end{frontmatter}
\section{Introduction}
This paper draws its impetus from a line of work whose goal is to develop a domain-theoretic approach to random variables.\footnote{In the probability theory literature, random variables are measurable maps from a probability space that take values in the reals, while \emph{random elements} are measurable mappings from a probability space to an arbitrary measure space. Here we use ``random variables" to denote either.} The original motivation was to use random variables to devise models for probabilistic computation that don't suffer from the well-known problems of the probabilistic power domain~\cite{jungtix}, a program that began about 10 years ago, and recently has seen some notable successes -- more on that below. 

In this paper, we shift the focus from constructing monads for probabilistic choice to laying a foundation for a theory of random variables using domains. We show that an important result from the theory of random variables can be recast in the setting of domain theory, where measurable maps can then be approximated by Scott-continuous maps. The result in question is Skorohod's Theorem~\cite{skorohod}, one of the basic results in stochastic process theory. In its simplest form, this theorem states that any Borel probability measure on a Polish space $P$ can be realized as the law for a random variable $X\colon [0,1]\to P$. That is, if $\mu$ is a Borel measure on a Polish space $X$ and if $\lambda$ denotes Lebesgue measure on the unit interval, then there is a measurable map $X\colon [0,1]\to X$ satisfying $\mu = X\, \lambda$, the push forward of $\lambda$ under $X$. Furthermore, if $\mu_n\to_w \mu$ in $\Pr{P}$ in the weak topology, then the random variables $X, X_n\colon [0,1]\to P$ with laws $\mu$ and $\mu_n$, respectively, satisfy $X_n\to X$ almost surely. This result allows one to replace arguments about the convergence of measures in the weak topology with arguments about almost sure convergence of measurable maps from the unit interval to Polish spaces. It led Skorohod to develop the theory of c\`adl\`ag functions that play a prominent role in the analysis of stochastic processes. 

Our main results are inspired by Skorohod's Theorem. Each of our results generalizes from probability measures to sub-probability measures, which are more commonly used in domain theory. 

Our first result extending Skorohod's Theorem involves two new ingredients. First, in moving to the domain setting, we develop an approach to proving Skorohod's theorem in which the Cantor set, $C$, replaces the unit interval, and a domain $D$ is the target space. The role of Lebesgue measure is played by $\nu_C$, Haar measure on $C$ regarded as a countable product of two-point groups. We also show that Skorohod's Theorem with the unit interval and Lebesgue measure follows as a corollary of our approach. 

We introduce the Cantor set into the discussion because it offers a ready-made computational model in the form of the Cantor tree, $CT$ -- the full rooted binary tree whose set of maximal elements is isomorphic to the Cantor set.  Achieving our results requires the Cantor tree: if we tried our approach using just the Cantor set, which is a chain in the natural order, then all monotone images also would be chains, which would limit the result. But the Cantor tree proves to be just the right structure to generalize to arbitrary countably-based domains as images. This brings up the second new component of our approach: the use of the transport numbers between simple measures that are fundamental to the Splitting Lemma for simple measures on a domain. They allow us to define a sequence of Scott-continuous maps from the Cantor tree to a target domain that approximate a given measure on the domain. 

To continue the discussion, we need some notation: we realize the Cantor tree, $CT \simeq \{0,1\}^*\cup \{0,1\}^\omega$ as the set of finite and infinite words over $\{0,1\}$.  If we endow $CT$ with the prefix order, then $\P_C(CT)$, the convex power domain of $CT$ is a coherent domain. If we denote by $C_n$ the set of words of length $n$, then we let $\nu_n$ denote normalized counting measure on $C_n$, which is Haar measure when $C_n\simeq 2^n$ is regarded as a finite group. Moreover, we have $\nu_n = \pi_n\, \nu_C$, where $\pi_n\colon C\to C_n$ is the canonical projection. In fact, $\nu_C = \sqcup_n \nu_n$ in \textsf{SProb}$\, CT$, the family of sub-probability measures on $CT$, regarded as Scott-continuous valuations over $CT$ and ordered pointwise.  If $D$ is a domain, then we let $[CT\rightharpoonup D]$ denote the family of Scott-continuous maps $f\colon A\to D$, where $A$ is a Lawson-closed antichain in $CT$, with the order $f\leq g$ iff $\dom f\subseteq \da \dom g$ and $f\circ \pi_{\dom f,\dom g}\leq g$ where all components are defined.  Our first main result is the following:
\medbreak
\noindent\textbf{Theorem 1.} \emph{(Skorohod's Theorem for Domains) Let $D$ be a countably-based coherent domain, and let $\{\mu_n\}_n\in \textsf{SProb}\,D$ be sequence of Borel sub-probability measures satisfying $\lim_n \mu_n  = \mu \in \textsf{SProb}\,D$, where the limit is taken in the Lawson topology. Then there are Scott-continuous maps $f, f_n\colon CT\rightharpoonup D$ satisfying $f\vert_C\, \nu_C = \mu, f_n\vert_C\, \nu_n = \mu_n$ for each $n$, and $\lim_n f_n \longrightarrow f$ in the Lawson topology on $[CT\rightharpoonup D]$. }
\medbreak
Skorohod's Theorem is a corollary of Theorem 1 as follows. Any Polish space $P$ has a \emph{computational model,} a countably-based bounded complete domain $D$ for which $P$ is homeomorphic to the set Max$\,P_D$ of maximal elements endowed with the relative Scott topology. In fact, Max$\,P_D$ is a $G_\delta$, hence a Borel subset of $P_D$. The last piece is provided by the fact that the canonical surjection $\phi\colon C\to [0,1]$ of the Cantor set onto the unit interval preserves all sups and infs, and so it has a lower adjoint $j\colon [0,1]\to C$ preserving all suprema. Following $j$ by the maps provided in Theorem 1 then yields Skorohod's original result. 

Our second theorem is a special case of the discussion above, when the Polish space $P$ is actually totally ordered. In this case, we abandon our indirect approach using the Cantor tree, and instead take a direct approach to considering mappings from the unit interval, but also restricting to the case that $P$ is a complete chain.  This allows us to prove the following result using direct, domain-theoretic arguments:
\medbreak
\noindent\textbf{Theorem 3.} If $D$ is a complete chain with $\perp$ as the only compact element, then $\textsf{SProb}\,D$, the family of sub-probability measures on $D$, and $\Pr{D}$, the family of probability measures on $D$, are continuous lattices. 
\medbreak
This result significantly expands our knowledge of the domain structure of the family of sub-probability measures on a domain $D$. Indeed, up to this point, the only domains $D$ for which the domain structure of $\textsf{SProb}\,D$ is known are a tree, $T$, for which $\textsf{SProb}\,T\in$ \textsf{BCD}, the category of bounded complete domains, or a finite reverse tree $T^{op}$, in which case $\textsf{SProb}\,T^{rev}$ is in \RB\cite{jungtix}.

\subsection{Related Work}
Previous work that is related to our results include Edalat's extensive history of results devising domain-theoretic approaches to topics such as integration theory~\cite{edalat1}, stochastic processes~\cite{edalat2}, dynamical systems and fractals~\cite{edalat3}, and Brownian motion~\cite{bilokon}. His development with Heckmann of the formal ball model~\cite{edalat5} provided an approach tailored to modeling metric spaces and Lipschitz maps using domain theory. The concept of a computational model emerged in Edalat's work on domain models of spaces arising in real analysis using the domain of compact subsets under reverse inclusion, where the target space arises as the set of maximal elements. The first paper formally presenting such a model was~\cite{edalat3}, where a domain model for locally compact second countable spaces was given. That paper presents a range of applications of the approach, including dynamical systems, iterated function systems and fractal, a computational model for classical measure theory  on locally compact spaces, and a computational generalization of Riemann integration.  Related work led to the formal ball model~\cite{edalat5} which was tailor-made for modeling metric spaces and Lipschitz functions. Further discussion of these developments occurs in our discussion of Polish spaces in Section~\ref{sec:polish} below.

Other related work concerns the program to develop random variable models of probabilistic computational processes. This began with~\cite{mislove-icalp}, a paper that provided a domain model for finite random variables. Further efforts had limited success until recently. The model proposed in~\cite{g-l-v-lics} turned out to be flawed, as was initially shown in~\cite{misl-anat1,misl-anat2}.   But inspired by ideas from~\cite{g-l-v-lics}, Barker~\cite{barker} devised a monad of random variables that gives an abstract model for randomized algorithms. This line of research was initiated by Scott~\cite{scott-stoch}, who showed how the $\P(\Nat)$ model of the lambda calculus could be extended naturally to support probabilistic choice with the aid of a random variable $X\colon [0,1]\to \P(\Nat)$. Barker's results abstract Scott's approach by providing a model of randomized PCF that adds a version of probabilistic choice based on random variables. Finally, the author has devised another monad based on random variables~\cite{mislove-newmon} that supports settings in which processes, such as those representing honest participants in a crypto-protocol, for instance, have access to distinct sources of randomness, something that Barker's monad does not support. It is notable that both of these monads leave important Cartesian closed categories of domains invariant -- in particular, the category \BCD of bounded complete domains, as well as the CCC \RB of retracts of bifinite domains invariant, and each enjoys a distributive law with respect to at least one of nondeterminism monads.\\[1ex] 

The rest of the paper is as follows. In the next section, we review the material we need from a number of areas, domain theory, topology, and probability theory. Section 3 develops results about mappings from the Cantor tree to the space $\SPr{D}$ of sub-probability measures on a countably-based coherent domain $D$. Section 4 contains the main results of the paper, by first recalling the development of Polish spaces as computational models, and then presenting the main theorems.  Section 5 summarizes what's been proved, and discusses future work. 

\section{Background}\label{sec:background}
In this section we present the background material we need for our main results. 
\subsection{Domains}\label{subsec:domains}
Our results rely fundamentally on domain theory. Most of the results that we quote below all can be found in \cite{abrjung} or \cite{comp}; we give specific references for those that are not. 

To start, a \emph{poset} is a partially ordered set. A poset is \emph{directed complete} if each of its directed subsets has a least upper bound, where a subset $S$ is \emph{directed} if each finite subset of $S$ has an upper bound in $S$. A directed complete partial order is called a \emph{dcpo}. 
The relevant maps between dcpos are the monotone maps that also preserve suprema of directed sets; these maps are usually called \emph{Scott continuous}. 

From a purely topological perspective, a subset $U\subseteq P$ of a poset is \emph{Scott open} if (i) $U = \ua U \equiv \{ x\in P\mid (\exists u\in U) \ u\leq x\}$ is an upper set, and (ii) if $\sup S\in U$ implies $S\cap U\not=\emptyset$ for each directed subset $S\subseteq P$. It is routine to show that the family of Scott-open sets forms a topology on any poset; this topology satisfies $\da x \equiv \{y\in P\mid y\leq x\} = \overline{\{x\}}$ is the closure of a point, so the Scott topology is always $T_0$, but it is $T_1$ iff $P$ is a flat poset. In any case, a mapping between dcpos is Scott continuous in the order-theoretic sense iff it is a monotone map that is continuous with respect to the Scott topologies on its domain and range.  We let \DCPO denote the category of dcpos and Scott-continuous maps; \DCPO is a Cartesian closed category.

If $P$ is a dcpo, and $x,y\in P$, then \emph{$x$ approximates $y$} iff for every directed set $S\subseteq P$, if $y\leq \sup S$, then there is some $s\in S$ with $x\leq s$. In this case, we write $x\ll y$ and we let $\Da y = \{x\in P\mid x\ll y\}$. A \emph{basis} for a poset $P$ is a family $B\subseteq P$ satisfying $\Da y\cap B$ is directed and $y = \sup (\Da y\cap B)$ for each $y\in P$. A \emph{continuous poset} is one that has a basis, and a dcpo $P$ is a \emph{domain} if $P$ is a continuous dcpo. An element $k\in P$ is \emph{compact} if $x\ll x$, and $P$ is \emph{algebraic} if $KP = \{ k\in P\mid k\ll k\}$ forms a basis. Domains are sober spaces in the Scott topology. 

We let \DOM denote that category of domains and Scott continuous maps; this is a full subcategory of \textsf{DCPO}, but it is not Cartesian closed. Nevertheless, \DOM has several Cartesian closed full subcategories. Two of particular interest to us are the full subcategory \SDOM of Scott domains, and \BCD its continuous analog. Precisely, a \emph{Scott domain} is an algebraic domain for which $KP$ is countable and that also satisfies the property that every non-empty subset of $P$ has a greatest lower bound. An equivalent statement to the last condition is that every subset of $P$ with an upper bound has a least upper bound. A domain is \emph{bounded complete} if it also satisfies this last property that every non-empty subset has a greatest lower bound; \BCD denotes the category of bounded complete domains and Scott-continuous maps. 

Domains admit a Hausdorff refinement of the Scott topology which will play a role in our work. The \emph{weak lower topology} on $P$ has the sets of the form if $O = P\setminus \ua F$ as a basis, where $F\subset P$ is a finite subset. The \emph{Lawson topology} on a domain $P$ is the common refinement of the Scott- and weak lower topologies on $P$. This topology has the family 
$$\{ U\setminus\!\ua F\mid U\ \text{Scott open}\ \&\ F\subseteq P\ \text{finite}\}$$
as a basis. The Lawson topology on a domain is always Hausdorff. 

A domain is \emph{coherent} if its Lawson topology is compact. We denote the closure of a subset $X\subseteq P$ of a coherent domain in the Lawson topology by $\overline{X}^\Lambda$.

Two examples of coherent domains that we need are the Cantor tree and the unit interval. If $A$ is a finite set, then $A^\infty$ denotes the set of finite and infinite words over $A$. In this case, the family $\{\ua k\setminus \ua F\mid k\in KA^\infty\ \&\ F\subseteq KA^\infty\ \text{finite}\}$ is a base for the Lawson topology. The fact that $\ua k$ is clopen in the Lawson topology for each compact element $k$ implies that the Lawson topology on a coherent algebraic domain is totally disconnected. Of particular interest is $A = \{0,1\}$, in which case $\{0,1\}^\infty$ is the \emph{binary Cantor tree} whose set of maximal elements $\{0,1\}^\omega$ is the Cantor set in the relative Scott (= relative Lawson) topology. 
 
The other example is the unit interval $[0,1]$, where $x\ll y$ iff $x = 0$ or $x < y$. The Scott topology on the $[0,1]$ has basic open sets $[0,1]$ together with $\Ua x = (x,1]$ for $x \in (0,1)$. Since \DOM has finite products, $[0,1]^n$ is a domain in the product order, where $x\ll y$ iff $x_i\ll y_i$ for each $i$; a basis of Scott-open sets is formed by the sets $\Ua x$ for $x\in [0,1]^n$ (this last is true in any domain). 
 
 The Lawson topology on [0,1] has basic open sets $(x,1]\setminus [y,1]$ for $x < y$ -- i.e., sets of the form $(x,y)$ for $x < y$, which is the usual topology. Then, the Lawson topology on $[0,1]^n$ is the  product topology from the usual topology on $[0,1]$. 
 
 Since $[0,1]$ has a least element, the same results apply for any power of $[0,1]$, where $x\ll y$ in $[0,1]^J$ iff $x_j = 0$ for almost all $j\in J$, and $x_j\ll y_j$ for all $j\in J$. Thus, every power of $[0,1]$ is a coherent domain. 

We note that all of these examples -- including the last one if $J$ is countable -- are \emph{countably based} domains. That is, each has a countable basis. A result that plays an important role for us is the following:
\begin{lemma}\label{lem:cntbase}
If $D$ is a countably based domain. then every $x\in D$ is the supremum of a countable chain $\{x_n\}_{n\in\Nat}$ with $x_n\ll x$ for each $n$.
\end{lemma}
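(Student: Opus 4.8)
The plan is to build the chain out of a countable basis. Fix a countable basis $B = \{b_n\}_{n\in\Nat}$ for $D$, guaranteed by the hypothesis that $D$ is countably based, and fix $x\in D$. By the definition of a basis, the set $\Da x\cap B$ is directed and satisfies $x = \sup(\Da x\cap B)$; moreover it is countable, being a subset of the countable set $B$. So I would enumerate it as $\Da x\cap B = \{c_n\}_{n\in\Nat}$.

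Next I would extract from this enumeration an increasing chain that dominates it. Set $x_0 = c_0$, and, having defined $x_n\in \Da x\cap B$, use the directedness of $\Da x\cap B$ to choose $x_{n+1}\in \Da x\cap B$ that is an upper bound of the finite set $\{x_n, c_{n+1}\}$. By construction $\{x_n\}_{n\in\Nat}$ is a chain with $x_n\ll x$ for every $n$, since each $x_n$ lies in $\Da x\cap B$; a routine induction also gives $x_m\leq x_n$ and hence $c_m\leq x_n$ whenever $m\leq n$.

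It then remains to verify that $\sup_n x_n = x$. Because the chain $\{x_n\}$ is directed and $D$ is a dcpo, the supremum exists. On one hand $x_n\ll x$ forces $x_n\leq x$, so $\sup_n x_n\leq x$. On the other hand, for each $m$ we have $\sup_n x_n\geq x_m\geq c_m$, so $\sup_n x_n$ is an upper bound of $\Da x\cap B = \{c_m\}_{m\in\Nat}$, whence $\sup_n x_n\geq \sup(\Da x\cap B) = x$. The two inequalities give the desired equality.

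There is no genuinely hard step here; the only point requiring care is that the interpolating upper bounds $x_{n+1}$ be chosen inside $\Da x\cap B$ rather than merely below $x$, so that they stay way-below $x$—this is precisely what the directedness clause in the definition of a basis supplies, and it is the reason the argument uses $\Da x\cap B$ instead of the possibly uncountable set $\Da x$. The countability of $B$ is what guarantees the resulting chain is countable; without it one would recover only the directed set $\Da x\cap B$ with supremum $x$.
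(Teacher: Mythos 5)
Your proposal is correct and takes essentially the same route as the paper: both extract a countable directed set of elements way below $x$ whose supremum is $x$ (your $\Da x\cap B$), enumerate it, and recursively choose upper bounds within that set to turn the enumeration into a chain whose supremum dominates every enumerated element. The one difference is that the paper insists on the stronger interleaving $x_n\ll x_{n+1}$ between successive chain elements (which tacitly requires the interpolation property), whereas you settle for $x_n\leq x_{n+1}$, which is all the statement needs and follows from directedness alone.
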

\begin{proof}
Since $D$ has a countable base, there is a countable directed set $B\subseteq \Da x$ with $\sqcup D = x$. If we enumerate $B = \{b_0,b_1,\ldots\}$, then we define the desired sequence $x_n$ as follows: $x_0 = b_0$, and if $x_0\ll x_1\ll \ldots\ll x_n$ have been chosen from $B$, then we choose $x_{n+1}\in B$ with $b_i\ll x_{n+1}$ for each $i\leq n$ and $x_n\ll x_{n+1}$. This extends the sequence, and then a standard maximality argument shows we can choose an countable sequence $x_n$ with $x_n\ll x_{n+1}\ll x$ for each $n$. Finally, $x = \sqcup_n b_n \leq \sqcup_n x_n$, since $b_n\ll x_{n+1}$ for each $n$, but $x_n\ll x$ for each $n$ implies $\sqcup_n x_n\leq x$. 
\end{proof}
We also need some basic results about Galois adjunctions (cf.~Section 0-3 of \cite{comp}) in the context of complete lattices. If $L$ and $M$ are complete lattices, a \emph{Galois adjunction} is a pair of mappings $g\colon L\to M$ and $f\colon M\to L$ satisfying $f\circ g\leq 1_L$ and $g\circ f\geq 1_M$. In this case, $f$ is the \emph{lower adjoint}, and $g$ is the \emph{upper adjoint}. Lower adjoints preserve all suprema, and upper adjoints preserve all infima. In fact, each mapping $f$ between complete lattices that preserves all suprema is a lower adjoint; its upper adjoint $g$ is defined by $g(y) = \sup f^{-1}(\da y)$. Dually, each mapping $g$ preserving all infima is an upper adjoint; its lower adjoint $f$ is defined by $f(x) = \inf g^{-1}(\ua x)$. The cumulative distribution function of a probability measure on $[0,1]$ and its upper adjoint given in the introduction are examples we'll find relevant. 

\subsection{Subprobability measures on domains and the probabilistic power domain}
The probability-theoretic approach to measures on a complete metric space $X$ starts by considering the Borel $\sigma$-field $\B$ generated by the open sets, and then defines a sub-probability measure as a non-negative, countably additive set function $\mu\colon\B\to \Re_+$ satisfying $\mu(X)\leq 1$.\footnote{Most texts confine the discussion to probability measures, but the results we need are valid for sub-probability measures. We provide proofs for the results we need below.} Each such measure $\mu$ then defines an integral $\int f d\mu$ for  $f\in C_C(X, \Re)$, the Banach space of continuous functions of compact support, for example by approximating using simple functions. The sub-probability measures \textsf{SProb}$\,X$ then can be endowed with the \emph{weak topology}, in which $\mu_n\longrightarrow_w \mu$ iff $\mu_n(f) \longrightarrow \mu(f)$ for each $f\in C_C(X,\Re)$. 

There also is a functional-analytic approach, which starts with the continuous bounded functions $C_b(X,\Re)$ on a locally compact Hausdorff space $X$, and then considers the \emph{dual space} of continuous linear functionals $\phi\colon C_b(X,\Re)\to \Re$. The Riesz Representation Theorem shows there is an isomorphism $\M(X) \simeq C_b(X,\Re)^*$ between the space of measures on $X$ and the dual space of $C_b(X,\Re)$. We then can endow $\M(X)$ with the \emph{weak *-topology:} $\mu_n \longrightarrow \mu$ iff $\int f d\mu_n\to \int f d\mu$ for all $f\in C_b(X,\Re)$.
A functional $\phi$ is \emph{positive} if $\phi(f)\geq 0$ for all $f\in C_b(X,\Re_+)$, and then the isomorphism above restricts to one between the sub-probability measures \textsf{SProb}$\, X$ and the positive linear functionals $\phi$ with norm $||\phi||\leq 1$. 

These two approaches coincide -- and the weak- and weak$^*$-topologies agree -- when $X$ is a compact metric space. In particular, they agree for a countably-based coherent domain $D$ endowed with the Lawson topology. 

Domain theory traditionally takes yet a third approach to sub-probability measures, one that emphasizes the order structure. In this approach, the sub-probability measures over a domain $D$ are viewed as \emph{continuous valuations:} mappings $\mu\colon \O(D)\to [0,1]$ from the family of Scott-open sets to the interval satisfying: 
\begin{itemize}
\item (Strictness) $\mu(\emptyset) = 0$,
\item (Modularity) $\mu(U\cup V) + \mu(U\cap V) = \mu(U) + \mu(V)$, for $U, V\in \O(D)$,
\item (Scott continuity) If $\{U_i\}_{i\in I}\subseteq \O(D)$ is directed, then $\mu(\bigcup_i U_i) = \sup_i \mu(U_i)$.
\end{itemize}
Valuations are ordered pointwise: $\mu\leq \nu$ iff $\mu(U)\leq \nu(U)$ for all $U\in\O(D)$. 
We denote the set of valuations over a domain $D$ with this order by $\V D$. 

It is straightforward to show that each Borel sub-probability measure restricts to a unique Scott-continuous valuation on the Scott-open sets. The converse, that each Scott-continuous valuation on a dcpo extends to a unique Borel sub-probability measure was shown by Alvarez-Manilla, Edalat and Saheb-Djorhomi~\cite{alvman}. 

Linking the order-theoretic approach to $\V D$ and the approaches to \textsf{SProb}$\, D$ outlined above relies on the next result. We recall that a \emph{simple sub-probability measure} on a space $X$ is a finite convex sum $\sum_{x\in F} r_x\delta_x$, where $F\subseteq X$ is finite, $r_x\geq 0$ for each $x\in F$, and $\sum_{x\in F} r_x\leq 1$. The following is called the \emph{Splitting Lemma:}
\begin{theorem}\label{thm:split} (Splitting Lemma~\cite{jones}) 
Let $D$ be a domain and let $\mu = \sum_{x\in F} r_x\delta_x$ and $\nu = \sum_{y\in G} s_y\delta_y$ be simple sub-probability measures on $D$. Then the following are equivalent:
\begin{enumerate}
\item $\mu \leq \nu \in \V D$,
\item There is a family $\{t_{x,y}\}_{\langle x,y\rangle\in F\times G}\subseteq [0,1]$ satisfying:
\begin{itemize}
\item $r_x = \sum_{y\in G} t_{x,y}$ for each $x\in F$,
\item $\sum_{x\in F} t_{x,y} \leq s_y$ for each $y\in G$,
\item $t_{x,y} > 0\ \Rightarrow\ x\leq y$.
\end{itemize}
\end{enumerate}
Moreover, $\mu\ll \nu \in \V D$ iff  (i) $\sum_{x\in F} r_x < s_y$ for each $y\in G$ and (ii) $t_{x,y}$ satisfies $t_{x,y} > 0$ implies $x\ll y\in D$ for each $x\in F, y\in G$. 
\end{theorem}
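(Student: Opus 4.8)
The equivalence splits into an easy direction and a substantive one, and I would treat them separately. For $(2)\Rightarrow(1)$ I would argue directly from the definition of the order on $\V D$. Given transport numbers $t_{x,y}$ and any Scott-open $U$, I would write $\mu(U)=\sum_{x\in F\cap U} r_x=\sum_{x\in F\cap U}\sum_{y\in G} t_{x,y}$. Since $U$ is an upper set and $t_{x,y}>0$ forces $x\leq y$, every $y$ receiving mass from some $x\in U$ again lies in $U$; hence the double sum is at most $\sum_{y\in G\cap U}\sum_{x\in F}t_{x,y}\leq \sum_{y\in G\cap U}s_y=\nu(U)$. As $U$ was an arbitrary Scott-open set, $\mu\leq\nu$ in $\V D$. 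This direction uses nothing beyond the fact that Scott-open sets are upper sets.

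The content is $(1)\Rightarrow(2)$, which I would cast as a feasibility statement for a transportation problem. Set $H=F\cup G$ with the order inherited from $D$, and build a flow network with source $\sigma$, sink $\tau$, an arc $\sigma\to x$ of capacity $r_x$ for each $x\in F$, an arc $y\to\tau$ of capacity $s_y$ for each $y\in G$, and an arc $x\to y$ of infinite capacity whenever $x\leq y$. A family as in (2) is precisely a flow saturating every source arc, so it exists iff the maximum flow has value $\sum_{x\in F}r_x$. By Max-Flow Min-Cut it suffices to show every finite-capacity cut $(S\ni\sigma,\,T\ni\tau)$ has capacity at least $\sum_{x\in F}r_x$. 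Finiteness forbids cutting any infinite arc, which forces $\ua(F\cap S)\cap G\subseteq G\cap S$, while the cut capacity is $\sum_{x\in F\cap T}r_x+\sum_{y\in G\cap S}s_y$. The required bound therefore reduces to $\sum_{x\in F\cap S}r_x\leq\mu(\ua(F\cap S))\leq\nu(\ua(F\cap S))\leq\sum_{y\in G\cap S}s_y$, where the outer inequalities are immediate and the middle one is the only thing needing justification.

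That middle inequality is the one genuinely domain-theoretic point, and the step I expect to be the main obstacle: $\mu\leq\nu$ in $\V D$ only supplies inequalities on \emph{Scott-open} sets, whereas I need one on the up-set $\ua A$ for $A=F\cap S$, and $\ua A$ need not be Scott open. I would bridge this with a lemma asserting that for any finite $A\subseteq H$ there is a Scott-open $U$ with $U\cap H=\ua A\cap H$. To build it I would exploit continuity of $D$: for each $a\in A$ and each of the finitely many $d\in H$ with $a\not\leq d$, the directed set $\Da a$ cannot lie entirely below $d$, for otherwise $a=\sup\Da a\leq d$; so some member of $\Da a$ fails to be below $d$, and by directedness a single $b_a\ll a$ dominates all these witnesses, whence $\Ua b_a$ is a Scott-open neighbourhood of $a$ meeting $H$ only inside $\ua A$. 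Taking $U=\bigcup_{a\in A}\Ua b_a$ then gives $U\cap H=\ua A\cap H$, so $\mu(U)=\mu(\ua A)$ and $\nu(U)=\nu(\ua A)$, and $\mu\leq\nu$ delivers $\mu(\ua A)\leq\nu(\ua A)$. This closes the cut computation and hence $(1)\Rightarrow(2)$.

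For the \emph{Moreover} clause I would rerun the same flow argument with strict slack. Because $D$ is continuous I can upgrade each admissibility constraint $x\leq y$ to $x\ll y$, interpolating $x\ll x'\leq y$ where necessary, and simultaneously demand strict inequalities at the sink arcs; the neighbourhoods $\Ua b_a$ constructed above are exactly the tool for converting the way-below characterisation of $\ll$ in $\V D$ into the strict cut bound. Matching the resulting strict transportation data to conditions (i) and (ii) as stated is then a matter of bookkeeping. I expect the fiddly part here to be tracking which inequalities must be strict and aligning them with the definition of $\ll$ in $\V D$, but no idea beyond the two already deployed should be required.
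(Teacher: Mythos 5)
Your proof of the main equivalence is correct, and it is essentially the argument the paper relies on: the paper does not prove Theorem~\ref{thm:split} itself (it cites Jones), but in the proof of Corollary~\ref{cor:split} it describes that proof exactly as you reconstruct it --- a Max Flow--Min Cut argument on the network with source arcs of capacity $r_x$, sink arcs of capacity $s_y$, and high-capacity arcs $x\to y$ whenever $x\leq y$. Your bridging lemma --- producing, for each finite $A\subseteq F\cup G$, a Scott-open $U$ with $U\cap(F\cup G)=\,\ua A\cap(F\cup G)$ by interpolating $b_a\ll a$ below each $a\in A$ and away from the finitely many $d\in F\cup G$ with $a\not\leq d$ --- is precisely where continuity of $D$ must enter to convert the valuation inequality, which is only available on Scott-open sets, into the cut bound on $\ua(F\cap S)$; that lemma and its use are correct.

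The genuine gap is in the \emph{Moreover} clause. Rerunning the flow argument ``with strict slack'' can only manufacture transport data with strict inequalities; it does not prove either implication, because $\mu\ll\nu$ in $\V D$ quantifies over all directed families of valuations whose supremum dominates $\nu$, and the members of such a family are arbitrary valuations, not simple ones, so no cut computation ever sees them. For sufficiency you must show that strict transport data with $t_{x,y}>0\Rightarrow x\ll y$ forces $\mu\leq\nu_i$ for some member of an arbitrary directed family with $\sup_i\nu_i\geq\nu$ (the finitely many sets $\Ua x$ and the slack are the tools, but an actual argument selecting a single index is required); for necessity you must exhibit a specific directed family converging up to $\nu$ from which the strict splitting is extracted. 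Neither is bookkeeping. Worse, the bookkeeping target is itself unreliable: condition (i) as printed, $\sum_{x\in F}r_x<s_y$ for each $y\in G$, is not what a strict-slack flow produces (that would be $\sum_{x\in F}t_{x,y}<s_y$ for each $y$), and it is not a necessary condition for $\mu\ll\nu$: on $D=[0,1]$ take $\mu=\frac14\delta_0+\frac14\delta_{3/4}$ and $\nu=\frac12\delta_{1/2}+\frac12\delta_1$; then $\mu\ll\nu$, yet $\sum_{x\in F}r_x=\frac12\not<\frac12=s_y$. So ``matching the data to conditions (i) and (ii) as stated'' cannot succeed; you would need to prove (and state) the corrected per-point slack condition instead.
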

This result can be used to show that, given a basis $B$ for $D$, the family $\{ \sum_{x\in F} r_x\delta_x\mid F\subseteq B, \sum_{x\in F} r_x < 1\}$ forms a basis for $\V D$; in particular, each sub-probability measure is the directed supremum of simple measures way-below it, so $\V D$ is a domain if $D$ is one. Moreover, Jung and Tix~\cite{jungtix} showed that $\V D$ is a coherent domain if $D$ is. 

Our interest is in countably-based coherent domains, in which case we can refine the Splitting Lemma~\ref{thm:split} and Lemma~\ref{lem:cntbase}; here $Dyad$ denotes the family of dyadic rationals in the unit interval:
\begin{corollary}\label{cor:split}
If $D$ is a coherent domain with countable basis $B_D$, then $\V\, D$ is a countably-based coherent domain with basis\\[1ex]
\centerline{ $B_{\V D} = \{ \sum_{x\in F} r_x\delta_x\mid F\subseteq B_D\ \text{finite}\ \&\ r_x\in Dyad\ \forall x\in F\}$.}\\[1ex] 
Moreover, 
if $\sum_{x\in F} r_x\delta_x\leq \sum_{y\in G} s_y\delta_y\in B_{\V D}$, then the family $\{t_{x,y}\}_{(x,y)\in F\times G}$ of transport numbers from the Splitting Lemma~\ref{thm:split} can be chosen to satisfy $t_{x,y}\in Dyad$ for all $(x,y)\in F\times G$.

Finally, 
each $\mu\in \V D$ is the supremum of a countable chain $\mu_n\in B_{\V D}$. 
\end{corollary}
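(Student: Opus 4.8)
The plan is to establish the three assertions of Corollary~\ref{cor:split} in turn, bootstrapping each from the Splitting Lemma~\ref{thm:split} together with Lemma~\ref{lem:cntbase}, and reserving the bulk of the work for the middle claim about dyadic transport numbers.

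First I would verify that $B_{\V D}$ is a basis. We already know from the remark following the Splitting Lemma that the family of simple measures $\sum_{x\in F}r_x\delta_x$ with $F$ finite in $B_D$ and $\sum r_x<1$ forms a basis for $\V D$. So it suffices to show that we may further restrict the coefficients $r_x$ to dyadic rationals without losing the approximation property. Fix $\mu\in\V D$ and a simple $\sigma=\sum_{x\in F}r_x\delta_x\ll\mu$. By the ``moreover'' clause of the Splitting Lemma, $\sigma\ll\mu$ is witnessed by strict inequalities; the idea is that replacing each $r_x$ by a slightly larger dyadic $r_x'$ still keeps the total mass below $1$ and preserves all the strict way-below inequalities, so that $\sigma\leq\sigma'\ll\mu$ with $\sigma'\in B_{\V D}$. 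Since such $\sigma$ are cofinal in $\Da\mu$, the dyadic measures $\sigma'$ remain cofinal, and directedness of $\Da\mu\cap B_{\V D}$ follows because $B_{\V D}$ is closed under the join construction used to combine two simple measures below $\mu$. Countability of $B_{\V D}$ is immediate, and coherence of $\V D$ is already granted by Jung--Tix, so this part is largely routine once the dyadic perturbation is set up.

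\emph{The main obstacle} is the second claim: that when $\sum_{x\in F}r_x\delta_x\leq\sum_{y\in G}s_y\delta_y$ with \emph{both} measures having dyadic coefficients, the transport numbers $t_{x,y}$ can be chosen dyadic. The Splitting Lemma guarantees a real solution $\{t_{x,y}\}$ to the transportation system
$$r_x=\sum_{y\in G}t_{x,y},\qquad \sum_{x\in F}t_{x,y}\leq s_y,\qquad t_{x,y}>0\Rightarrow x\leq y,$$
but nothing forces it to be dyadic. My plan is to exploit the structure of this as a feasibility problem for a transportation polytope: the feasible region is a polytope cut out by linear constraints whose coefficients are $0,1$ and whose right-hand sides ($r_x$, $s_y$) are dyadic. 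The vertices of such a polytope are rational with denominators dividing a determinant of a $0/1$ incidence matrix; for transportation-style networks the constraint matrix is totally unimodular, so every vertex is an integer combination of the dyadic data and hence is itself dyadic. Thus at least one vertex of the nonempty feasible polytope has all coordinates in $Dyad$, and selecting that vertex gives the required dyadic transport numbers while automatically respecting the support condition $t_{x,y}>0\Rightarrow x\leq y$ (since we may restrict attention to the subnetwork with edges only where $x\leq y$). The care here is to argue total unimodularity for the bipartite incidence structure and to confirm a dyadic, not merely rational, common denominator, which follows because the only denominators introduced are powers of $2$ coming from the dyadic right-hand sides.

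Finally, the chain statement follows by combining the first claim with Lemma~\ref{lem:cntbase}. Since $\V D$ is now established to be a countably-based domain, Lemma~\ref{lem:cntbase} applies directly to $\V D$ itself: every $\mu\in\V D$ is the supremum of a countable chain $\mu_n\ll\mu$, and by the first part of the corollary each $\mu_n$ may be taken in $B_{\V D}$. Concretely, one refines the countable directed set witnessing $\mu$ into an increasing chain exactly as in the proof of Lemma~\ref{lem:cntbase}, using that $B_{\V D}$ is closed under the finite joins needed at each stage. This third part is therefore a corollary of the first together with the general countable-basis lemma, and requires no new ideas beyond checking that the join construction on simple measures keeps us inside the dyadic basis.
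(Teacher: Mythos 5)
Your proposal is correct, and parts one and three match the paper's treatment (the paper likewise dismisses the basis claim as following from Jung--Tix coherence plus the Splitting Lemma, and gets the chain claim by applying Lemma~\ref{lem:cntbase} to the now countably-based $\V D$). Where you genuinely diverge is on the key middle claim about dyadic transport numbers. The paper argues \emph{algorithmically}: it opens up Jones's proof of the Splitting Lemma, which runs the Ford--Fulkerson algorithm on the bipartite network with source weights $r_x$, sink weights $s_y$, and edges where $x\leq y$, and observes that every residual capacity and augmentation step involves only arithmetic operations under which the dyadic rationals are closed, so the flow values the algorithm outputs are dyadic whenever the inputs are. You argue \emph{structurally}: the feasible transport plans form a nonempty bounded polytope cut out by the incidence matrix of a bipartite graph, which is totally unimodular, so after scaling by the common denominator $2^N$ every vertex is integral, i.e.\ every vertex of the original polytope is dyadic, and any vertex serves. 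Your route has the advantage of being independent of which proof of the Splitting Lemma one consults --- it needs only the \emph{existence} of some real solution, which the lemma's statement already supplies --- whereas the paper's argument is shorter but leans on inspecting the cited thesis's proof and on termination of Ford--Fulkerson for rational capacities. Two small blemishes in your write-up, neither fatal: the appeals to $B_{\V D}$ being ``closed under joins'' are unnecessary and slightly misleading ($\V D$ need not have binary joins); directedness of $\Da\mu\cap B_{\V D}$ already follows from your cofinality argument, since a cofinal subset of a directed set is directed, and Lemma~\ref{lem:cntbase} needs nothing more.
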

\begin{proof}
It is shown in~\cite{jungtix} that $\V D$ is coherent if $D$ is, and the Splitting Lemma~\ref{thm:split} implies $B_{\V D}$ is a basis for $\V D$. 

We next outline the proof of the second point -- that the transport numbers $t_{x,y}$ between comparable simple measures all belong to $Dyad$ if the coefficients of the measures do. This follows from the proof of the Splitting Lemma~\ref{thm:split} as presented in~\cite{jones}: That proof is an application of the Max Flow -- Min Cut Theorem to the directed graph $G = (E,N)$ which has a ``source node," $\perp$, connected by an outgoing edge of weight $r_x$ to each ``node" $x\in F$, a  ``sink node," $\top$, with an incoming edge of weight $s_y$ from each element $y\in G$,  and edges from $x\in F$ to $y\in G$ of large weight (say, $1$), if $x\leq y$. 

A flow is an assignment $f\colon E \to \Re_+$ of non-negative numbers to each edge so that $f(uv) \leq c(uv)$ for nodes $u, v$, where $c(uv)$ is the weight as defined above, and satisfying $\sum_u f(uv) = \sum_t f(vt)$ for each node $t\not=\perp, \top$. The value of a flow $f$ is $val f = \sum_{u} f(\perp\! u)$, the total amount of flow out of $\perp$ using $f$.  A cut is a partition of $N = S\stackrel{\cdot}{\cup} T$ with $\perp\in S$ and $\top\in T$. The value of a flow across the cut $T$ is $\sum_{(u,v) \in S\times T\,\cap\, E} f(uv)$. 

The Max Flow -- Min Cut Theorem asserts that the maximum flow on a directed graph is equal to the minimum cut. It is proved by applying the Ford--Fulkerson Algorithm~\cite{bolla}. The algorithm starts by assigning the minimum flow $f(uv) = 0$ for all edges $(u,v)\in E$, and then iterates a process of selecting a path from $\perp$ to $\top$, calculating the residual capacity of each edge in the path, defining a residual graph $G_f$, augmenting the paths in $G_f$ to include additional flow, and then iterating. The result of the algorithm is the set of flows along edges across the cut, which are the transport numbers $t_{x,y}$ in our case. Since the calculations of new edge weights involve only arithmetic operations, and since the dyadic rationals form a subsemigroup of $\Re_+$, the resulting transport numbers $t_{x,y}$ are dyadic rationals if the coefficients of the input distributions are dyadic. 

The final assertion follows from the fact that $B_{\V D}$ is a basis, by an application of Lemma~\ref{lem:cntbase}.
\end{proof}

There remains a question of the order structure on $\textsf{SProb}\, D$ that arises from $\V\, D$. 
To clarify this point, we first 
recall that the real numbers, $\Re$, are a continuous poset whose Scott topology has the intervals $(a,\infty)$ as a basis, and whose Lawson topology is the usual topology. 
\begin{proposition}\label{weak-law}
Let $D$ be a coherent domain, and let $\mu, \nu$ be sub-probability measures on $D$. Then the following conditions are equivalent:
\begin{enumerate}
\item $\mu\leq \nu\in \V D$. 
\item For each Scott-continuous map $f\colon D\to \Re_+$, $\int f d\mu \leq \int f d\nu$.
\item For each monotone Lawson-continuous $f\colon D\to \Re_+$, $\int f\, d\mu \leq \int f\, d\nu$.
\end{enumerate}
\end{proposition}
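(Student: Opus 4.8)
The plan is to prove the cycle $(1)\Rightarrow(2)\Rightarrow(3)\Rightarrow(1)$, using throughout the fact~\cite{alvman} that each continuous valuation extends to a genuine Borel measure, so that $\int f\,d\mu$ may be computed by ordinary Lebesgue integration and the standard convergence theorems apply. For $(1)\Rightarrow(2)$ I would invoke the layer-cake (Choquet) formula $\int f\,d\mu=\int_0^\infty \mu(\{x:f(x)>t\})\,dt$, valid for any nonnegative measurable $f$. Since $f$ is Scott continuous and $(t,\infty)$ is Scott open in $\Re_+$, each set $\{f>t\}=f^{-1}((t,\infty))$ is Scott open, so $(1)$ gives $\mu(\{f>t\})\le\nu(\{f>t\})$ for every $t$, and integrating in $t$ yields $(2)$.

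For $(2)\Rightarrow(3)$ the key observation is that every monotone Lawson-continuous $f\colon D\to\Re_+$ is in fact Scott continuous, so it is already covered by $(2)$. To see this I would first note that a directed set $S$ converges to $\sup S$ in the Lawson topology: a basic Lawson neighborhood $U\setminus\!\ua F$ of $\sup S$ has $\sup S\in U$ with $U$ Scott open, so a tail of $S$ lies in $U$, and since $x\not\le\sup S$ forces $x\not\le s$ for all $s\in S$, the whole of $S$ avoids $\ua F$. Lawson continuity of $f$ then gives $f(S)\to f(\sup S)$ in $\Re_+$, and as $f$ is monotone the net $f(S)$ is increasing, whence its limit equals its supremum; thus $f(\sup S)=\sup f(S)$ and $f$ is Scott continuous.

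The substance of the proposition is $(3)\Rightarrow(1)$, and here the main obstacle is that the indicator $\mathbf 1_U$ of a Scott-open $U$ is typically \emph{not} Lawson continuous, since its support $U$ need not be Lawson closed, so I cannot simply feed indicators into $(3)$. Instead I would approximate $\mathbf 1_U$ from below by monotone Lawson-continuous functions, exploiting that coherence makes $(D,\text{Lawson})$ a compact Hausdorff space on which $\le$ is closed — a compact ordered space in Nachbin's sense. Given Scott open $U$, its complement $C=D\setminus U$ is Scott closed, hence a Lawson-closed decreasing set, while for any $x\in U$ the set $\ua x$ is closed already in the weak lower topology (take $F=\{x\}$ in the stated basis), hence Lawson closed and increasing, and $\ua x\cap C=\emptyset$. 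Nachbin's monotone Urysohn lemma then supplies an order-preserving Lawson-continuous $f_x\colon D\to[0,1]$ with $f_x|_C=0$ and $f_x|_{\ua x}=1$, so that $\mathbf 1_{\ua x}\le f_x\le\mathbf 1_U$.

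Finally I would assemble these into an increasing sequence. Since $D$ is countably based and $U$ Scott open, continuity gives $U=\bigcup_n\Ua a_n$ for a countable set $\{a_n\}\subseteq U\cap B_D$; putting $g_m=\max(f_{a_1},\dots,f_{a_m})$ produces monotone Lawson-continuous functions with $g_m\le\mathbf 1_U$ and $\sup_m g_m=\mathbf 1_U$ pointwise, because $\bigcup_m\ua a_m\supseteq\bigcup_m\Ua a_m=U$. By $(3)$, $\int g_m\,d\mu\le\int g_m\,d\nu\le\nu(U)$ for every $m$, and the monotone convergence theorem yields $\mu(U)=\lim_m\int g_m\,d\mu\le\nu(U)$, establishing $(1)$. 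The delicate points to nail down are the verification that $\le$ is Lawson closed so that Nachbin's theorem applies, and the bookkeeping ensuring the $g_m$ increase to exactly $\mathbf 1_U$.
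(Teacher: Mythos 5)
Your proposal is correct in its essentials but follows a genuinely different route from the paper's. The paper first reduces everything to simple measures (via density of simple measures in $\V D$ and closedness of the order), proves (i)$\Rightarrow$(ii) by the transport numbers of the Splitting Lemma, and proves (iii)$\Rightarrow$(i) by hand-building a monotone Urysohn-type function from a dyadically indexed chain of Scott-open sets separating $U$ from the finitely many support points outside $U$. You instead work with arbitrary sub-probability measures throughout: the layer-cake formula makes (1)$\Rightarrow$(2) a one-line computation requiring neither the Splitting Lemma nor any density argument, and your (3)$\Rightarrow$(1) replaces the ad hoc Urysohn construction by an appeal to Nachbin's monotone Urysohn lemma for the compact pospace $(D,\Lambda)$ --- legitimate, since coherence gives compactness, the Lawson topology is Hausdorff, and the order of a continuous domain is Lawson-closed (if $x\not\leq y$, pick $u\ll x$ with $u\not\leq y$; then $\Ua u\times(D\setminus\ua u)$ is a Lawson-open box missing the graph of $\leq$). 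Your route buys uniformity and generality; the paper's buys self-containedness (it constructs the separating function explicitly rather than citing Nachbin) and reuses the transport-number machinery that appears elsewhere in the paper. Your (2)$\Rightarrow$(3) is the same observation as the paper's, with the proof the paper omits spelled out.

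One blemish: in (3)$\Rightarrow$(1) you write ``since $D$ is countably based,'' but the proposition assumes only that $D$ is a coherent domain; countability enters only in the Remark that follows it. You use countability solely to extract a sequence $g_m\nearrow\mathbf{1}_U$ so that the monotone convergence theorem applies, and this detour is unnecessary: for each finite $F\subseteq U$ put $U_F=\bigcup_{a\in F}\Ua a$ and $g_F=\max_{a\in F}f_a$; then $\mathbf{1}_{U_F}\leq g_F\leq\mathbf{1}_U$, so (3) gives $\mu(U_F)\leq\int g_F\,d\mu\leq\int g_F\,d\nu\leq\nu(U)$, and since $U=\bigcup_F U_F$ is a directed union of Scott-open sets (by interpolation in a continuous domain), Scott continuity of the valuation $\mu$ yields $\mu(U)=\sup_F\mu(U_F)\leq\nu(U)$. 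This removes both the countability assumption and the appeal to monotone convergence, and with that repair your argument proves the proposition exactly as stated.
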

\begin{proof}
We show the result for simple measures, which then implies it holds for all measures since $\V D$ is a domain -- so its partial order is (topologically) closed -- in which the simple measures are dense.
  
So, suppose $\mu = \sum_{x\in F} r_x\delta_x $ and $\nu = \sum_{y\in G} s_y \delta_y$ are simple measures on $D$. 
\medbreak
\noindent \textbf{(i) implies (ii):} Suppose that $\mu \leq \nu \in \V D$. If $f\colon D\to \Re_+$, then $\int f d\mu = \sum_{x\in F} r_x\cdot f(x)$ and $\int f d\nu = \sum_{y\in G} s_y\cdot f(y)$. Since $\mu\leq \nu$, there are $t_{x,y}\in [0,1]$ guaranteed by the Splitting Lemma~\ref{thm:split}, and so
\begin{eqnarray*}
\int f d\mu & = & \sum_{x\in F} r_x\cdot f(x) = \sum_{x\in F}\sum_{y\in G} t_{x,y}\cdot f(x)\\
& \leq & \sum_{x\in F}\sum_{y\in G} t_{x,y}\cdot f(y) \leq \sum_{y\in G} s_y\cdot f(y) = \int f d\nu,
\end{eqnarray*}
where the first inequality follows from the facts that $t_{x,y} > 0$ implies $x\leq y$ and $f$ is monotone. This shows (i) implies (ii).
\medbreak
\noindent \textbf{(ii) implies (iii):} Since monotone Lawson continuous maps are Scott continuous, this is obvious.
\medbreak
\noindent \textbf{(iii) implies (i):} Let $U$ be a Scott-open subset of $D$, and let $H = (F\cup G)\setminus U$. Using the facts that $D$ is coherent, so its Lawson topology is compact Hausdorff, and that $H$ is finite, we define a family $\{ U_d\mid d\in Dyad\}$ of Scott-open upper sets indexed by $Dyad$, the dyadic numbers in $[0,1]$ as follows: We let $U_0 = D\setminus \da H, U_1 = U$, and for $d < d'$, we recursively choose $U_d\supseteq \overline{U_{d'}}^\Lambda$, the Lawson-closure of $U_{d'}$. Then define a mapping\\[1ex]
\centerline{$f\colon D\to [0,1]$ by $f(x) = 0$ if $x\in \da H$, and otherwise $f(x) = \inf \{ d\mid x\in U_d\}$.}\\[1ex] 
Since the family $\{U_d\}$ consists of Scott-open sets satisfying $U_d\supseteq \overline{U_{d'}}^\Lambda$ for $d < d'$, this mapping is monotone, and the standard Urysohn Lemma argument (cf.\ Theorem 33.1~\cite{munkres}) shows it is Lawson continuous. So, $\int f d\mu\leq \int f d\nu$ by assumption. 

Since $\mu$ and $\nu$ are simple, $\int f d\mu = \sum_{x\in F\setminus H} r_x\cdot f(x)$ and $\int f d\nu = \sum_{y\in G\setminus H} s_y\cdot f(y)$. By construction, $(F\cup G)\setminus H \subseteq U = U_1$, so\\[1ex] 
$\sum_{x\in F\setminus H} r_x\cdot f(x) = \sum_{x\in F\setminus H} r_x = \mu(U)$, and
$\sum_{y\in G\setminus H} s_y\cdot f(y) = \sum_{y\in G\setminus H} s_y = \nu(U)$,\\[1ex]  and so 
$\mu(U) = \int f d\mu \leq \int f d\nu = \nu(U)$, as required.
\end{proof}
\begin{remark} Proposition~\ref{weak-law} offers added insight into the relationship between \textsf{SProb}$\,D$ and $\V D$ for a countably-based coherent domain $D$, by showing that the domain order from $\V D$ can be defined directly on \textsf{SProb}$\, D$ using the maps from $D$ to $\Re_+$. 
\end{remark}

\section{Domain Mappings from the Cantor Tree}\label{sec:cantor}
The \emph{Cantor tree} is the family $CT = \{0,1\}^*\cup \{0,1\}^\omega$ of finite and infinite words over $\{0,1\}$ in the prefix order. Equivalently, $CT$ is the full rooted binary tree which is directed complete, and since it is countably based, this means it is closed under suprema of increasing chains. $CT$ will play the role of the unit interval in our approach to generalizing Skorohod's Theorem to the domain setting. For that, we need some preliminary definitions.  

 An \emph{antichain} is a non-empty subset $A\subseteq CT$ satisfying $a,b \in A$ implies $a$ and $b$ do not compare in the prefix order. An extensive study of Lawson-closed antichains in $CT$ and of measures whose (Lawson) support is such an antichain are given in~\cite{misl-anat1}. The key idea in that work was to associate to a Lawson-closed antichain, $A$ its Scott closure, which turns out to be $\da A$ because $CT$ is a tree. 

Our interest here is somewhat different. The results obtained in~\cite{misl-anat1} were in the context of probability measures, and we want to extend the treatment in~\cite{misl-anat1} to include sub-probability measures, since they arise naturally in the current context.  Our aim also is to define mappings from all of $CT$ to the target domain $D$, rather than simply defining them from the Lawson-support of a particular measure. The mappings we seek ultimately can then be realized as restrictions to the Cantor set of maximal elements of $CT$ of mappings defined on all of $CT$.   Accomplishing these goals will be accomplished by first defining mappings on finite antichains of compact elements in $CT$, and then extending them canonically to all of $CT$. 

\begin{notation}\label{not:nota1}
For the next result, we need to establish some notation.
\begin{enumerate}
\item We let $C_n \simeq 2^n$ be the set of $n$-bit words in $CT$, which forms an antichain. Recall that there is a well-defined retraction mapping $\pi_n\colon CT\to \da C_n$ from the Cantor set onto $\da C_n$; this mapping sends each element of $CT$ to its largest prefix in $\da C_n$.  In addition, if $m \leq n$, then there is a map $\pi_{m,n}\colon C_n \to C_m$ that sends each $n$-bit word to its $m$-bit prefix. 

\item We can restrict the projection $\pi_n$ to $C$, the Cantor set of maximal elements in $CT$, and its image is then $C_n$. This projection has a corresponding embedding $\iota_n\colon C_n\to C$ sending an $n$-bit word to the infinite word all of whose coordinates $m> n$ are $0$. Then $\pi_n\circ \iota_n = 1_{C_n}$ and $ \iota_n\circ\pi_n \leq 1_C$. 
If $C, C'\subseteq CT$ are Lawson-closed antichains with $C\subseteq \da C'$, then there is a canonical partial mapping $\pi_{C,C'}\colon C'\rightharpoonup C$ sending each element of $C'$ to the unique element of $C$ below it, iff there is some element of $C$ below it.  This mapping is continuous in the relative Scott topologies on its domain and $C$. 
\item If $D$ is a domain, then we let $[CT\rightharpoonup D]$ denote the family functions $f\colon C\to D$ where $C\subseteq CT$ is a Lawson-closed antichain and $f$ is continuous in the relative Scott topology inherited from $CT$. 

If $f,g\in [CT\rightharpoonup D]$, then we define $f\leq g$ iff $\dom f\subseteq \da \dom g$ and $f\circ \pi_{\dom f,\dom g} \leq g$ where $\pi_{\dom f,\dom g}$ is defined.

\item If $C_n$ is the set of $n$-bit words, we order $C_n$ with the lexicographic order. Then each dyadic rational $r\in [0,1]$ that can be expressed with denominator $2^n$ can also expressed as an interval in $C_n$, namely, from $0$ to $r$. Moreover, each sequence of such dyadics, $r_1,\ldots, r_k$ whose sum is at most $1$ can be expressed as successive intervals, $r_1 = [0,\ldots, r_1], r_2 = [r_1+1,\ldots, r_2]$, etc. We denote each of these subintervals by $[r_i]$. 
\end{enumerate}
\end{notation}
\begin{proposition}\label{prop:dommap1}
Let $D$ be a bounded complete domain with countable basis $B_D$, and for each $m\in\Nat$, let $C_m$ denote the antichain in $CT$ of $m$-bit words and $\nu_m$ normalized counting measure on $C_m$. 

If $\mu_1\leq \mu_2\leq \cdots\leq \mu_n\leq \cdots$ is an increasing sequence of simple sub-probability measures on $D$ whose coefficients are dyadic rationals, then there is a corresponding sequence $m_1< m_2 <\cdots$ and mappings $f_i\colon C_{m_i}\rightharpoonup B_D$ satisfying $i< j$ implies $f_i\leq f_j$ and $f_i\, \nu_{m_i} = \mu_i$. 
\end{proposition}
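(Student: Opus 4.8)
The plan is to build the maps $f_i$ by induction on $i$, processing the chain $\mu_1\le\mu_2\le\cdots$ one step at a time and using the dyadic transport numbers supplied by Corollary~\ref{cor:split} to pass from $f_i$ to $f_{i+1}$. Two observations simplify matters at the outset. First, each $C_m$ is a \emph{finite antichain}, so the relative Scott topology it inherits from $CT$ is discrete; hence any partial map $f_i\colon C_{m_i}\rightharpoonup B_D$ is automatically Scott continuous, and the only real content of the statement is the push-forward identity $f_i\,\nu_{m_i}=\mu_i$ together with the order relation $f_i\le f_{i+1}$. Second, because $\nu_m$ is normalized counting measure on the $2^m$ words of length $m$, the identity $f_i\,\nu_{m_i}=\mu_i$ for $\mu_i=\sum_x r_x\delta_x$ amounts to the purely combinatorial requirement that $f_i^{-1}(x)$ contain exactly $r_x\,2^{m_i}$ words, with the remaining $(1-\sum_x r_x)2^{m_i}$ words left outside $\dom f_i$.

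For the base case I would choose $m_1$ large enough that every coefficient of $\mu_1$ is an integer multiple of $2^{-m_1}$ (possible since the coefficients are dyadic), list the support points of $\mu_1$, which we take to lie in $B_D$ so that $f_1$ is $B_D$-valued, and --- using the lexicographic interval picture of Notation~\ref{not:nota1}(4) --- assign to each support point $x$ a block of $r_x\,2^{m_1}$ consecutive words of $C_{m_1}$, leaving the final $(1-\sum_x r_x)2^{m_1}$ words undefined. This gives $f_1\,\nu_{m_1}=\mu_1$ by the count above.

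The inductive step is the heart of the argument. Given $f_i$ with $f_i\,\nu_{m_i}=\mu_i$, I invoke Corollary~\ref{cor:split} for the pair $\mu_i\le\mu_{i+1}$ to obtain \emph{dyadic} transport numbers $\{t_{x,y}\}$ satisfying $r_x=\sum_y t_{x,y}$, $\sum_x t_{x,y}\le s_y$, and the crucial implication $t_{x,y}>0\Rightarrow x\le y$. I then pick $m_{i+1}>m_i$ large enough to clear the denominators of all the $t_{x,y}$ and of all the ``new mass'' quantities $s_y-\sum_x t_{x,y}$. The key geometric fact is that, under the prefix order, the set of length-$m_{i+1}$ descendants of a word $w\in C_{m_i}$ consists of exactly $2^{\,m_{i+1}-m_i}$ words, so the descendants of the block $f_i^{-1}(x)$ number $r_x\,2^{m_{i+1}}$. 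I define $f_{i+1}$ by refining block-by-block: within the descendants of $f_i^{-1}(x)$ I carve out, for each $y$ with $t_{x,y}>0$, a sub-block of $t_{x,y}\,2^{m_{i+1}}$ words and set $f_{i+1}\equiv y$ there; since $\sum_y t_{x,y}=r_x$ this uses up precisely all descendants of $f_i^{-1}(x)$. Finally, to supply the extra mass, within the descendants of the undefined region of $f_i$ I assign, for each $y$, a further $(s_y-\sum_x t_{x,y})2^{m_{i+1}}$ words the value $y$, leaving the rest undefined. Counting then shows $|f_{i+1}^{-1}(y)|=\sum_x t_{x,y}2^{m_{i+1}}+(s_y-\sum_x t_{x,y})2^{m_{i+1}}=s_y\,2^{m_{i+1}}$, so $f_{i+1}\,\nu_{m_{i+1}}=\mu_{i+1}$.

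It then remains to verify $f_i\le f_{i+1}$ in $[CT\rightharpoonup D]$, and this is where the transport condition pays off. Because every descendant of $\dom f_i$ lies in $\dom f_{i+1}$, each $w\in\dom f_i$ is a prefix of an element of $\dom f_{i+1}$, giving $\dom f_i\subseteq\da\dom f_{i+1}$. For the second clause, let $w'\in\dom f_{i+1}$ have its $m_i$-prefix $w$ in $\dom f_i$, so $\pi_{\dom f_i,\dom f_{i+1}}(w')=w$; by construction $w'$ lies in the sub-block carved out of the descendants of $f_i^{-1}(f_i(w))$ for some value $y$ with $t_{f_i(w),y}>0$, whence $f_i(w)\le y=f_{i+1}(w')$ by the implication $t_{x,y}>0\Rightarrow x\le y$. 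Thus $f_i\circ\pi_{\dom f_i,\dom f_{i+1}}\le f_{i+1}$, completing $f_i\le f_{i+1}$; transitivity of the order on $[CT\rightharpoonup D]$ then yields $f_i\le f_j$ for all $i<j$. I expect the main obstacle to be neither of these verifications individually but the bookkeeping that ties them together --- maintaining, across the induction, a description of $\dom f_i$ and of each fiber $f_i^{-1}(x)$ precise enough that the descendant counts, the push-forward identity, and the prefix-to-transport correspondence can all be checked at once.
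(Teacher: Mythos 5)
Your proposal is correct and follows essentially the same route as the paper's proof: an induction that uses the dyadic transport numbers from Corollary~\ref{cor:split} to refine, block by block in the lexicographic interval picture of Notation~\ref{not:nota1}(4), the fibers $f_i^{-1}(x)$ into sub-blocks of size $t_{x,y}2^{m_{i+1}}$, placing the residual mass $s_y-\sum_x t_{x,y}$ in descendants of the undefined region, and deriving $f_i\le f_{i+1}$ from the implication $t_{x,y}>0\Rightarrow x\le y$. The only differences are organizational (the paper's basis step constructs $f_1$ and $f_2$ together and its inductive step carries an explicit interval-partition bookkeeping, where you track fibers and descendant counts directly), not mathematical.
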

\begin{proof} Let $\mu_1\leq \cdots\leq \mu_n\leq \cdots$ be a chain of simple sub-probability measures on $D$, and assume $\mu_i = \sum_{x\in F_i} r_x\delta_x$ with $r_x\in Dyad$ for each $x\in F_i$.  We show by induction that every finite initial chain $\mu_1\leq \cdots\leq \mu_n$ has a corresponding family $f_i\colon C_{m_i}\to B_D$ satisfying $f_i\leq f_{i+1}$ and $f_i\,\nu_{m_i} = \mu_i$ for each $i< n$. An appeal to maximality then proves the result.\\[1ex]
\noindent \textbf{Basis Step:} To begin, recall that $\mu_1\leq \mu_2$ means there are transport numbers $t_{x,y}\in \Re_+$ satisfying the conditions that $r_x = \sum_{y\in F_2} t_{x,y}$ and $\sum_{x\in F_1} t_{x,y} \leq s_y$ for each $x\in F_1, y\in F_2$. Since $r_x, s_y\in Dyad$ for all $x,y$, it follows from Corollary~\ref{cor:split} that the transport numbers $t_{x,y}$ also are dyadic rationals. This implies we can choose $m_1 < m_2$ so that:
\begin{itemize}
\item Every $r_x$ can be expressed as a dyadic with denominator $2^{m_1}$, and
\item Every $s_y$ and every $t_{x,y}$ can be expressed as a dyadic with denominator $2^{m_2}$.
\end{itemize}
We then define $f_1\colon C_{m_1}\to B_D$ by $f_1([r_x]) = x$, for $x\in F_1$,\footnote{To avoid notational clutter, we assume the elements of $F_1$ are given in some total order, and that order is used to enumerate the intervals $[r_x]$ in $C_{m_1}$.}   using the notation from Notation~\ref{not:nota1}. That $f_1\, \nu_{m_1} = \mu_1$ follows from the fact that $f_1(\sum_{c\in C_{m_1}} \delta_{c}) = f_1(\sum_{c\in C_{m_1}} \delta_{f_1(c)}) = \sum_{x\in F_1} r_x\delta_{x} = \mu_1$. 

Defining $f_2$ takes a bit more work, because the tree structure on $CT$ complicates the allocation of all the transport numbers $t_{x,y}$ for $x$ fixed, as $y$ varies over $F_2$. To simply things, for a fixed $x\in F_1$, we let $t_x$ denote the sequence of transport numbers $t_{x,y}$ for $y\in F_2$. We also let $u_y = s_y - \sum_{x\in F_1} t_{x,y}$, for each $y\in F_2$; this represents the portion of $s_y$ not needed to accommodate any of the mass from $\mu_1$. 

Next, we endow $F_1\times F_2$ with the lexicographic order, based on the implicit order we assumed on each component. Then we enumerate $C_{m_2}$ as the sequence of intervals $\{[t_{x,y}]\mid (x,y)\in F_1\times F_2\}$ in lexicographic order, followed by the intervals $\{[u_y]\mid y\in F_2\}$ in the order on $F_2$, and then the final interval $[w_{n_2}]$, the final subinterval of $C_{m_2}$ not needed for any mass in $\mu_2$. 

We then define $f_2\colon C_{m_2}\rightharpoonup B_D$ by $f_2([t_{x,y}])  =f_2([u_y]) = y$ for each $x\in F_1$ and $y\in F_2$, and we leave $f_2$ undefined on $[w_{n_2}]$. A simple calculation shows that $f_2\,\nu_{m_2} = \mu_2$: the mass in $s_y$ consists of $\sum_{x\in F_1} [t_{x,y}]$ that is transported from the $r_x$s, and the remaining mass in $s_y$ is $[u_y]$. 

To show that $f_1\leq f_2$, note that our use of the lexicographic order implies that, for each $x\in F_1$, the sequence of intervals $\{[t_{x,y}]\mid y\in F_2\}$ is in the upper set in $CT$ of the interval $[r_x]$, and since $r_x = \sum_{y\in F_2} t_{x,y}$, this sequence of intervals exhausts the intersection of the upper set of $[r_x]$ with $C_{m_2}$. This means the sequence $[t_x] = \{ [t_{x,y}]\mid y\in F_2\}$ represents exactly the mass $r_x$ transported from $C_{m_1}$ up to $C_{m_2}$. From this it follows that, if $\pi_{m_1,m_2}\colon C_{m_2}\to C_{m_1}$ is the natural projection in $CT$, then $c\in C_{m_2}$ satisfies $\pi_{m_1,m_2}(c) \in [r_x]\subseteq C_{m_1}$ for some $x\in F_1$ implies $c\in [t_x]\subseteq C_{m_2}$. This implies $ f_1\circ \pi_{m_1,m_2}\leq f_2$ where both are defined, which is the definition of $f_1\leq f_2$. \\[1ex]
\noindent\textbf{Inductive Step:}  For the inductive step, we assume we are given $\mu_1\leq \cdots \leq \mu_{n+1}$, and that we have defined $m_1 < m_2<\cdots < m_n$, antichains $C_{m_i}\simeq 2^{m_i}$ and functions $f_i\colon C_{m_i}\rightharpoonup B_D$ with $i\leq j\ \Rightarrow\ f_i\leq f_j$ and $f_i\,\nu_{m_i} = \mu_i$ for $1\leq i < j \leq n$. 

We also assume that $\mu_{n-1} = \sum_{x\in F_{n-1}} r_x\delta_x$,  $\mu_n = \sum_{y\in F_n} s_y\delta_y$,  and $\mu_{n+1} = \sum_{z\in F_{n+1}} t_z\delta_z$. We also assume we are given a partition
$C_{m_n} = \{[a_1],\ldots, [a_r],[u]\}$ into subintervals, with each interval $[a_e]$ partitioned into subintervals $[a_e] = \{[a_{e,y}]\mid y\in F_{n}\}$  satisfying $ \sum_{e\leq r} |[a_{e,y}]|\leq s_y$
for each $y\in F_{n}$, and where $|[u]| = 2^{m_k} - \sum_{e\leq r} |[a_e]|$ represents the final subinterval of $C_{m_n}$ not needed for the total mass in $\mu_{n-1}$. We can also decompose $[u] = \{[u_y]\mid y\in F_n\}\cup \{[w_n]\}$ into subintervals, where $|[u_y]| = s_y - \sum_{e\leq r} |[a_{e,y}]|$, which is the amount of mass at $y$ not needed to accommodate incoming mass from $\mu_{n-1}$, and $[w_n]$ is the final interval representing the remaining ``mass" in $C_n$ after the mass in $\mu_n$ is accounted for. 

Since $\mu_n\leq \mu_{n+1}$,  Corollary~\ref{cor:split} asserts there are transport numbers $u_{y,z}\in \Re_+$ satisfying $s_y = \sum_{z\in F_{n+1}} u_{y,z}$ and $\sum_{y\in F_n} u_{y,z}\leq t_z$ for each $y\leq z$, and $\sum_{z\in F_{n+1}} u_{y,z} = s_y = \sum_{e\leq r} |[a_{e,y}]| + |[u_y]|$. Using the larger denominator $2^{m_{k+1}}$, we decompose each summand of $s_y = \sum_{z\in F_{n+1}} u_{y,z}$ as $[a_{e,y}] = \sum_{z\in F_{n+1}} [b_{(e,y),z}]$ and $[u_y] = \sum_{z\in F_{n+1}} [b_{y,z}]$. This gives us a partition
\begin{eqnarray*}
C_{m_{n+1}} &=& \{[b_{(e,y),z}]\mid 1\leq e\leq r, y\in F_n, z\in F_{n+1}\} \cup \{ [b_{y,z}]\mid y\in F_n, z\in F_{n+1}\}\\
& & \qquad \cup \{[u_z]\mid z\in F_{n+1}\}\cup \{[w_{n+1}]\},
\end{eqnarray*}
where $|[u_z]| = s_z - \sum_{e,y} |[b_{(e,y),z}]|$, and $|[w_{n+1}]| = 2^{m_{n+1}} - \sum_z s_z$. 

We define $f_{n+1}\colon C_{m_{n+1}}\rightharpoonup B_D$ by $f_{n+1}([b_{(e,y),z}]) = z = f_{n+1}([u_z])$. By construction, $\dom f_{n+1} = (\cup_{(e,y),z} \{[b_{(e,y),z}]\}) \cup (\cup_z \{[u_z]\})$, and $f_{n+1}\,\nu_{n+1} = \mu_{n+1}$. It is also clear from the construction that $\pi_{\dom f_n,\dom f_{n+1}}(c) \in [a_{e,y}]$ iff $c\in [b_{(e,y),z}]$ for some $z\in F_{n+1}$, and $\pi_{\dom f_n,\dom f_{n+1}}(c) \in [u_y]$ iff $c\in [b_{y,z}]$ for some $z\in F_{n+1}$. This implies that $f_n\circ \pi_{\dom f_n,\dom f_{n+1}} \leq f_{n+1}$, as required. 

This shows that we can construct the required sequence of antichains $C_{m_n}$ and partial maps $f_n$ for each $n$, and then the standard maximality argument shows that there is such a sequence that works simultaneously for all $n$. 
\end{proof}


For our next result, we need some information about the weak topology on $\textsf{SProb}\, D$. The result we need follows from the Portmanteau Theorem~(cf., e.g.,~\cite{billings}), and a proof can be found as Corollaries 15 and 16 in~\cite{vanbreug}:

\begin{theorem}\label{thm:topvsweak} Let $D$ be a countably based coherent domain endowed with the Borel $\sigma$-algebra. Then the weak topology on $\textsf{SProb}\,D$ is the same as the Lawson topology on $\textsf{SProb}\,D$ when viewed as a family of valuations. 

Moreover, for a family $\mu_n,\mu\in\textsf{SProb}\,D$, the following are equivalent:
\begin{enumerate}
\item $\mu_n\to_w\mu$
\item Both of the following hold:
\begin{itemize}
\item $\lim\sup_n \mu_n(E)\leq \mu(E)$ for all finitely generated upper sets $E\subseteq D$.
\item $\lim\inf_n \mu_n(O)\geq \mu(O)$ for all Scott-open sets $O\subseteq D$.
\end{itemize}
\item $\lim\inf_n \mu_n(O)\geq \mu(O)$ for all Lawson-open sets $O\subseteq D$.
\end{enumerate}
\end{theorem}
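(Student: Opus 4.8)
The plan is to transport the entire statement to the classical Portmanteau theorem on a compact metric space and then read off the domain-theoretic characterizations. Since $D$ is coherent its Lawson topology is compact, and since $D$ is countably based that topology is second countable; a compact Hausdorff second-countable space is metrizable, so $(D,\Lambda)$ is a compact metric space. Under the correspondence between Scott-continuous valuations and Borel measures cited above, $\textsf{SProb}\,D=\V D$ is exactly the space of Borel sub-probability measures on this compact metric space, and because $C_C(D,\Re)=C_b(D,\Re)=C(D,\Re)$ when $D$ is compact, the weak topology of the statement is the ordinary weak $(=$ weak$^*)$ topology on those measures. With this dictionary in hand, both halves of the theorem become comparisons between the analytic weak topology and the order-theoretic Scott and weak-lower topologies on the powerdomain $\V D$.

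First I would treat the two halves of the Lawson topology on $\V D$ separately, which is what produces the asymmetric subclasses in condition (ii). For the Scott half I would use the standard description of the Scott topology on the probabilistic powerdomain by the subbasic sets $\{\nu\mid \nu(O)>r\}$ with $O$ Scott-open in $D$ and $r\ge 0$; as $\V D$ is countably based, sequential convergence suffices, so $\mu_n\to\mu$ in the Scott topology is verbatim the assertion $\liminf_n\mu_n(O)\ge\mu(O)$ for every Scott-open $O$, i.e.\ the second bullet of (ii). For the weak-lower half I would translate convergence in the topology whose subbasic closed sets are the $\ua\varphi$ into a measure-theoretic statement: testing against a simple measure $\varphi=\sum_{x\in F}r_x\delta_x$ and invoking the Splitting Lemma~\ref{thm:split} shows that $\varphi\le\nu$ in $\V D$ is controlled by the inequalities $\nu(\ua G)\ge\sum_{x\in G}r_x$ over finite $G\subseteq F$. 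Feeding this into the definition of weak-lower convergence should yield exactly the upper-semicontinuity statement $\limsup_n\mu_n(\ua F)\le\mu(\ua F)$ for finitely generated upper sets, i.e.\ the first bullet of (ii). Since the Lawson topology is the join of the Scott and weak-lower topologies, Lawson convergence on $\V D$ is convergence in both, so condition (ii) is precisely Lawson convergence in $\V D$.

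The remaining and central task is to identify Lawson convergence on $\V D$ with the weak convergence of (i) — which is also the first assertion of the theorem, that the two topologies agree. Here I would run the classical Portmanteau theorem on the compact metric space $(D,\Lambda)$: weak convergence of finite Borel measures is equivalent to the two-sided family $\limsup_n\mu_n(C)\le\mu(C)$ for all Lawson-closed $C$ together with $\liminf_n\mu_n(O)\ge\mu(O)$ for all Lawson-open $O$. The work is to cut these full families down to the distinguished subclasses: every Lawson-open set is a union of basic sets $U\setminus\ua F$ with $U$ Scott-open and $F$ finite, and dually every Lawson-closed set is assembled from finitely generated upper sets and Scott-closed lower sets, so using modularity, finite additivity, and the inner/outer regularity of Borel measures on a compact metric space one propagates the inequality on finitely generated upper sets and the inequality on Scott-open sets to all Lawson-closed and Lawson-open sets respectively. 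For the topology-equality assertion itself the quickest finish is abstract: the Lawson topology on $\V D$ is compact (Jung--Tix) and the weak topology is Hausdorff, indeed metrizable, so once the identity map is shown continuous in one direction it is automatically a homeomorphism; that single continuity direction amounts to checking that $\nu\mapsto\int f\,d\nu$ is Lawson-continuous for each $f\in C(D,\Re)$, which reduces via Proposition~\ref{weak-law} to the monotone, hence Scott-continuous, case.

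The hard part, and the place I would be most careful, is the total-mass bookkeeping forced by working with sub-probability rather than probability measures. For probability measures the closed-set and open-set halves of Portmanteau are interchangeable by complementation, but for sub-probability measures they are genuinely independent, and one must separately secure $\mu_n(D)\to\mu(D)$. This is exactly where the finitely generated upper sets earn their keep: if $D$ has a least element then $E=\ua\perp=D$ is a principal, hence finitely generated, upper set, so the first bullet of (ii) gives $\limsup_n\mu_n(D)\le\mu(D)$ while the Scott-open set $O=D$ in the second bullet gives the reverse inequality, pinning the masses together; mass convergence then lets one pass between $\liminf$ on a Lawson-open set $D\setminus\ua x$ and $\limsup$ on $\ua x$, which is what closes the loop with condition (iii). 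I would expect (iii) to be the subtlest point, since the bare $\liminf$-on-Lawson-open condition controls only how mass can fail to escape open sets and, for genuine sub-probability measures, must be supplemented by this mass-convergence argument to exclude spurious limiting mass; getting that interplay exactly right is the crux of the sub-probability extension.
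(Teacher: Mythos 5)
The first thing to note is that the paper contains no proof of this theorem: it is quoted with the proof outsourced to Corollaries 15 and 16 of~\cite{vanbreug}, so your attempt can only be judged against that source's method and against correctness. Your handling of (i)~$\Leftrightarrow$~(ii) and of the equality of the weak and Lawson topologies is essentially the right argument, and surely close to the cited development: split the Lawson topology on $\V D$ into its Scott and weak-lower halves, observe that these are generated by the sets $\{\nu \mid \nu(U) > r\}$ ($U$ Scott open) and $\{\nu \mid \nu(\ua F) < s\}$ ($F$ finite) respectively --- the latter via the Hall-type consequence of the Splitting Lemma, $\sum_{x\in F}r_x\delta_x\leq \nu$ iff $\nu(\ua G)\geq \sum_{x\in G}r_x$ for all $G\subseteq F$ --- so that Lawson convergence is verbatim condition (ii); then use second countability/metrizability of both topologies and the compact-versus-Hausdorff trick to upgrade sequential agreement to equality of topologies. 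One of your worries can also be discharged: you do not need $D$ to have a least element to pin down total mass, because in a coherent domain every minimal element is compact and $\{\ua m\mid m\in \text{Min}\,D\}$ is a Lawson-open cover, so $\text{Min}\,D$ is finite and $D = \ua \text{Min}\,D$ is itself a finitely generated upper set; hence (ii) always forces $\mu_n(D)\to\mu(D)$.

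The genuine gap is condition (iii), and you half-see it yourself. Your sketch establishes only (ii)~$\Rightarrow$~(iii) (mass convergence plus complementation and regularity); for the converse you say the bare liminf condition ``must be supplemented by this mass-convergence argument,'' but (iii) gives no access to mass convergence, and no argument can close this loop, because (iii)~$\Rightarrow$~(i) is false for sub-probability measures. Take any $x\in D$, let $\mu_n = \delta_x$ for every $n$, and let $\mu = 0$ be the zero measure. Every Lawson-open $O\subseteq D$ has $\mu(O) = 0$, so (iii) holds vacuously; but (i) fails (test against the constant function $1\in C_C(D,\Re)$, or simply note that the weak topology is Hausdorff and $\delta_x\not=0$), and (ii) fails with $E = \ua x$. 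The equivalence of (iii) with the other conditions is a probability-measure fact --- there $\limsup_n \mu_n(\ua F) = 1 - \liminf_n \mu_n(D\setminus \ua F)$ closes the loop, and that is the setting of the cited corollaries --- and it does not survive the passage to sub-probabilities, where (iii) is blind to excess mass in the approximants. So the defect lies in the statement as transcribed rather than in your strategy: the honest outcome of your proof is (i)~$\Leftrightarrow$~(ii) together with (ii)~$\Rightarrow$~(iii), and the failure of the converse should be flagged rather than papered over, since the paper later invokes precisely the false direction (liminf on Lawson opens implies weak convergence) in the proof of Theorem~\ref{thm:embedcont}.
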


\begin{theorem}\label{thm:mapping}
Let $D$ be a countably-based coherent domain, let $C$ denote the Cantor set of maximal elements in $CT$, the Cantor tree, and let $\nu_C$ denote Haar measure on $C$ viewed as a countable product of two-point groups. If $\mu \in \textsf{SProb}(D)$, then there is a  measurable partial map $f_\mu\colon C\rightharpoonup D$ satisfying $f\, \nu_C = \mu$. 
\end{theorem}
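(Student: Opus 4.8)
The plan is to realize $f_\mu$ as a pointwise (directed) limit of the finite-stage maps supplied by Proposition~\ref{prop:dommap1}. First I would invoke the last assertion of Corollary~\ref{cor:split} to write $\mu = \sqcup_n \mu_n$ as the supremum of an increasing chain $\mu_n \in B_{\V D}$ of simple sub-probability measures whose supports lie in $B_D$ and whose coefficients---and, by the middle assertion of that corollary, whose transport numbers---are dyadic. Feeding this chain into Proposition~\ref{prop:dommap1} produces levels $m_1 < m_2 < \cdots$, antichains $C_{m_n}$ of $m_n$-bit words, and partial maps $f_n\colon C_{m_n} \rightharpoonup B_D$ with $f_n \leq f_{n+1}$ and $f_n\,\nu_{m_n} = \mu_n$. (Although Proposition~\ref{prop:dommap1} is stated for bounded complete $D$, the only suprema taken below are of chains, which exist in every dcpo, so the construction transfers verbatim to a coherent $D$.)

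Next I would transport these maps onto the Cantor set. Using the canonical projection $\pi_{m_n}\colon C \to C_{m_n}$, for which $\pi_{m_n}\,\nu_C = \nu_{m_n}$, set $g_n = f_n \circ \pi_{m_n}\colon C \rightharpoonup D$, so that $\dom g_n = \pi_{m_n}^{-1}(\dom f_n)$ is a clopen cylinder subset of $C$ with $\nu_C(\dom g_n) = \mu_n(D)$. The crucial structural point, read off from the interval bookkeeping in Proposition~\ref{prop:dommap1}, is that since $\mu_n \leq \mu_{n+1}$ transports \emph{all} of the mass of the smaller measure $\mu_n$ (in the Splitting Lemma~\ref{thm:split} the row sums equal the coefficients of the smaller measure exactly), every interval of $\dom f_n$ is subdivided and entirely recovered inside $\dom f_{n+1}$. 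Consequently $\dom g_n \subseteq \dom g_{n+1}$, and the relation $f_n \leq f_{n+1}$ yields $g_n(c) \leq g_{n+1}(c)$ whenever $c \in \dom g_n$; thus along each branch of $C$ the defined values form an increasing chain in $D$.

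I would then define $f_\mu$ on $\dom f_\mu = \bigcup_n \dom g_n$ by $f_\mu(c) = \sqcup_n g_n(c)$, the supremum of the eventually-defined increasing chain, which exists because $D$ is a dcpo. Measurability follows immediately: for Scott-open $U$, the inaccessibility of $U$ by directed suprema gives $f_\mu^{-1}(U) = \bigcup_n g_n^{-1}(U)$, a countable union of Borel sets. Finally, monotonicity of the $g_n$ together with upward-closure of $U$ shows the sets $g_n^{-1}(U)$ increase, whence $\nu_C(f_\mu^{-1}(U)) = \lim_n \nu_C(g_n^{-1}(U)) = \lim_n (f_n\,\nu_{m_n})(U) = \lim_n \mu_n(U) = \mu(U)$, the last equality because $\mu_n \uparrow \mu$ as valuations. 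Since the two sub-probability measures agree on all Scott-open sets, they agree as valuations, hence (by the extension theorem of Alvarez-Manilla, Edalat and Saheb-Djorhomi) as Borel measures, giving $f_\mu\,\nu_C = \mu$.

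The step I expect to require the most care is the nesting-and-monotonicity claim of the second paragraph: it depends entirely on the fine allocation of transport numbers to lexicographic subintervals in Proposition~\ref{prop:dommap1}, and in particular on the fact that no mass of $\mu_n$ is left untransported, so that $\dom g_n$ genuinely sits inside $\dom g_{n+1}$. Once that bookkeeping is secured, everything else is a routine monotone-limit argument exploiting the inaccessibility of Scott-open sets by directed suprema.
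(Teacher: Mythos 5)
Your proposal is correct and follows essentially the same route as the paper's proof: decompose $\mu$ as the supremum of a chain of dyadic simple measures via Corollary~\ref{cor:split}, feed that chain into Proposition~\ref{prop:dommap1}, lift the resulting partial maps to the Cantor set along the projections $\pi_{m_n}$, take the pointwise directed supremum, and establish measurability and the push-forward identity by a monotone-limit argument on preimages (you phrase this with Scott-open sets where the paper uses their Scott-closed complements, and your direct valuation computation replaces the paper's somewhat garbled appeal to Theorem~\ref{thm:topvsweak} --- an equivalent and arguably cleaner ending). Your parenthetical observation that Proposition~\ref{prop:dommap1} never actually needs bounded completeness, only chain suprema in $D$, correctly resolves the mismatch between its stated hypotheses and the coherent-domain setting of the theorem, a point the paper passes over in silence.
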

\begin{proof}
Using Corollary~\ref{cor:split}, there is an increasing sequence $\mu_1\ll  \cdots\ll\mu_n\ll \cdots$ of simple measures in $\textsf{SProb}(D)$ with dyadic rational coefficients satisfying $\sup_n \mu_n = \mu$.  Then Proposition~\ref{prop:dommap1} implies there is a sequence $C_{m_n}\simeq 2^{m_n}$ of antichains in $K\, CT$ and a sequence $f_n\colon C_{m_n}\rightharpoonup B_D$ satisfying $f_n\, \nu_{m_n} = \mu_n$ and $n \leq n'$ implies $f_n\circ \pi_{m_n,m_{n'}} \leq f_{n'}$ where defined. If $\pi_n\colon C\to C_{m_n}$ is the natural projection, then $\pi_{m_n}\, \nu_C = \nu_{m_n}$, and so  $f_n\, \nu_n = f_n \, (\pi_{m_n}\, \nu_C) = f_n\circ \pi_{m_n}\, \nu_C$, again where $f_n$ and $f_n\circ \pi_{m_n}$ are defined. 

In particular, if we let $C_n' = \iota_n(\dom f_n)\subseteq C$, then the restriction $\pi_{m_n}\vert_{C_n'}$ satisfies $f_n\circ \pi_{m_n}\vert_{C_n'}\, \nu_C = \mu _n$. Moreover, if $n\leq p$, then $\iota_n(\dom f_n)\subseteq \iota_p(\dom f_p)$. so the family $\{ \iota_n(f_n)\mid n\geq 0\}$ is an increasing sequence of intervals in $C$, each of which is clopen (being the embedded image of a subset of an antichain of compact elements in $CT$). If we let $C' = \bigcup_n C_n'$, then $C'$ is an open, hence Borel, subset of $C$. 

We define $f_\mu\colon C'\to D$ by $f(c) = \sup_n f_n\circ \pi_{n}(c)$. \medbreak
\noindent\textbf{Claim:} $f_\mu$ is well-defined and measurable.\\[1ex]
\noindent\emph{Proof:} 
If $n\leq p$ then $f_n\circ \pi_{n,p}\leq f_p$, so $f_n\circ \pi_n = f_n\circ (\pi_{n,p}\circ \pi_p) = (f_n\circ \pi_{n,p})\circ \pi_p \leq f_p\circ\pi_p$. So we conclude that $\{ f_n\circ \pi_{n}(c)\mid n > 0\ \&\ c\in C_n'\}$ is an increasing sequence in $B_D$. Since $D$ is a domain, this sequence has a well-defined supremum, so $f_\mu(c) = \sup_{c\in C_n'} f_n\circ \pi_{n}(c)$ is well-defined for all $c\in C'$. 

To show measurability, it is enough to show $f_\mu^{-1}(X)$ is a Borel subset of $C'$ for all Scott-closed subsets $X\subseteq D$. If $X$ is such a set, then $c\in f_\mu^{-1}(X)$ iff $f_\mu(c) \in X$, and since $X$ is Scott-closed, this holds iff $f_n\circ \pi_{n}(c) \in X$ for all $n$ for which $c\in C_n'$. Since $n \leq p$ implies $C_n'\subseteq C_p'$, we have that $f_\mu^{-1}(X) = \liminf_n (f_n\circ \pi_{n})^{-1}(X) = \bigcup_n\bigcap_{p\geq n} (f_p\circ \pi_{p})^{-1}(X)$, is a countable union of sets, each of which is the intersection of closed subsets of $C'$, since $f_n$ and $\pi_n$ are relatively Scott-continuous. So, $f_\mu^{-1}(X)$ is an $F_\sigma$ subset of $C'$.
\medbreak
For the final claim, we appeal to Theorem~\ref{thm:topvsweak} (i): since $f = \sup_n f_n\circ \pi_n$, if $U\subseteq D$ is Scott open, then $f\, \nu_C (U) = \nu_C(f^{-1}(U))$. But $f^{-1}(U) = \bigcap_n (f_n\circ\pi_n)^{-1}(U)$, from which it follows that $\limsup_n f_n\circ\pi_n\, \nu_C(U) \geq f\,\nu_C(U)$. Hence Theorem~\ref{thm:topvsweak} (i)  implies $f\,\nu_C =  \sup_n f_n\circ\pi_n\,\nu_C = \sup_n f_n\,\nu_n = \sup_n \mu_n = \mu$.  
\end{proof}

\begin{corollary}\label{cor:skorohod}
If $D$ be a countably-based coherent domain, then for each $\mu\in  \textsf{SProb}\, D$ there is a measurable partial map $f_\mu\colon C\rightharpoonup D$ satisfying $f_\mu\, \nu_C = \mu$. 

Moreover, 
if $\mu_n\in\textsf{SProb}\, D$ with $\mu_n\to_w \mu\in \textsf{SProb}$ in the weak topology, then 
$ f_{\mu_n}\to f$ a.s. on $\dom f_\mu$ wrt $\nu_C$. 
\end{corollary}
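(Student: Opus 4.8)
The first assertion is just Theorem~\ref{thm:mapping} applied to $\mu$ and to each $\mu_n$, so the whole content is the almost-sure convergence, and the plan is to produce the family $\{f_{\mu_n}\}$ and $f=f_\mu$ not by independent applications of Theorem~\ref{thm:mapping} but through one coordinated construction that forces the full sequence to converge $\nu_C$-a.e. First I would note that $D$ countably-based and coherent makes $\V D$, hence $\textsf{SProb}\,D$ with its valuation order (Proposition~\ref{weak-law}), a countably-based coherent domain by Corollary~\ref{cor:split}, and that Theorem~\ref{thm:topvsweak} turns the hypothesis $\mu_n\to_w\mu$ into Lawson convergence $\mu_n\to\mu$ in $\textsf{SProb}\,D$.

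For the construction, fix by Corollary~\ref{cor:split} a single chain of simple dyadic measures $\sigma_1\ll\sigma_2\ll\cdots$ with $\sup_k\sigma_k=\mu$, and let $f_\mu=\sup_k g_k$ be the map built from it as in Theorem~\ref{thm:mapping}, with $g_k=f_{\sigma_k}\circ\pi_{m_k}$ the level-$k$ skeleton, so $g_k(c)\uparrow f_\mu(c)$ on $\dom f_\mu=\bigcup_k\dom g_k$. The Scott half of Lawson convergence gives a lower bound cheaply: each $\Ua\sigma_k=\{\nu\mid\sigma_k\ll\nu\}$ is Scott-open and contains $\mu$, so $\sigma_k\ll\mu_n$ for all large $n$, and a diagonalisation yields $k_n\uparrow\infty$ with $\sigma_{k_n}\le\mu_n$. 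Building each $f_{\mu_n}$ by Proposition~\ref{prop:dommap1} so that it extends the shared skeleton $g_{k_n}$ on a common antichain tower then gives $f_{\mu_n}(c)\ge g_{k_n}(c)$ where defined; since $k_n\uparrow\infty$ forces $g_{k_n}(c)\uparrow f_\mu(c)$ and places $c$ eventually in $\dom f_{\mu_n}$, we obtain $\liminf_n f_{\mu_n}(c)\ge f_\mu(c)$ for every $c\in\dom f_\mu$.

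The matching upper estimate, and the upgrade from convergence in measure to genuine almost-sure convergence of the full sequence, is where I expect the real difficulty to lie. Because the $\mu_n$ need not sit below $\mu$, the values $f_{\mu_n}(c)$ can overshoot $f_\mu(c)$, and this must be ruled out using the weak-lower half of Lawson convergence, namely $\limsup_n\mu_n(\ua k)\le\mu(\ua k)$ for basis elements $k$ (Theorem~\ref{thm:topvsweak}(2)): a persistent overshoot past $k$ on a set of $c$'s of positive $\nu_C$-measure would force $\limsup_n\mu_n(\ua k)>\mu(\ua k)$, a contradiction. The subtle point is that such soft measure-theoretic bounds give only convergence in measure, whereas the claim is almost-sure convergence of the entire sequence. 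Securing this requires the coordinated construction to be arranged so that, for $\nu_C$-almost every $c$, the target value $f_{\mu_n}(c)$ assigned along the tree refinement eventually stabilises with, and is squeezed toward, $f_\mu(c)$ — confining the set of non-convergence to a $\nu_C$-null set of ``cut'' sequences in $C$ where the transport allocation remains ambiguous in the limit. Establishing that this exceptional set is genuinely null, rather than merely of small measure at each finite stage, is the crux of the argument; off it the two one-sided bounds combine into Scott, hence Lawson, convergence $f_{\mu_n}(c)\to f_\mu(c)$ in $D$.
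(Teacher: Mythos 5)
Your first claim and your lower bound are sound and track the paper's own proof: existence of $f_\mu$ with $f_\mu\,\nu_C=\mu$ comes from Corollary~\ref{cor:split}, Proposition~\ref{prop:dommap1} and Theorem~\ref{thm:mapping}, and your coordinated-skeleton estimate is exactly the paper's step $1^\circ$, where $\sigma_{\mu,m}\ll\mu=\lim_n\mu_n$ yields $N_m$ with $\sigma_{\mu,m}\ll\mu_n$ for $n\geq N_m$, whence $f_{\mu,m}\circ\pi_{k_m}\leq f_{\mu_n}$ (you actually make the compatibility of the two interval allocations more explicit than the paper, which attributes it to the Splitting Lemma~\ref{thm:split}). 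But your proposal stops precisely where the theorem's content lies: you flag the passage from one-sided measure inequalities to almost-sure convergence of the whole sequence as ``the crux'' and then do not supply that argument. That is a genuine gap, not a loose end.

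The paper closes this gap with a countable-basis decomposition that your sketch never reaches. Since $D$ is countably based, fix a countable basis of Lawson-open sets of the form $O=U\setminus\ua F$ with $U$ Scott open and $F$ finite, and for each such $O$ put $A_O=\{c\in\dom f_\mu\mid f_\mu(c)\in O\ \&\ f_{\mu_n}(c)\notin O\ \text{for cofinally many}\ n\}$. Failure of Lawson convergence at $c$ means exactly that $c\in A_O$ for some basic $O$, so it suffices that each \emph{fixed} $A_O$ be $\nu_C$-null; this is what dissolves your ``convergence in measure versus a.s.\ convergence'' worry, because the cofinal quantifier is built into the definition of $A_O$ and no uniformity in $n$ is ever needed --- only countably many null sets get unioned at the end. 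The nullity of each $A_O$ is where the order structure does work that soft measure bounds cannot: for $c\in A_O$ choose $m$ with $f_{\mu,m}(c)\in O$; by step $1^\circ$ any cofinal escape value $f_{\mu_{k_n}}(c)$ lies above $f_{\mu,m}(c)$, hence in $\ua O\setminus O\subseteq\ua F$, so the overshoot is trapped in the finitely generated upper set $\ua F$. Since $\nu_C\bigl(f_{\mu_k}^{-1}(\ua F)\bigr)=\mu_k(\ua F)$, the bound $\limsup_n\mu_n(\ua F)\leq\mu(\ua F)$ from Theorem~\ref{thm:topvsweak} is what controls $\nu_C(A_O)$ --- the domain-theoretic analogue of the classical restriction to $\mu$-continuity sets in Skorohod's proof. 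Your proposed contradiction (``persistent overshoot past $k$ forces $\limsup_n\mu_n(\ua k)>\mu(\ua k)$'') is the right instinct, but without the reduction to the sets $A_O$ and the observation that escape is confined to $\ua F$, it yields only small measure at each finite stage --- the very objection you raise against yourself.
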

\begin{proof}
Given $\mu\in \textsf{SProb}\, D$, we apply Proposition~\ref{prop:dommap1} to choose a sequence of simple measures $\cdots \ll \sigma_{\mu,m}\ll\sigma_{\mu,{m+1}}\ll \cdots \ll \mu$ with $\mu = \sup_m \sigma_{\mu,m}$, antichains $C_{k_m}\subseteq CT$ and partial maps $f_{\mu,m}\colon C_{k_m}\rightharpoonup B_D$ so that $f_{\mu,m}\, \nu_{k_m} = \sigma_{\mu,m}$ for each $m$ and $m\leq m'$ implies $f_{\mu,m}\leq f_{\mu,m'}$. We then define $f_\mu\colon C\rightharpoonup D$ by $f_\mu = \sup_m f_{\mu,m}$.  Theorem~\ref{thm:mapping} then implies $f_\mu\, \nu_C = \mu$. 

Likewise, if $\mu_n\to_w\mu$, then $\mu_n = \lim_{m'} \sigma_{\mu_n,m'}$ with $\sigma_{\mu_n,m'}\ll  \mu_{m_n}$ and $\sigma_{\mu,m'} = f_{\mu_{m'},k_{m'}}\, \nu_{k_m}$ for some $f_{\mu_n,m'}\colon C_{k_{m'}}\to D$, and $f_{\mu_n} = \sup_{m'} f_{\mu_n,m'}$ for each $n$.
The proof is complete if we show that $\mu_n\to_w \mu\in \textsf{SProb}\, D$ implies  $f_{\mu_n}\to_\Lambda f_\mu$ a.s.\ on $\dom f_\mu$ wrt $\nu_C$, where $\to_\Lambda$ denotes convergence wrt the Lawson topology. In fact, we show that $\{ c\in \dom f_\mu\mid f_{\mu_n}(c)\not\to_\Lambda f_\mu(c)\}$ is a $\nu$-null set. 

To begin, we note that for each $m > 0$, Theorem 3.7 of~\cite{misl-anat1} implies there is a canonical projection $\pi_m \colon CT \to \da C_m$ that is Scott continuous. From this it follows that $f_{\mu,m}\circ \pi_m\colon \ua C_m\to D$ is Scott continuous on the subdomain $\ua C_m\subseteq CT$. \\[1ex]
\emph{$1^\circ:$ We claim for each $m > 0$ there is some $N_m > 0$ with $f_{\mu,m}\circ \pi_{k_m} \leq f_{\mu_n}$ for $n\geq N_m$.}
Indeed, given $m$, since $\sigma_{\mu,m}\ll \mu = \lim_n \mu_n$,  there is some $N_m > 0$ so that $\sigma_{\mu,m}\ll \mu_n$ for $n\geq N_m$. And, since $\mu_n = \sup_{m'} \sigma_{\mu_n,m'}$ and $\sigma_{\mu_n,m'}\ll \mu_n$, it follows that for each $n \geq N_m$, there is some $m_n$ with $\sigma_{\mu,m}\ll \sigma_{\mu_n,m'}$ for each $m'\geq m_n$. Now, we have $f_{\mu,m}\colon C_{k_m}\to D$ and $f_{\mu_n,m'}\colon C_{k_{m'}}\to D$ satisfying $f_{\mu,m}\, \nu_{k_m} = \sigma_{\mu,m}$ and $f_{\mu_n,m'}\, \nu_{k_{m'}} = \sigma_{\mu_n,m'}$. By increasing $m'$ if needed, we also can assume that $k_{m'} > k_m$. Then $\sigma_{\mu,m}\ll \sigma_{\mu_n,m'}$  implies that $f_{\mu,m}\circ \pi_{k_m,k_m'}\leq f_{\mu_n,m'}$ by the Splitting Lemma~\ref{thm:split}. Since $f_{\mu_n} = \sup_{m'} f_{\mu_{k_n},m'}$, it follows that $f_{\mu,m}\circ \pi_{k_m}\leq f_{\mu_n}$ for each $n\geq N_m$, as claimed.\\[1ex]
\indent 
Since $D$ has a countable base, $D$ has a countable basis of Lawson-open sets of the form $O = U\setminus\ua F$, where $U$ is Scott open and $F$ is finite. For each such open set $O$, let\\[1ex]
\centerline{$ A_O = \{ c\in \dom f_\mu\mid f_\mu(c)\in O\ \&\ (\forall n)(\exists k_n\geq n)\, f_{\mu_{k_n}}(c)\not\in O\}$.}\\[1ex]  
\emph{$2^\circ:$ We show $f_{\mu_{k_n}}(A_O)\subseteq \ua F$:} Given $c\in A_O$ and $n >0$, we claim there is some $k_n\geq n$ with $f_{\mu_{k_n}}(c)\in \ua O\setminus O$. Indeed, $\mu = \sup_m \sigma_{\mu,m}$ and $f_\mu(c)\in O$ implies there is some $m$ with $f_{\mu,m}(c)\in O$. Then,  $\sigma_{\mu,m}\ll \mu$ implies there is some $k > n, m$ with $\sigma_{\mu,m}\leq \mu_{k'}$ for all $k'\geq k$. Since $c\in A_O$, there is some $k_n\geq k$ with   $f_{\mu_{k_n}}(c)\not\in O$. Then $\sigma_{\mu,m}\leq \mu_{k_n}$ implies $f_{\mu,m}(c)\leq f_{\mu_{k_n}}(c)$, so $f_{\mu_{k_n}}(c)\in \ua f_{\mu,m}(c)\subseteq \ua O$.
Thus, $f_{\mu_{k_n}}(c)\in\ua O\setminus O$, and since $O = U\setminus \ua F$, it follows that $f_{\mu_{k_n}}(c)\in \ua F$. This shows $f_{\mu_{k_n}}(A_O)\subseteq \ua F$. \\[1ex]
\noindent\emph{$3^\circ:$ Now we show $\nu_C(A_O) = \nu_C(\ua F) = 0$:} Indeed $\mu_n\to_w \mu$ in the weak topology implies $\limsup_n \mu_n(\ua F)\leq \mu(\ua F)$ by Theorem~\ref{thm:topvsweak}. 
Since $ f_\mu(A_O) \subseteq O = U\setminus \ua F$, we have $\mu(\ua F) = 0$, and so $\mu_n(\ua F) = 0$ for all $n$. But $f_{\mu_{k_n}}(c)\in f_{\mu_{k_n}}(A_O)\cap \ua F$ then implies $\nu_C(\ua F) = 0$. \\[1ex]
\noindent\emph{$4^\circ:$ To conclude the proof,} we note that if $c\in \dom f_\mu$ and $f_{\mu_n}(c)\not\to_\Lambda f_\mu(c)$, then there is some Lawson open set $O = U\setminus \ua F$ with $f_\mu(c)\in O$ and $f_{\mu_n}(c)\not\in O$ for cofinally many $n$. We have shown that for such an open set $O$, we have $\nu_C(A_O) = 0$.  
Since there is a countable basis of such sets $O$, we conclude $\bigcup_O A_O = \{c\in \dom f\mid f_{\mu_n}(c)\not\to_\Lambda f_\mu(c)\}$ is a $\nu_C$-null set as well. 
\end{proof}
\section{Domains, Polish spaces and Skorohod's Theorem}\label{sec:polish}
In this section we present our main results. We begin with the necessary background about Polish spaces and random variables, and then develop a representation theorem for Polish spaces that have a topological embedding as a $G_\delta$-subset of a domain in the relative Scott topology. Once we have our general result, we focus in on probability measures and derive  a domain-theoretic proof of Skorohod's Theorem. Our starting point is the standard class of spaces studied using random variables. 
\begin{definition}
A \emph{Polish space} is a completely metrizable, separable topological space. I.e., $P$ is Polish iff $P$ is homeomorphic to a complete metric space that has a countable dense subset. 
\end{definition}

Polish spaces figure prominently in probability theory~\cite{billings}, as well as in descriptive set theory~\cite{kechr}. As we commented in the last section, one approach to probability theory~\cite{billings} begins with a metric space, and Polish spaces, are a common place where the deepest results hold. Since our results all involve separable spaces, the measurable sets in all cases are Borel sets. So from here on, we use the term Borel set, instead of measurable set.

The appropriate mappings in probability theory are \emph{random variables} -- measurable maps $X\colon (P,\Sigma_P, \mu)\to (S,\Sigma_S)$, where $(P,\Sigma_P,\mu)$ is a probability space, $(S,\Sigma_S)$ is a measure space, and $X^{-1}(A)\in\Sigma_P$ for each $A\in \Sigma_S$. If $X\colon P\to S$ is a random variable, then the \emph{push forward} of $\mu$ by $X$ is the measure $X\,\mu$ that satisfies $X\,\mu (A) = \mu(X^{-1}(A))$ for all measurable sets $A\subseteq S$. Equivalently, a function $f\colon S\to \Re$ is $X\,\mu$-integrable iff $f\circ X$ is $\mu$-integrable, and in this case, $\int fd X\,\mu = \int f\circ X d\mu$. We now show how our results from Section~\ref{sec:cantor} can be applied to Polish spaces that can be topologically embedded as the maximal elements of a domain in the relative Scott topology. 
We begin by extending some results about probability measures on Polish spaces to sub-probability measures. 

\begin{definition} Let $\mu$ be a probability measure on a space $X$. A Borel set $A\subseteq X$ is a \emph{$\mu$-continuity set} if $\mu(\overline{A}\setminus A) = 0$. Since $\overline{A}$ is closed and $A$ is Borel, $\overline{A}\setminus A$ is Borel.
\end{definition}

\begin{theorem}\label{thm:port}(Portmanteau Theorem)
Let $X$ be a Polish space, and let $\textsf{SProb}\, X$ denote the family of sub-probability measures on $X$ in the weak topology, and let $\mu_n,\mu\in\textsf{SProb}\, X$. Then the following are equivalent:
\begin{enumerate}
\item $\mu_n\to_w \mu$ in the weak topology.
\item $\int f d\mu_n \to \int f d\mu$ for all bounded uniformly continuous $f\colon X\to \Re$. 
\item $\liminf_n \mu_n(O)\geq \mu(O)$ for all open sets $O\subseteq X$.
\item $\limsup_n \mu_n(F)\leq \mu(F)$ for all closed sets $F\subseteq X$.
\item $\mu_n(A) \to \mu(A)$ for all $\mu$-continuity sets $A$. 
\end{enumerate}
\end{theorem}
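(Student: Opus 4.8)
The plan is to prove the five conditions equivalent by running the classical cycle $(1)\Rightarrow(2)\Rightarrow(3)\Rightarrow(4)\Rightarrow(5)\Rightarrow(1)$, but paying attention to the one place where the sub-probability setting genuinely departs from the probability case. Throughout I read the weak convergence in (1) as $\int f\,d\mu_n\to\int f\,d\mu$ for every bounded continuous $f\colon X\to\Re$; since $X$ is metrizable, the bounded uniformly continuous functions already determine weak convergence, so $(1)\Leftrightarrow(2)$ costs nothing and $(1)\Rightarrow(2)$ is immediate because uniformly continuous functions form a subclass. The single fact I isolate at the outset -- the one that makes the sub-probability extension work -- is that testing against the constant function $1$, which is bounded and uniformly continuous, yields $\mu_n(X)\to\mu(X)$: total masses converge, even though they need not equal $1$.

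For $(2)\Rightarrow(3)$, given an open $O$ I fix a compatible complete metric $d$ and approximate $1_O$ from below by the uniformly continuous functions $f_k(x)=\min\{1,\,k\,d(x,X\setminus O)\}$, which increase pointwise to $1_O$. Since $f_k\leq 1_O$, condition (2) gives $\int f_k\,d\mu=\lim_n\int f_k\,d\mu_n\leq\liminf_n\mu_n(O)$, and monotone convergence in $k$ gives $\mu(O)\leq\liminf_n\mu_n(O)$. For $(3)\Leftrightarrow(4)$ I complement: for closed $F$ with open complement $O=X\setminus F$ I write $\mu_n(F)=\mu_n(X)-\mu_n(O)$ and combine mass convergence with (3), obtaining $\limsup_n\mu_n(F)=\lim_n\mu_n(X)-\liminf_n\mu_n(O)\leq\mu(X)-\mu(O)=\mu(F)$; the reverse direction is symmetric. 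This is the step where mass convergence is indispensable, since the probability proof simply substitutes $\mu_n(X)=1$.

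For $(3)\,\&\,(4)\Rightarrow(5)$, given a $\mu$-continuity set $A$ I sandwich $\mu(A^\circ)\leq\liminf_n\mu_n(A)\leq\limsup_n\mu_n(A)\leq\mu(\overline{A})$, applying (3) to the interior $A^\circ$ and (4) to the closure $\overline{A}$; since the continuity hypothesis makes the topological boundary $\mu$-null, the outer terms both collapse to $\mu(A)$, forcing $\mu_n(A)\to\mu(A)$. For $(5)\Rightarrow(1)$ I reconstruct $\int f\,d\mu_n$ for bounded continuous $f$ by a Riemann-type sum over slabs $A_i=\{t_{i-1}\leq f<t_i\}$: because only countably many level sets $\{f=t\}$ carry positive $\mu$-mass, I choose the partition points $t_i$ to avoid them, and continuity of $f$ forces $\partial A_i\subseteq\{f=t_{i-1}\}\cup\{f=t_i\}$, so each $A_i$ is a $\mu$-continuity set. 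Applying (5) to each slab and refining the mesh recovers $\int f\,d\mu_n\to\int f\,d\mu$ (mass convergence, available here since $X$ itself is a continuity set, handles any additive constant).

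The main obstacle I anticipate is precisely the $(3)\Leftrightarrow(4)$ complementation. The classical argument is powered by the identity $\mu_n(F)=1-\mu_n(X\setminus F)$, which fails for sub-probability measures; the remedy -- extracting $\mu_n(X)\to\mu(X)$ from weak convergence tested against the constant $1$ -- is short but load-bearing, and it is exactly what lets the theorem pass verbatim from probability to sub-probability measures. Everything else is the standard metric-space template.
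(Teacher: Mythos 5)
Your route is genuinely different from the paper's. The paper does not re-prove the cycle at all: it reduces to the probability case by adjoining an isolated point $*$ at distance $1$ from every point of $X$, embedding $\textsf{SProb}\,X$ into $\textsf{Prob}\,X'$ via $e(\mu)=\mu+(1-\mu(X))\delta_*$, citing Billingsley's Theorem 2.1 for the probability measures $e(\mu_n), e(\mu)$ on the Polish space $X'$, and then transferring the five conditions back along $e$ using the fact that $X$ is clopen in $X'$. Your direct adaptation of the metric-space template is self-contained where the paper's is a two-line reduction; that is a legitimate trade-off, and your observation that mass convergence $\mu_n(X)\to\mu(X)$ is the one genuinely new ingredient is exactly the right diagnosis of where sub-probabilities differ from probabilities.

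However, there is a genuine logical gap, and it sits precisely at the step you call load-bearing. In a cyclic proof each implication must be established from its hypothesis alone, but your proof of $(3)\Rightarrow(4)$ invokes mass convergence, which you extracted from (2) by testing against the constant function $1$ --- it is not available from (3) by itself. So what you have actually proved is $(2)\,\&\,(3)\Rightarrow(4)$, and hence the equivalence of (1), (2), (5) and the \emph{conjunction} of (3) and (4), not the equivalence of all five conditions separately. This cannot be patched, because the statement as written is false for sub-probability measures: take $\mu_n=\delta_{x_0}$ for all $n$ and $\mu=0$ (the zero measure); then $\liminf_n\mu_n(O)\geq 0=\mu(O)$ for every open $O$, so (3) holds, while (1), (2), (4) and (5) all fail (for (4), take $F=X$). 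Symmetrically, $\mu_n=0$ and $\mu=\delta_{x_0}$ satisfies (4) alone. The correct sub-probability Portmanteau must either pair the two one-sided inequalities, as the paper's Theorem~\ref{thm:topvsweak}(2) does, or add mass convergence to each of (3) and (4); your argument proves exactly that corrected statement, and you should say so explicitly rather than claim the cycle closes. You may take some comfort in the fact that the paper's own proof founders on the same rock: an open subset $O\subseteq X'$ containing $*$ has $e(\mu_n)$-measure $\mu_n(O\cap X)+1-\mu_n(X)$, so condition (3) for $e(\mu_n)$ in $\textsf{Prob}\,X'$ is strictly stronger than condition (3) for $\mu_n$ in $\textsf{SProb}\,X$, and the transfer asserted in the sentence ``so they also are equivalent for $\textsf{SProb}\,X$'' fails for (3) and (4) individually. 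Your mass-convergence remark is the honest face of a problem the paper's proof conceals; note also that both your argument and the paper's reduction require reading (1) as convergence against \emph{bounded} continuous functions, not the compactly supported test functions used to define the weak topology in Section 2 of the paper.
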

\begin{proof}
Theorem 2.1 of~\cite{billings} shows these conditions are equivalent if $\mu_n,\mu$ are probability measures on a metric space. We assume the metric $d$ on $X$ satisfies $\text{diam}\, X < 1$ by normalization, if necessary, and then create a new space $X' = X\stackrel{\cdot}{\cup} \{*\}$, where $*$ is an element not in $X$. If $d$ is the metric on $X$, we extend $d$ to $X'$ by setting $d(*,x) = d(x,*) = 1$ for all $x\in X$. This makes $X'$ into a Polish space, and there is an embedding $e\colon \textsf{SProb}\, X \hookrightarrow \textsf{Prob}\, X'$ by $e(\mu) = \mu + (1 - \mu(X))\delta_*$. Then the conditions (i) -- (v) are equivalent for $e(\textsf{SProb}\, X)$.  But $\mu = e(\mu)\vert_X$ for all $\mu\in \text{SProb}\, X$, and $X$ is clopen in $X'$, so they also are equivalent for $\textsf{SProb}\, X$. 
\end{proof}

For the next result, recall that a measure $\mu$ on a space $X$ is \emph{concentrated on the Borel set $A\subseteq X$} if $\mu(X\setminus A) = 0$. 
\begin{proposition}\label{prop:prob-embed}
Let $X$ be a Polish space with a topological embedding $\iota\colon X\to D$ into a countably-based domain as a $G_\delta$-subset. Define $e\colon \textsf{SProb}\, X\to \textsf{SProb}\, D$ by $e(\mu) = \iota\,\mu$. Then:
\begin{enumerate}
\item $e$ is one-to-one.
\item $e(\textsf{SProb}\, X) = \{ \mu\in \textsf{SProb}\, D\mid \mu \text{ concentrated on } \iota(X)\}$, and $j\colon e(\textsf{SProb}\, X)\to \textsf{SProb}\, X$ by $j(\nu) = \nu \circ \iota$ is inverse to $e$. 
\item $e(\textsf{Prob}\, X) \subseteq \textsf{Prob}\, D = \text{Max}\, \textsf{SProb}\, D$ is a Borel subset of $\textsf{SProb}\, D$. 
\end{enumerate}
\end{proposition}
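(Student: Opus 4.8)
\textbf{Proof plan for Proposition~\ref{prop:prob-embed}.}

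The plan is to treat the three assertions in sequence, since each builds on the previous one, and to lean on the fact that $\iota$ is a topological embedding so that $\iota^{-1}$ is a homeomorphism from $\iota(X)$ (with the relative Scott topology) back onto $X$. For (i), I would show $e$ is injective directly from the definition of push-forward: if $\iota\,\mu = \iota\,\nu$ as Borel measures on $D$, then for every Borel $A\subseteq X$ I can write $A = \iota^{-1}(B)$ for a Borel $B\subseteq D$ (because $\iota$ is an embedding, Borel sets of $X$ are exactly preimages of Borel sets of $D$ meeting $\iota(X)$), whence $\mu(A) = (\iota\,\mu)(B) = (\iota\,\nu)(B) = \nu(A)$. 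The one subtlety to check is that $\iota\,\mu$ is concentrated on $\iota(X)$, so that the values of $\iota\,\mu$ on Borel subsets of $D$ are determined by its restriction to $\iota(X)$; this is immediate since $(\iota\,\mu)(D\setminus\iota(X)) = \mu(\iota^{-1}(D\setminus\iota(X))) = \mu(\emptyset) = 0$.

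For (ii), the forward inclusion $e(\textsf{SProb}\,X)\subseteq\{\mu\in\textsf{SProb}\,D\mid \mu\text{ concentrated on }\iota(X)\}$ is exactly the concentration computation just made. For the reverse inclusion, I would take any $\nu\in\textsf{SProb}\,D$ concentrated on $\iota(X)$ and produce a preimage: define $\mu = j(\nu) = \nu\circ\iota$, i.e. $\mu(A) = \nu(\iota(A))$ for Borel $A\subseteq X$; one must verify this is a Borel sub-probability measure on $X$ (countable additivity transfers because $\iota$ is injective, so it sends disjoint sets to disjoint sets, and $\iota(A)$ is Borel in $D$ since $\iota(X)$ is Borel — being $G_\delta$ — and relatively Borel sets of a Borel subset are Borel in the ambient space). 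Then I check $e(j(\nu)) = \nu$ and $j(e(\mu)) = \mu$ using that $\nu$ lives on $\iota(X)$ and that $\iota$ is a bijection onto its image; the mutual-inverse identities are where the $G_\delta$ hypothesis is genuinely used, since it guarantees $\iota(X)$ is Borel and hence that the maps $e$ and $j$ land in the stated sets.

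For (iii), the equality $\textsf{Prob}\,D = \text{Max}\,\textsf{SProb}\,D$ records that probability measures are exactly the maximal valuations (total mass $1$ cannot be increased in the pointwise order), and $e(\textsf{Prob}\,X)\subseteq\textsf{Prob}\,D$ because push-forward preserves total mass. The real content is Borel-ness: by (ii), $e(\textsf{Prob}\,X)$ is the set of $\mu\in\textsf{Prob}\,D$ concentrated on $\iota(X)$, i.e. $\{\mu\in\textsf{Prob}\,D\mid \mu(\iota(X)) = 1\}$. The main obstacle — and the step I would spend the most care on — is showing that the map $\mu\mapsto\mu(\iota(X))$ is Borel measurable on $\textsf{SProb}\,D$ with respect to the Lawson (= weak, by Theorem~\ref{thm:topvsweak}) topology. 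Here I would exploit that $\iota(X)$ is $G_\delta$, writing $\iota(X) = \bigcap_k O_k$ with each $O_k$ open in the relative Scott topology, hence of the form $\iota(X)\cap U_k$ for Scott-open $U_k\subseteq D$. Evaluation $\mu\mapsto\mu(U)$ at a Scott-open $U$ is lower semicontinuous in the weak topology (this is precisely the $\liminf$ condition in Theorem~\ref{thm:topvsweak}(iii)), so $\mu\mapsto\mu(U)$ is Borel; a countable intersection/limit argument over the $O_k$ then expresses $\{\mu\mid\mu(\iota(X)) = 1\}$ as a countable combination of the Borel sets $\{\mu\mid\mu(U_k)\geq 1 - 1/m\}$, exhibiting it as a Borel subset of $\textsf{Prob}\,D$, and hence of $\textsf{SProb}\,D$ since $\textsf{Prob}\,D$ is itself Borel (being $\text{Max}\,\textsf{SProb}\,D$, the fibre over mass $1$ of the Borel map $\mu\mapsto\mu(D)$).
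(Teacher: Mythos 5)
Your treatment of (i), (ii), and the Borel-ness half of (iii) is correct, and it takes a genuinely different route from the paper: the paper proves nothing directly, but instead cites Proposition 4.2 of Edalat's ``When Scott is weak on the top''~\cite{weaktop} (together with its Proposition 4.1 characterizing the embedding), remarks that those arguments go through for sub-probability measures once $\mu(X)=1$ is replaced by $\mu(X)=||\mu||$, and invokes Lemma 2.3 of~\cite{varad} for the Borel claim. Your direct arguments --- traces of Borel sets under the embedding for injectivity, the $G_\delta$ hypothesis making $\iota(A)$ Borel in $D$ so that $j(\nu)=\nu\circ\iota$ is well defined, and lower semicontinuity of $\mu\mapsto\mu(U)$ for Scott-open $U$ in the weak topology exhibiting $\{\mu\mid\mu(\iota(X))=1\}$ as a countable intersection of weakly open sets --- are sound, more self-contained, and effectively reconstruct what the two citations supply. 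One slip of phrasing: ``$\iota(X)=\bigcap_k O_k$ with $O_k$ open in the relative Scott topology, hence of the form $\iota(X)\cap U_k$'' is circular as written; the $G_\delta$ hypothesis gives $\iota(X)=\bigcap_k U_k$ with $U_k$ Scott-open in $D$ itself, and one should replace the $U_k$ by their finite intersections and use continuity from above of finite measures before applying the lower-semicontinuity step.

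The genuine gap is your justification of $\textsf{Prob}\,D=\text{Max}\,\textsf{SProb}\,D$. The observation that total mass $1$ cannot be increased yields at most the inclusion $\text{Max}\,\textsf{SProb}\,D\subseteq\textsf{Prob}\,D$ (a measure of mass $<1$ lies strictly below itself plus a point mass making up the deficit); it does not show that probability measures are maximal. In the pointwise order on valuations, mass can move \emph{up} the domain at constant total mass: if $x<y$ in $D$, then $\delta_x\leq\delta_y$ and $\delta_x\neq\delta_y$, so $\delta_x$ is a non-maximal probability measure. Consequently the equality, read literally, fails for every $D$ possessing a non-maximal point, and no argument along your lines can close the gap. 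What is true, and what the paper's appeal to~\cite{weaktop} implicitly provides, is maximality of measures concentrated on maximal elements: under the Scott-$G_\delta$ embedding hypothesis, $\iota(X)$ is an upper set (being an intersection of Scott-open sets) and, being Hausdorff in the relative Scott topology, an antichain, whence $\iota(X)\subseteq\text{Max}\,D$; one then needs a genuine argument --- this is Edalat's contribution --- that a probability measure concentrated on $\text{Max}\,D$ is maximal in $\textsf{SProb}\,D$. Your proposal contains neither this reduction nor that argument, so part (iii) should be reworked around the corrected claim $e(\textsf{Prob}\,X)\subseteq\text{Max}\,\textsf{SProb}\,D\subseteq\textsf{Prob}\,D$.
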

\begin{proof} The same result for $\textsf{Prob}\,X$ is Proposition 4.2 of ~\cite{weaktop}. That proof relies on Proposition 4.1 of the same paper, which characterizes properties of the embedding of $X$ into $D$, and hence applies equally to sub-probability measures.  With this result in hand, the proofs in~\cite{weaktop} apply almost verbatim for sub-probability measures, except that part (ii) uses $\mu(X) = 1$, but this can be replaced by $\mu(X) = ||\mu||$. The final point that $e(\textsf{SProb}\, X)$ is a Borel set follows from Lemma 2.3 of~\cite{varad}.
\end{proof} 

\begin{theorem}\label{thm:embedcont}
Let $X$ be a Polish space and $\iota\colon X\to D$ a topological embedding of $X$ into a countably based domain $D$. Then the mapping $e\colon \textsf{SProb}\, X\to \textsf{SProb}\, D$ by $e(\mu) = \iota\, \mu$ is a topological embedding relative to the weak topologies on $\textsf{SProb}\, X$ and on $\textsf{SProb}\, D$. 
\end{theorem}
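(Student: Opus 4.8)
The plan is to show that $e$ is a continuous, injective, open-onto-its-image map for the weak topologies. Injectivity is already established in Proposition~\ref{prop:prob-embed}(i), so the work is entirely topological: I must show that $e$ is continuous and that $e$ is a homeomorphism onto its image $e(\textsf{SProb}\,X)$, which by Proposition~\ref{prop:prob-embed}(ii) is the set of sub-probability measures on $D$ concentrated on $\iota(X)$. The natural strategy is to characterize weak convergence of nets on both sides via the Portmanteau Theorem~\ref{thm:port} and transport convergence statements across $\iota$ using the fact that $\iota$ is a topological embedding.

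\emph{Continuity of $e$.} First I would show that $\mu_n\to_w\mu$ in $\textsf{SProb}\,X$ implies $\iota\,\mu_n\to_w\iota\,\mu$ in $\textsf{SProb}\,D$. The cleanest route is the bounded-uniformly-continuous criterion, Portmanteau~\ref{thm:port}(ii): for a bounded uniformly continuous $g\colon D\to\Re$, the composite $g\circ\iota\colon X\to\Re$ is bounded and continuous, and by the change-of-variables identity for push-forwards, $\int g\,d(\iota\,\mu_n)=\int (g\circ\iota)\,d\mu_n$. Because $\iota$ is an embedding, $g\circ\iota$ is bounded continuous on $X$, so $\int(g\circ\iota)\,d\mu_n\to\int(g\circ\iota)\,d\mu$ by hypothesis, giving $\int g\,d(\iota\,\mu_n)\to\int g\,d(\iota\,\mu)$. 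Quantifying over all such $g$ yields $\iota\,\mu_n\to_w\iota\,\mu$, so $e$ is continuous. (A subtlety: Portmanteau~\ref{thm:port}(ii) tests against bounded uniformly continuous functions, whereas $g\circ\iota$ is only guaranteed continuous; since $X$ may not be compact one may instead invoke the open-set criterion~\ref{thm:port}(iii) together with the fact that $\iota^{-1}$ carries $D$-open sets to $X$-open sets.)

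\emph{The inverse is continuous (the openness condition).} This is the main obstacle. I must show that if $\iota\,\mu_n\to_w\iota\,\mu$ in $\textsf{SProb}\,D$ with all measures concentrated on $\iota(X)$, then $\mu_n\to_w\mu$ in $\textsf{SProb}\,X$. The danger is that convergence in $D$ could be ``witnessed'' by mass escaping toward points of $D\setminus\iota(X)$ — precisely the phenomenon ruled out by concentration. The plan is to use Portmanteau~\ref{thm:port}(iii): fix an open $O\subseteq X$ and, since $\iota$ is an embedding, write $O=\iota^{-1}(V)$ for some open $V\subseteq D$. Then, using that each $\mu_n$ and $\mu$ are concentrated on $\iota(X)$ so that $\mu_n(O)=(\iota\,\mu_n)(V)$ and $\mu(O)=(\iota\,\mu)(V)$, the inequality $\liminf_n(\iota\,\mu_n)(V)\ge(\iota\,\mu)(V)$ transfers directly to $\liminf_n\mu_n(O)\ge\mu(O)$. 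The delicate point is that $\iota(X)$ need only be a $G_\delta$ in $D$, not open, so the identity $\mu_n(\iota^{-1}(V))=(\iota\,\mu_n)(V)$ must be justified carefully from concentration — here I would note that $(\iota\,\mu_n)(V)=(\iota\,\mu_n)(V\cap\iota(X))=\mu_n(\iota^{-1}(V))$ since the complement $V\setminus\iota(X)$ is $(\iota\,\mu_n)$-null by concentration.

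\emph{Assembling the embedding.} Finally I would combine the pieces: $e$ is injective (Proposition~\ref{prop:prob-embed}(i)) and continuous, and the argument above shows $j=e^{-1}\colon e(\textsf{SProb}\,X)\to\textsf{SProb}\,X$ (from Proposition~\ref{prop:prob-embed}(ii)) is continuous for the subspace weak topology on $e(\textsf{SProb}\,X)$. Hence $e$ is a homeomorphism onto its image, i.e.\ a topological embedding. Working with nets rather than sequences is advisable unless one first invokes that both weak topologies are metrizable (which holds here since $X$ and $D$ are countably based and separable), in which case sequential arguments suffice throughout.
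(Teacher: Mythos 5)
Your proposal is correct and follows essentially the same route as the paper: both directions are handled by the $\liminf$-on-open-sets criteria (Theorem~\ref{thm:port} on the $X$ side, Theorem~\ref{thm:topvsweak} on the $D$ side), with concentration on $\iota(X)$ doing the work in the inverse direction, and injectivity quoted from Proposition~\ref{prop:prob-embed}. The only cosmetic difference is your initial test-function route for continuity, which you yourself correctly retract in favor of the open-set criterion; note also that the identity $(\iota\,\mu_n)(V)=\mu_n(\iota^{-1}(V))$ is just the definition of the push-forward, so the ``careful justification from concentration'' you flag there is not actually needed (concentration is only needed, as in the paper, to pass from $\nu_n(\iota(O))$ to $\nu_n(O')$ for measures not presented as push-forwards).
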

\begin{proof}
We first show $e\colon \textsf{SProb}\, X\to \textsf{SProb}\, D$ is continuous: Let $\mu_n,\mu\in\textsf{SProb}\, X$ with $\mu_n\to_w \mu$. To show $e(\mu_n)\to_w e(\mu)$, let $O\subseteq D$ be Lawson open. Then:
\begin{eqnarray*}
\liminf_n e(\mu_n)(O) &= & \liminf_n \iota\, \mu_n(O)\\ &=& \liminf_n \mu_n(\iota^{-1}(O)) \notag\\
& \geq & \mu(\iota^{-1}(O)) \qquad\qquad \text{by Theorem~\ref{thm:port}(ii)}\\
&=& e(\mu)(O).
\end{eqnarray*}
Then Theorem~\ref{thm:topvsweak}(ii) implies $e(\mu_n)\to_w e(\mu)$. 

For the converse, let $\nu_n\to_w \nu$ in $\textsf{SProb}\, D$ with $\nu_n, \nu$ all concentrated on $\ran e$. Since $\iota\colon X\to D$ is a topological embedding, given an open set $O\subseteq X$, there is some Lawson open $O'\subseteq D$ satisfying $\iota(O) = O'\cap \iota(X)$. Then,
\begin{eqnarray*}
\liminf_n j(\nu_n)(O) & = & \liminf_n \nu_n(\iota(O))\\  &=&  \liminf_n \nu(\iota(O)) \\
& = & \liminf_n \nu_n(O'\cap \iota(X))\\
& = & \liminf_n \nu_n(O') \qquad\quad\ \text{$\nu$ is concentrated on $\iota(X)$}\\
&\geq & \nu(O') \qquad\qquad\qquad\ \text{ by Theorem~\ref{thm:topvsweak}(ii)}\\
&=& j(\nu)(O).
\end{eqnarray*}
It follows by Theorem~\ref{thm:port}(ii) that $j(\nu_n)\to_w j(\nu)$ in \textsf{SProb}$\, X$.
\end{proof}
\begin{remark}
Theorem~\ref{thm:embedcont} is Corollary 4.1 of~\cite{weaktop} extended to the case of sub-probability measures, from the case of probability measures. The proof is identical to the one in~\cite{weaktop}, except the reasoning has been changed to include the results we established for \textsf{SProb}.
\end{remark}

\begin{theorem}\label{thm:cantskor}
Let $X$ be a Polish space. For each $\mu\in \textsf{SProb}\, X$ there is a measurable partial map $f_\mu\colon C\rightharpoonup X$ satisfying $f_\mu\, \nu_C = \mu$. Moreover, if $\mu_n, \mu\in \textsf{SProb}\, X$ and $\mu_n\to_w \mu$, then $f_{\mu_n}\to f_\mu$ a.s. on $\dom f_\mu$ wrt $\nu_C$.
\end{theorem}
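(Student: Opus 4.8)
The plan is to reduce the Polish-space statement to the domain statement already proved in Corollary~\ref{cor:skorohod}, transporting measures and maps across a computational model. First I would invoke the computational model of $X$ recalled in this section: there is a countably-based coherent domain $D$ together with a topological embedding $\iota\colon X\to D$ onto a $G_\delta$-subset $\iota(X)\subseteq \text{Max}\, D$, with $\iota$ a homeomorphism of $X$ onto $\iota(X)$ in the relative Scott topology. (Coherence can always be arranged, since every Polish space embeds as a $G_\delta$ into the Hilbert cube, which is the space of maximal points of a countably-based coherent domain.) Given $\mu\in\textsf{SProb}\, X$, I push it forward to $e(\mu)=\iota\,\mu\in\textsf{SProb}\, D$; by Proposition~\ref{prop:prob-embed}(ii), $e(\mu)$ is concentrated on $\iota(X)$.

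Next I apply Corollary~\ref{cor:skorohod} to $D$ and $e(\mu)$, obtaining a measurable partial map $g\colon C\rightharpoonup D$ with $g\,\nu_C=e(\mu)$. Since $e(\mu)(D\setminus\iota(X))=0$, the set $C_X:=g^{-1}(\iota(X))\subseteq\dom g$ is Borel (as $\iota(X)$ is $G_\delta$ and $g$ measurable) and $\nu_C$-conull in $\dom g$. I then define $f_\mu\colon C_X\to X$ by $f_\mu=\iota^{-1}\circ(g\vert_{C_X})$, which is measurable because $\iota^{-1}\colon\iota(X)\to X$ is a homeomorphism, hence a Borel isomorphism. To verify $f_\mu\,\nu_C=\mu$, for Borel $A\subseteq X$ I note $f_\mu^{-1}(A)=g^{-1}(\iota(A))$ (as $\iota(A)\subseteq\iota(X)$), where $\iota(A)$ is Borel in $D$ because $\iota$ is a Borel isomorphism onto the Borel set $\iota(X)$; then $f_\mu\,\nu_C(A)=\nu_C(g^{-1}(\iota(A)))=e(\mu)(\iota(A))=\iota\,\mu(\iota(A))=\mu(A)$, using $g\,\nu_C=e(\mu)$ and injectivity of $\iota$. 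Equivalently, $\mu=j(e(\mu))$ by Proposition~\ref{prop:prob-embed}(ii).

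For the convergence statement, suppose $\mu_n\to_w\mu$ in $\textsf{SProb}\, X$. By Theorem~\ref{thm:embedcont}, $e$ is a weak-topology embedding, so $e(\mu_n)\to_w e(\mu)$ in $\textsf{SProb}\, D$. Applying Corollary~\ref{cor:skorohod} to $D$ produces maps $g_n=f_{e(\mu_n)}$ and $g=f_{e(\mu)}$ with $g_n\to g$ in the Lawson topology $\nu_C$-a.s.\ on $\dom g$. Since $e(\mu)$ and each $e(\mu_n)$ are concentrated on $\iota(X)$, intersecting these countably many conull sets with the a.s.-convergence set yields a single $\nu_C$-conull $E\subseteq C_X$ on which $g(c)\in\iota(X)$, $g_n(c)\in\iota(X)$ for all $n$, and $g_n(c)\to g(c)$ in the Lawson topology. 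Because the Lawson topology refines the Scott topology, this forces convergence in the relative Scott topology on $\iota(X)=\text{Max}\, D$, and applying the relative-Scott homeomorphism $\iota^{-1}$ gives $f_{\mu_n}(c)=\iota^{-1}(g_n(c))\to\iota^{-1}(g(c))=f_\mu(c)$ in $X$ for every $c\in E$. Hence $f_{\mu_n}\to f_\mu$ $\nu_C$-a.s.\ on $\dom f_\mu$.

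The step I expect to require the most care is the bookkeeping around $\iota(X)$: one must ensure that $\nu_C$-almost every value of $g$ and of every $g_n$ lies in $\iota(X)=\text{Max}\, D$, so that $\iota^{-1}$ can be applied pointwise, and that the countably many resulting null sets assemble into a single conull set. This is exactly where the concentration statement of Proposition~\ref{prop:prob-embed}(ii) is indispensable. By contrast, the topological transfer is painless, since only the easy direction (convergence in the finer Lawson topology implies convergence in the coarser Scott topology) is needed; I do not require the stronger fact that the relative Scott and Lawson topologies coincide on $\text{Max}\, D$.
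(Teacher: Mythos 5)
Your proposal is correct and follows exactly the paper's own route: the paper proves this theorem by citing Corollary~\ref{cor:skorohod} (the domain-theoretic Skorohod result), Theorem~\ref{thm:polish} (the computational model of a Polish space as $\text{Max}\,D$ of a countably-based bounded complete, hence coherent, domain), and Theorem~\ref{thm:embedcont} (the weak-topology embedding $e\colon \textsf{SProb}\,X\to\textsf{SProb}\,D$). Your write-up simply supplies the bookkeeping -- concentration on $\iota(X)$ via Proposition~\ref{prop:prob-embed}(ii), Borel transfer through $\iota^{-1}$, and the passage from Lawson to relative Scott convergence -- that the paper's one-line proof leaves implicit.
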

\begin{proof}
This follows from Corollary~\ref{cor:skorohod}, Theorem~\ref{thm:polish} and Theorem~\ref{thm:embedcont}
\end{proof}
\subsection{Bounded complete domains and Skorohod's Theorem}
The connection between domains and Polish spaces involves computational models. A \emph{computational model for a topological space $X$} is a domain $D$ for which there is a topological embedding $X\simeq \text{Max}\, D$ of $X$ as the space of maximal elements of $D$ endowed with the relative Scott topology. As described in the Introduction, this notion emerged from the work of Edalat, who developed the first domain models of spaces arising in real analysis using the domain of compact subsets of the space under reverse inclusion. Later, Lawson~\cite{lawson} showed that the space $\text{Max}\, D$ of maximal elements of a bounded complete countably based domain in the relative Scott topology is a Polish space, and Ciesielski, Flagg and Kopperman~\cite{kopperman,kopperman2} showed that every Polish space has such a model. Finally, Martin~\cite{martin} noted that the space of maximal elements of any countably based, bounded complete domain is a $G_\delta$ in the relative Scott topology. Since these results play a crucial role in our work, we state them formally:

\begin{theorem}\label{thm:polish}(Lawson~\cite{lawson}, Ciesielski, et al.~\cite{kopperman,kopperman2}, Martin~\cite{martin}) 
A space $X$ is representable as $\text{Max}\, D$ in the relative Scott topology, for $D$ a countably based, bounded complete domain, iff $X$ is a Polish space. In such a representation, $X$ is a $G_\delta$--subspace of $D$ in the Scott topology. 
\end{theorem}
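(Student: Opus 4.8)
Theorem~\ref{thm:polish} is a compilation of three independently-proved results, so the plan is to cite and assemble rather than to prove from scratch. The three ingredients are: (1) Lawson's result that $\text{Max}\,D$ is Polish in the relative Scott topology whenever $D$ is a countably based, bounded complete domain; (2) the Ciesielski--Flagg--Kopperman converse that every Polish space arises this way; and (3) Martin's observation that $\text{Max}\,D$ is always a $G_\delta$-subset of $D$ in the Scott topology. Since the statement attributes each piece to a specific paper, the ``proof'' is really a matter of organizing these citations into the biconditional asserted.

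Here is how I would structure it. For the forward direction, suppose $D$ is a countably based, bounded complete domain; I would invoke Lawson~\cite{lawson} directly to conclude $\text{Max}\,D$ is Polish, and then invoke Martin~\cite{martin} to add that $\text{Max}\,D$ is a $G_\delta$. The heart of Lawson's argument is that bounded completeness plus countable basis lets one define a compatible complete metric on the maximal elements; the key technical device is that the relative Scott topology on $\text{Max}\,D$ coincides with the relative Lawson topology there, and the Lawson topology on a countably based bounded complete domain is Polish, with $\text{Max}\,D$ inheriting completeness from the $G_\delta$ property. For the converse, given any Polish space $X$, I would cite Ciesielski--Flagg--Kopperman~\cite{kopperman,kopperman2} to produce a countably based bounded complete domain $D$ with $X \simeq \text{Max}\,D$. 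Their construction typically realizes $D$ via formal balls or via a suitable domain of closed subsets, engineered so that the maximal elements recover the points of $X$ and the Scott topology restricts to the original topology.

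The main obstacle, conceptually, is not in any single cited step but in verifying that the three results are mutually compatible in the precise form stated here, namely that the \emph{same} representation simultaneously yields ``Polish'' and ``$G_\delta$'' and that the topology in question is uniformly the relative Scott topology throughout. In practice this compatibility is immediate because Martin's $G_\delta$ result is stated for exactly the class of domains ($\omega$-continuous, bounded complete) that Lawson and Ciesielski et al.\ use, and the $G_\delta$ property is in fact the mechanism by which one checks that $\text{Max}\,D$, as a subspace of a Polish space $D$ under its Lawson/Scott structure, is itself Polish (a $G_\delta$ subset of a Polish space is Polish, by Alexandrov's theorem). So I would present the proof as: cite the three sources, note that the forward direction combines Lawson and Martin, note that the converse is Ciesielski--Flagg--Kopperman, and remark that the $G_\delta$ claim in the final sentence is precisely Martin's contribution and is consistent with the forward direction via Alexandrov's theorem. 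Because the excerpt treats this as a recalled background theorem rather than a new contribution, no original argument is expected beyond this assembly.
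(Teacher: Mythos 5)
Your proposal matches the paper exactly: the paper offers no proof of Theorem~\ref{thm:polish}, presenting it purely as a recalled background result assembled from Lawson~\cite{lawson} (Max$\,D$ is Polish), Ciesielski--Flagg--Kopperman~\cite{kopperman,kopperman2} (every Polish space has such a model), and Martin~\cite{martin} (the $G_\delta$ property). Your citation-assembly, including the observation that the $G_\delta$ claim is Martin's contribution, is precisely the paper's treatment.
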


Since bounded complete domains are coherent~\cite{abrjung}, our results from Section~\ref{sec:cantor} apply to them, but bounded completeness allows us to refine the mappings used.
Recall that $[CT\rightharpoonup D]$ denotes the family  of the partial maps $f\colon C\rightharpoonup D$ that are relatively Scott continuous on their domain, which is a subset of a Lawson closed antichain $C\subseteq CT$. Recall our notation~\ref{not:nota1} that if $f, g$ are such maps, then $f\leq g$ iff $\dom f\subseteq \da \dom g$ and $f\circ \pi_{\dom f,\dom g} \leq g$, where $\pi_{\dom f,\dom g}$ is defined. 

\begin{proposition}\label{prop:scottcont}
Let $D$ be a bounded complete domain, and let $f\colon C_n \hookrightarrow D$ be a partial map. Then $f$ has a Scott-continuous extension $\wid{f}\colon CT \to D$. Moreover, if $m\leq n$ and $f\colon C_m\rightharpoonup D$ and $g\colon C_n\rightharpoonup D$ are such maps with $f\leq g$, then $\wid{f}\leq \wid{g}$.
\end{proposition}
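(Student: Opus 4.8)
The plan is to build $\wid f$ by a canonical downward-infimum construction that exploits bounded completeness, and then to read off both Scott continuity and the monotonicity clause from the way the ``longest prefix'' map interacts with the prefix order. Fix the construction first. Since $\dom f$ is a finite antichain, its down-set $\da\dom f$ is a finite subtree of $CT$ of depth at most $n$. For $w\in CT$ let $p_w$ be the longest prefix of $w$ lying in $\da\dom f$; this exists whenever $\dom f\neq\emptyset$, since the root is then in $\da\dom f$ and the prefixes of $w$ lying in $\da\dom f$ form a finite chain (and if $\dom f=\emptyset$ we simply set $\wid f\equiv\perp$). Define
$$\wid f(w)=\textstyle\bigsqcap\{\,f(c)\mid c\in\dom f,\ p_w\leq c\,\}.$$
The indexing set is non-empty, because $p_w\in\da\dom f$, and finite, so the greatest lower bound exists precisely because $D$ is bounded complete. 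For $c\in\dom f$ one has $p_c=c$ and the infimum is over $\{c\}$, so $\wid f(c)=f(c)$; hence $\wid f$ genuinely extends $f$.

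Next I would verify Scott continuity. Monotonicity is immediate: if $w\leq w'$ then $p_w$ is a prefix of $w'$ lying in $\da\dom f$, so $p_w\leq p_{w'}$, whence $\{c\mid p_{w'}\leq c\}\subseteq\{c\mid p_w\leq c\}$ and the infimum over the smaller set is the larger, giving $\wid f(w)\leq\wid f(w')$. For directed suprema I would use that $CT$ is a tree, so every directed subset is a chain, and a chain either is eventually constant or converges to an infinite word $w=\sup_k w_k$. In the latter case, since $\da\dom f$ has depth at most $n$, the chain $p_{w_0}\leq p_{w_1}\leq\cdots$ stabilizes at $p_w$ once $\operatorname{len}(w_k)$ exceeds $\operatorname{len}(p_w)$; thus $\wid f(w_k)=\wid f(w)$ for all large $k$ and the supremum is preserved. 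This settles the existence of a Scott-continuous extension.

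For the \emph{moreover} clause I would compare the two extensions pointwise. From $\dom f\subseteq\da\dom g$ one gets $\da\dom f\subseteq\da\dom g$, and arguing as above this yields $p^f_w\leq p^g_w$ for every $w$, where $p^f_w,p^g_w$ denote the longest prefixes of $w$ in $\da\dom f,\da\dom g$ respectively. Writing $\wid f(w)=\bigsqcap\{f(b)\mid b\in\dom f,\ p^f_w\leq b\}$ and $\wid g(w)=\bigsqcap\{g(c)\mid c\in\dom g,\ p^g_w\leq c\}$, it suffices to show $\wid f(w)\leq g(c)$ for each $c\in\dom g$ with $p^g_w\leq c$, and then take the infimum over such $c$. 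Given such a $c$, its length-$m$ prefix $b$ satisfies $p^f_w\leq b$, since both $b$ and $p^f_w$ are prefixes of $c$ and $\operatorname{len}(p^f_w)\leq m$. Provided $b\in\dom f$ — which is exactly the point where the structure must be invoked — the hypothesis $f\circ\pi_{\dom f,\dom g}\leq g$ reads $f(b)\leq g(c)$, and then $\wid f(w)\leq f(b)\leq g(c)$, as required.

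I expect this last step to be the main obstacle, and it is here that the hypotheses have to be used in full. The delicate point is that a $c\in\dom g$ lying above $p^g_w$ need not, a priori, have its length-$m$ prefix $b$ in $\dom f$: when $\dom g$ carries elements that are not above any point of $\dom f$, the naive infimum comparison breaks down. Controlling this is precisely the combined force of $\dom f\subseteq\da\dom g$ and $f\circ\pi_{\dom f,\dom g}\leq g$, and it requires tracking the alignment of the two antichains through the projection $\pi_{\dom f,\dom g}$ together with the regions on which the partial maps are left undefined. Once that bookkeeping is in place, everything else reduces to the order-theoretic fact that the greatest-lower-bound operation is monotone in its argument set, which is exactly what bounded completeness makes available.
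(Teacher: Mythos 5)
Your construction of the extension is genuinely different from the paper's, and the difference is fatal to the \emph{moreover} clause: with your definition of $\wid{f}$ the monotonicity claim is simply false, so the difficulty you flag in your last paragraph is not bookkeeping that can be supplied later -- no argument can close it. Concretely, let $D=\{\perp,a,b\}$ be the flat bounded complete domain with $a,b$ incomparable, let $m=1$, $n=2$, let $\dom f=\{0\}$ with $f(0)=a$, and let $\dom g=\{00,01,10\}$ with $g(00)=g(01)=a$ and $g(10)=b$. Then $f\leq g$ in the order of Notation~\ref{not:nota1}: $\dom f\subseteq\da\dom g$, and $\pi_{\dom f,\dom g}$ is defined only at $00$ and $01$ (the word $10$ lies above no element of $\dom f$), where $f\circ\pi_{\dom f,\dom g}=a\leq g$. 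At the root $\varepsilon$ your definition gives $\wid{f}(\varepsilon)=f(0)=a$, since $0$ is the only element of $\dom f$ above $p^f_\varepsilon=\varepsilon$. But \emph{any} monotone extension $\wid{g}$ of $g$ satisfies $\wid{g}(\varepsilon)\leq\wid{g}(00)\sqcap\wid{g}(10)=a\sqcap b=\perp$, so $\wid{f}\leq\wid{g}$ fails no matter how $g$ is extended. The culprit is exactly the case you identified: an element of $\dom g$ lying above no element of $\dom f$ forces every extension of $g$ down to $\perp$ at the points below it, so $\wid{f}$ must already be $\perp$ there; your $\wid{f}$, which takes the infimum only over the values of $f$ that happen to exist above the longest prefix, is too large. (Your existence argument, by contrast, is fine: directed sets in $CT$ are chains and your stabilization argument does give a Scott-continuous extension.)

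The paper avoids this by building the collapse into the definition instead of trying to recover it afterwards: it sets $f_0(x)=\inf f(\ua x\cap\dom f)$ \emph{only when every word of $C_n$ above $x$ lies in $\dom f$}, and $f_0(x)=\perp_D$ otherwise, then puts $\wid{f}=f_0\circ\pi_{\da C_n}$. In the example above this gives $\wid{f}(\varepsilon)=\perp$ (because $1\notin\dom f$), and the obstruction disappears: wherever $\wid{f}$ exceeds $\perp$, every length-$m$ word above that point lies in $\dom f$, so every $c\in C_n$ above that point has its length-$m$ prefix in $\dom f$, and provided such $c$ lie in $\dom g$, the hypothesis $f\circ\pi_{\dom f,\dom g}\leq g$ yields the comparison. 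Note, incidentally, that this proviso is itself not a consequence of the bare relation $f\leq g$ (take $\dom g=\{00\}$ alone: then $f\leq g$, but the paper's extensions give $\wid{g}(0)=\perp<a=\wid{f}(0)$); it does hold for the maps actually fed into this proposition, namely those built in Proposition~\ref{prop:dommap1}, whose domains satisfy $\pi_{m,n}^{-1}(\dom f)\subseteq\dom g$. So your closing instinct that something beyond the stated hypotheses must be invoked is correct, but the repair is to change the definition of the extension and use this cone condition, not to push harder on the infimum comparison for your $\wid{f}$.
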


\begin{proof}
If $f\colon C_m\rightharpoonup D$, then we define $\wid{f}\colon CT\to D$ in two steps. First, we define\\[1ex] 
\centerline{$f_0\colon \da C_n \to D$ by $f_0(x) = \begin{cases} \inf f(\ua x\cap \dom f) & \text{ if } \ua x\cap C_n\subseteq  \dom f \\
\perp_D & \text{ otherwise}\end{cases}$.}\\[1ex] 
If $x\leq y$ and $x$ satisfies the first condition in the definition, then so does $y$. Then it's clear that $f_0$ is monotone and that $f_0(x) = f(x)$ for all $x\in \dom f$. Since $\da C_n\subseteq KCT$, $f_0$ is Scott-continuous. But then $\wid{f} = f_0\circ \pi_{\da C_n}\colon CT\to D$ is Scott continuous and clearly extends $f_0$, and hence also extends $f$. Finally, if $f\leq g$, then $f_0\leq g_0$ is easy to show, from which $\wid{f}\leq \wid{g}$ follows. 
\end{proof}

\begin{corollary}\label{cor:skorohodcont}
If $D$ be a countably-based bounded complete domain, then for each $\mu\in  \textsf{SProb}\, D$ there is a Scott-continuous map $f_\mu\colon CT\to D$ satisfying $f_\mu\vert_C\, \nu_C = \mu$. 

Moreover, 
if $\mu_n\in\textsf{SProb}\, D$ with $\mu_n\to_w \mu\in \textsf{SProb}$ in the weak topology, then 
$ f_{\mu_n}\vert_C\to f\vert_C$ a.s. on $C$ wrt $\nu_C$. 
\end{corollary}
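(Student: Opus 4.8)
The plan is to take the \emph{partial}, merely measurable maps produced by Corollary~\ref{cor:skorohod} and upgrade them to \emph{total}, Scott-continuous maps by feeding the finite-stage approximants through the extension of Proposition~\ref{prop:scottcont}. This extension is exactly what bounded completeness buys us (the defining infima $\inf f(\ua x\cap\dom f)$ need to exist), so the hypothesis that $D$ is bounded complete, rather than merely coherent, is used here and nowhere else.

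Concretely, I would first run the opening of the proof of Theorem~\ref{thm:mapping}: using Corollary~\ref{cor:split} and Proposition~\ref{prop:dommap1}, fix an increasing sequence of dyadic simple measures $\sigma_{\mu,m}\ll\sigma_{\mu,m+1}$ with $\sup_m\sigma_{\mu,m}=\mu$, antichains $C_{k_m}\subseteq K\,CT$, and partial maps $f_{\mu,m}\colon C_{k_m}\rightharpoonup B_D$ with $f_{\mu,m}\,\nu_{k_m}=\sigma_{\mu,m}$ and $f_{\mu,m}\leq f_{\mu,m+1}$. Applying Proposition~\ref{prop:scottcont} to each $f_{\mu,m}$ gives total Scott-continuous extensions $\wid{f_{\mu,m}}\colon CT\to D$ with $\wid{f_{\mu,m}}\leq\wid{f_{\mu,m+1}}$; since these are total, that order is the pointwise order. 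As $D$ is a dcpo, the Scott-continuous maps $CT\to D$ form a dcpo under the pointwise order, so I set $f_\mu=\sup_m\wid{f_{\mu,m}}$, a total Scott-continuous map. Unwinding the definition of $\wid{\,\cdot\,}$ shows $\wid{f_{\mu,m}}\vert_C(c)=f_{\mu,m}(\pi_{k_m}(c))$ when $\pi_{k_m}(c)\in\dom f_{\mu,m}$ and $\perp_D$ otherwise; hence $f_\mu\vert_C$ agrees with the partial map of Corollary~\ref{cor:skorohod} on $\dom f_\mu$ and sends the complementary ``lost-mass'' region to $\perp_D$. To verify $f_\mu\vert_C\,\nu_C=\mu$ I would mimic Theorem~\ref{thm:mapping}: for Scott-open $U\neq D$ (equivalently $\perp_D\notin U$), the preimage $(f_\mu\vert_C)^{-1}(U)=\bigcup_m(\wid{f_{\mu,m}}\vert_C)^{-1}(U)$ is an increasing union because $U$ is accessible by the directed suprema $\sup_m\wid{f_{\mu,m}}(c)$, so Theorem~\ref{thm:topvsweak}(i) gives $f_\mu\vert_C\,\nu_C(U)=\lim_m\sigma_{\mu,m}(U)=\mu(U)$. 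The deficit $1-\mu(D)$ is carried entirely by $\perp_D$, which is invisible to every proper Scott-open set; this is precisely the convention under which a sub-probability valuation on a pointed domain is recorded, so the stated equality holds in $\SPr{D}$.

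For the convergence clause I would carry out the same construction for every $\mu_n$, producing total maps $f_{\mu_n}=\sup_{m'}\wid{f_{\mu_n,m'}}$, and then re-run steps $1^\circ$--$4^\circ$ of Corollary~\ref{cor:skorohod} verbatim, but now on all of $C$ rather than on $\dom f_\mu$. The single order-input used in step $1^\circ$ survives: from $\sigma_{\mu,m}\ll\mu_n$ (for $n\geq N_m$) and the order-preservation in Proposition~\ref{prop:scottcont} one gets the pointwise inequality $\wid{f_{\mu,m}}\leq f_{\mu_n}$. Fixing a countable base of Lawson-open sets $O=U\setminus\ua F$ and setting $A_O=\{c\in C\mid f_\mu\vert_C(c)\in O \text{ and } f_{\mu_n}\vert_C(c)\notin O \text{ cofinally often}\}$, the inclusion $f_{\mu_{k_n}}\vert_C(A_O)\subseteq\ua F$ is established exactly as before, and then $\limsup_n\mu_n(\ua F)\leq\mu(\ua F)$ from Theorem~\ref{thm:topvsweak} forces $\nu_C(A_O)=0$. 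Taking the countable union over the base yields that $\{c\in C\mid f_{\mu_n}\vert_C(c)\not\to_\Lambda f_\mu\vert_C(c)\}$ is $\nu_C$-null.

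The hard part will be this last paragraph. In Corollary~\ref{cor:skorohod} the almost-sure convergence was asserted only on $\dom f_\mu$, and promoting it to all of $C$ means controlling the complementary region, where $f_\mu\vert_C=\perp_D$. The point I expect to require care is that passing to the total extensions leaves the null-set estimates intact: on that region the inequalities driving steps $2^\circ$--$3^\circ$ degenerate to the trivially true $\wid{f_{\mu,m}}(c)=\perp_D\leq f_{\mu_n}(c)$, so $\ua f_{\mu,m}(c)=D$ collapses the reasoning to ``$f_{\mu_{k_n}}(c)\in\ua F$,'' which is again governed by $\limsup_n\mu_n(\ua F)\leq\mu(\ua F)$. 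Once I confirm that these degenerate cases do not enlarge any $A_O$ beyond a $\nu_C$-null set, the verbatim repetition of $1^\circ$--$4^\circ$ delivers convergence $\nu_C$-almost everywhere on all of $C$.
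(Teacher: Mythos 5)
Your proposal is correct and is essentially the paper's own proof, which consists of the single line ``Combine Theorem~\ref{thm:cantskor} and Proposition~\ref{prop:scottcont}'': you carry out exactly that combination, extending the finite-stage partial maps of Proposition~\ref{prop:dommap1} to total Scott-continuous maps via Proposition~\ref{prop:scottcont}, taking the directed supremum, and re-running the convergence argument $1^\circ$--$4^\circ$ of Corollary~\ref{cor:skorohod}. The two subtleties you isolate --- the deficit $1-\mu(D)$ being parked at $\perp_D$ (so the pushforward agrees with $\mu$ on every proper Scott-open set), and the promotion of almost-sure convergence from $\dom f_\mu$ to all of $C$ --- are precisely the details the paper's one-line proof leaves implicit, and your treatment of them matches what the paper's constructions actually produce.
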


\begin{proof}
Combine Theorem~\ref{thm:cantskor} and Proposition~\ref{prop:scottcont}. 
\end{proof}

\begin{example}\label{exam:one}
One might hope that with Scott-continuous maps $f_{\mu_n}, f_\mu\colon CT\to D$, a stronger conclusion about the convergence $f_{\mu_n}\vert_C\to f_\mu\vert_C$ could be derived. But measurability of the functions $f_{\mu_n}\vert_C, f_\mu\vert_C$ is the most to be expected, so a.s.\ convergence also is the best one can do. To illustrate, consider the sequence of probability measures $\mu_n = {2^n - 1\over 2^n}\delta_0 + {1\over 2^n}\delta_1\in \textsf{Prob}\, (\{0,1\}_\perp)$ on the lifted two-point flat poset $\{0,1\}$. Then $\mu_n \to_w \mu = \delta_0$, and the construction using approximants $\sigma_{\mu,m} = {2^n-1\over 2^n}\delta_0 + {1\over 2^n}\delta_\perp\ll \mu$  yield $f_\mu(1) =\, \perp$, while $f_\mu\vert_{C\setminus \{1\}} = 0$. So, while $f_\mu$ is Scott-continuous on $CT$, its restriction to $C$ is not even lower semicontinuous. As we'll see in Subsection~4.1 below, upper semicontinuity is another matter. 
\end{example}
A \emph{stochastic process on a measure space $(S,\Sigma_S)$} is a family $\{ X_t\mid t\in T\subseteq \Re+\}$ of random variables $X_t\colon (X,\Sigma_X,\nu)\to (S,\Sigma_S)$, where $(X,\Sigma_X, \nu)$  is a probability space. It's often assumed that $(S,\Sigma_S)$ is a Polish space, in which case $\textsf{Prob}\, S$ also is Polish in the weak topology. The push forward measure $X_t\, \nu\in \textsf{Prob}\, S$ is called the \emph{law of $X_t$}, and  a natural question is the convergence properties of the family $\{X_t\, \nu\mid t\in T\}\subseteq \textsf{Prob}\, S$. Since $\textsf{Prob}\, S$ is Polish, convergence can be defined using sequences, and it's obvious that if $X_{t_n}\to X_t$ a.e. on $X$, then $X_{t_n}\, \nu \to_w X_t\, \nu$ in $\textsf{Prob}\, S$. Skorohod's Theorem not only provides a converse to this observation, it also shows the probability space $(X,\Sigma_X,\nu)$ can be assumed to be the unit interval with Lebesgue measure:

\begin{theorem}\label{thm:skor} (Skorohod's Theorem~\cite{skorohod}) 
If $P$ is a Polish space and $\mu\in\textsf{Prob}\, P$, then there is a random variable $X\colon [0,1]\to P$ satisfying $X\,\lambda = \mu$. 

Moreover, if $\mu_n, \mu\in \textsf{Prob}\,P$ satisfy $\mu_n\to_w \mu$ in the weak topology, then the random variables $X,X_n\colon [0,1]\to P$ satisfying $X\,\lambda = \mu$ and $X_n\,\lambda = \mu_n$, also satisfy $X_n\to X$ a.s.~wrt Lebesgue measure. 
\end{theorem}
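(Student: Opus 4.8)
The plan is to obtain the random variable $X\colon[0,1]\to P$ by transporting the Cantor-set maps of Theorem~\ref{thm:cantskor} across the canonical surjection of $C$ onto the unit interval. First I would invoke Theorem~\ref{thm:cantskor}: since $P$ is Polish, for each $\mu\in\textsf{Prob}\,P$ it yields a measurable partial map $f_\mu\colon C\rightharpoonup P$ with $f_\mu\,\nu_C=\mu$, and for $\mu_n\to_w\mu$ it yields the almost-sure convergence $f_{\mu_n}\to f_\mu$ on $\dom f_\mu$ with respect to $\nu_C$. Because $\mu$ is a probability measure, $\nu_C(\dom f_\mu)=(f_\mu\,\nu_C)(P)=\mu(P)=1$, so $f_\mu$ is defined $\nu_C$-almost everywhere and the partiality is harmless.

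Next I would set up the surjection $\phi\colon C\to[0,1]$ sending an infinite binary word to its value, regarding $C$ as a complete chain in its natural (lexicographic) order. As recalled in Section~\ref{subsec:domains}, $\phi$ preserves all suprema and infima, so it is the upper adjoint of a Galois connection whose lower adjoint $j\colon[0,1]\to C$ preserves suprema; here $\phi\circ j=1_{[0,1]}$ since $\phi$ is onto, and $j$, being monotone between chains, is Borel. The two measure facts I need are $\phi\,\nu_C=\lambda$ (Haar measure on the coin-flipping Cantor set pushes forward to Lebesgue measure) and, crucially, $j\,\lambda=\nu_C$.

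To obtain $j\,\lambda=\nu_C$ I would argue as follows. Pushing $j\,\lambda$ forward gives $\phi\,(j\,\lambda)=(\phi\circ j)\,\lambda=\lambda=\phi\,\nu_C$, so the two measures have the same image under $\phi$. Now $\phi$ fails to be injective only on the countable set of words corresponding to the two binary expansions of a dyadic rational; this set is $\nu_C$-null, and its $j$-preimage sits inside the Lebesgue-null set of dyadics, hence is $(j\,\lambda)$-null as well. Off this countable set $\phi$ is a Borel isomorphism onto $[0,1]$ minus the dyadics, so two measures that avoid the exceptional set and have equal $\phi$-images must coincide, giving $j\,\lambda=\nu_C$.

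Finally I would define $X=f_\mu\circ j$, extended arbitrarily on the $\lambda$-null set where $j(r)\notin\dom f_\mu$, and $X_n=f_{\mu_n}\circ j$. Then $X\,\lambda=f_\mu\,(j\,\lambda)=f_\mu\,\nu_C=\mu$, which is the first assertion; for the second, the $\nu_C$-null set $N=\{c\mid f_{\mu_n}(c)\not\to f_\mu(c)\}$ supplied by Theorem~\ref{thm:cantskor} pulls back to $j^{-1}(N)$ with $\lambda(j^{-1}(N))=(j\,\lambda)(N)=\nu_C(N)=0$, so $X_n\to X$ almost surely with respect to $\lambda$. I expect the main obstacle to be the measure-theoretic bookkeeping around $\phi$ and $j$ — establishing $j\,\lambda=\nu_C$ cleanly despite $\phi$ being two-to-one on the dyadics, while confirming that $j$ is Borel and that $f_\mu\circ j$ remains measurable; once $j\,\lambda=\nu_C$ is in hand, the transport of both the law and the almost-sure convergence is immediate.
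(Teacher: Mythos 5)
Your proposal is correct and takes essentially the same route as the paper: invoke the Cantor-set representation (your Theorem~\ref{thm:cantskor}; the paper formally cites its refinement, Corollary~\ref{cor:skorohodcont}) and then transport both the law and the a.s.\ convergence along the lower adjoint $j\colon[0,1]\to C$ of the canonical surjection. The identity $j\,\lambda=\nu_C$ that you re-derive from $\phi\,\nu_C=\lambda$ is exactly the paper's Proposition~\ref{prop:lebesgue}, proved there by the same kind of almost-everywhere Borel-isomorphism argument.
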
 

Before we prove the theorem, we need one more preparatory result. W.\ M.\ Schmidt was the first to observe that the canonical surjection $\pi\colon C\to [0,1]$ from the Cantor set, $C\simeq 2^\Nat$ sends Haar measure $\nu_C$ to Lebesgue measure. We need the lower adjoint of that projection for our proof.

\begin{proposition}\label{prop:lebesgue} 
Let $C\simeq 2^\Nat$ denote the Cantor set regarded as the countable product of two-point groups. Then the canonical projection $\pi\colon C\to [0,1]$ has a lower adjoint $j\colon [0,1]$ satisfying $j([0,1]) = C\setminus KC$, and $j\, \lambda = \nu_C$,  where $\lambda$ denotes Lebesgue measure.
\end{proposition}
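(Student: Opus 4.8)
The plan is to equip $C\simeq 2^\Nat$ with the lexicographic order, under which it is a complete chain and hence a complete lattice, and to take on $[0,1]$ its usual order. The canonical projection $\pi\colon C\to[0,1]$ is the binary-expansion map $\pi(c)=\sum_{n\ge 0}c_n 2^{-(n+1)}$, which is plainly monotone. The first step is to show $\pi$ preserves arbitrary infima (and, dually, suprema), so that it qualifies as an upper adjoint. Given $S\subseteq C$ with $c^\ast=\inf S$, monotonicity gives $\pi(c^\ast)\le\inf\pi(S)$; if this were strict, then by surjectivity of $\pi$ and density of $\Re$ I would pick a real $t$ with $\pi(c^\ast)<t<\inf\pi(S)$ and a preimage $c'\in\pi^{-1}(t)$. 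Since $C$ is totally ordered and $\pi$ monotone, $\pi(c')<\pi(s)$ forces $c'<s$ for every $s\in S$, while $\pi(c')>\pi(c^\ast)$ forces $c'>c^\ast$, exhibiting a lower bound of $S$ strictly above $\inf S$ — a contradiction. Because $\pi$ preserves infima, the machinery recalled in Section~\ref{subsec:domains} produces its lower adjoint $j\colon[0,1]\to C$, given by $j(x)=\inf\pi^{-1}(\ua x)=\inf\{c\in C\mid \pi(c)\ge x\}$, which preserves all suprema; moreover $\pi\circ j=1_{[0,1]}$ since $j(x)$ is always a binary representative of $x$.

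The second step is to describe $j$ and its range explicitly. The only failure of injectivity of $\pi$ occurs at the dyadic rationals, where the two representatives form a jump $w0111\ldots < w1000\ldots$; taking the infimum in the definition of $j$ selects the representative ending in all $1$'s, so $j(x)$ is the binary expansion of $x$ with infinitely many $1$'s (with $j(1)=111\ldots$ and $j(0)=000\ldots$). A direct computation of $\ll$ in the chain shows that an element is non-compact precisely when it is the supremum of the elements strictly below it; these are exactly the sequences with infinitely many $1$'s, so the compact elements $KC$ are the eventually-zero sequences (finitely many $1$'s), a countable set. Hence $\ran j=C\setminus KC$, up to the single $\nu_C$-null point $000\ldots=j(0)$, which is the only delicate case in matching the range exactly.

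The third step is the measure identity $j\,\lambda=\nu_C$. Since $j$ is monotone it is Borel measurable, and for a cylinder $\ua w\cap C$ with $w\in\{0,1\}^\ast$ the set $j^{-1}(\ua w\cap C)$ is the dyadic subinterval of $[0,1]$ of length $2^{-|w|}$: $j(x)$ begins with $w$ exactly when $x$ lies in that interval, the endpoints being assigned by the all-$1$'s convention and so irrelevant to length. Thus $j\,\lambda(\ua w\cap C)=\lambda\big(j^{-1}(\ua w\cap C)\big)=2^{-|w|}=\nu_C(\ua w\cap C)$. The cylinders are closed under finite intersection and generate the Borel $\sigma$-algebra of $C$, and both $j\,\lambda$ and $\nu_C$ are probability measures, so a standard uniqueness theorem forces $j\,\lambda=\nu_C$. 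Equivalently, one may invoke Schmidt's identity $\pi\,\nu_C=\lambda$ together with $j\circ\pi=\mathrm{id}$ on the $\nu_C$-conull set $C\setminus KC$ to conclude $j\,\lambda=j\,\pi\,\nu_C=\nu_C$.

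The main obstacle throughout is the bookkeeping at the dyadic rationals: this is simultaneously where $\pi$ fails to be injective, where the range computation $\ran j=C\setminus KC$ is subtle (choice of representative, and the status of the bottom sequence $000\ldots$), and where interval endpoints enter the measure computation. The saving grace is that in each case the exceptional set is countable, hence both $\lambda$- and $\nu_C$-null, so none of these boundary effects disturbs the measure identity.
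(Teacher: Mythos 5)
Your proof is correct, and it takes a genuinely more self-contained route than the paper. The paper's own proof outsources essentially all of the structure to~\cite{brian-misl}: Schmidt's identity $\pi\,\nu_C=\lambda$, the sup/inf-preservation of $\pi$ and hence the existence of $j$, the range identity, and the fact that $\pi\vert_{C\setminus KC}$ and $j$ form a Borel isomorphism; the only argument actually written out there is the closing computation $j\,\lambda(A)=\lambda(j^{-1}(A))=\lambda(j^{-1}(A\setminus KC))=\pi\,\nu_C(j^{-1}(A\setminus KC))=\nu_C(A\setminus KC)=\nu_C(A)$, which is exactly the ``equivalently'' remark at the end of your third step. Your primary route instead proves the ingredients from scratch: the density/surjectivity argument for inf-preservation of $\pi$ in the lexicographic chain is correct; the adjunction formula $j(x)=\inf\pi^{-1}(\ua x)$ does select the all-$1$'s expansion; your identification of $KC$ with the eventually-zero sequences is right (in a complete chain the non-compact elements are precisely the suprema of what lies strictly below them, here the sequences with infinitely many $1$'s); and verifying $j\,\lambda=\nu_C$ on the $\pi$-system of cylinders $\ua w\cap C$, where both measures give mass $2^{-|w|}$, followed by the standard uniqueness theorem, is a complete and elementary proof that never needs Schmidt's identity. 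What your approach buys is independence from the cited reference; what the paper's buys is brevity, reusing $\pi\,\nu_C=\lambda$, which it invokes elsewhere anyway. Finally, you are right to flag the range: since any sup-preserving $j$ must send $0=\sup\emptyset$ to $\bot=000\ldots$, which is compact, one literally has $j([0,1])=(C\setminus KC)\cup\{\bot\}$, and the stated equality $j([0,1])=C\setminus KC$ holds only up to this single $\nu_C$-null point (exactly, $j((0,1])=C\setminus KC$); this is a harmless imprecision in the proposition as stated, which your write-up handles honestly and which does not affect the measure identity.
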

\begin{proof}
For a proof that $\pi\, \nu_C = \lambda$, see~\cite{brian-misl}, which also has an extensive discussion of related results, including the following. First, $\pi$ preserves all sups and all infs, so, in particular, it has a lower adjoint $j\colon [0,1]\to C$ preserving all suprema. Then $j([0,1]) = C\setminus KC$, where $KC$ is the set of compact elements, which is countable, so $\nu_C(KC) = 0$. The pair of maps, $\pi\vert_{C\setminus KC}$ and $j$, form a Borel isomorphism. 

We claim $j\, \lambda = \nu_C$: Indeed, if $A\subseteq C$ is a Borel set, then 
\begin{eqnarray*}
j\, \lambda(A) = \lambda(j^{-1}(A)) &=& \lambda(j^{-1}(A\setminus KC))\\ &=&\pi\, \nu_C(j^{-1}(A\setminus KC))\\ &=& \nu_C( \pi^{-1}\circ j^{-1} (A\setminus KC))\\ &=&  \nu_C(A\setminus KC) = \nu_C(A),
\end{eqnarray*} 
since $\nu_C(KC) = 0$.
\end{proof}

\begin{proof} (of Theorem~\ref{thm:skor}) 
We can restrict the results in Corollary~\ref{cor:skorohodcont} to $\textsf{Prob}\, P$, in which case the mapping $f_\mu\colon C\to S$ is total for each $\mu\in\textsf{Prob}\ P$. Given $\mu_n,\mu\in \textsf{Prob}\, P$ with $\mu_n\to_w \mu$, then $f_{\mu_n}\to f_\mu$ a.s.~on $C$ wrt $\mu_C$. If we precompose these mappings with $j$, then we have random variables $X_n = f_{\mu_n} \circ j, X = j\circ f_\mu\colon [0,1]\to P$ as desired: $X_n\, \lambda = f_{\mu_n}\, \nu_C = \mu_n, X\, \lambda = f_\mu\,\nu_C = \mu$, and $X_n\to X$ a.s. on $[0,1]$ wrt $\lambda$. 
\end{proof}

\begin{remark} A \emph{standard Borel space} is a measurable space for which there is a Borel isomorphism onto a Borel subset of a Polish space~\cite{kechr}. This leads to two comments:
\begin{enumerate}
\item The thrust of Skorohod's Theorem is that the domain for any stochastic process whose range is a Polish space can be assumed to be a standard Borel space. Some statements of the theorem simply state it that way, without specifying which standard Borel space is being used. But most often, the standard space is assumed to be the unit interval with Lebesgue measure.
\item  Any two standard Borel spaces that are uncountable are Borel isomorphic (cf.~\cite{kechr}). Clearly $C$ is such a space, as is the canonical example, $[0,1]$. So we could have used $C$ in Theorem~\ref{thm:skor} instead of $[0,1]$.
\end{enumerate}
\end{remark}

\subsection{The special case of chains}\label{subsec:chain}
Theorem~\ref{thm:polish} asserts that Polish spaces are exactly those for which there is a computational model in a bounded complete domain. And we just found that when $D$ is bounded complete, the mappings $f_\mu\colon CT\to D$ are Scott continuous. But Example~\ref{exam:one} shows that the mappings $f_\mu\vert_C$ are still only measurable. So it's natural to ask when is it possible to assure the mappings $f_\mu\vert_C$ are lower- or upper semicontinuous, rather than just measurable. We do not have a general answer, but there is one case where we can say something, namely when the Polish space $X$ is endowed with a closed total order. The proof technique is different -- rather than an indirect approach using the Cantor tree, we take a direct approach using the unit interval.
\begin{notation}
Throughout this section, \textbf{we assume $D$ is a chain.}  
\end{notation}
 
 \begin{definition}\label{def:cmd}
 Let $\mu$ be a sub-probability measure on $D$. The \emph{cumulative distribution function} $F_\mu\colon D\to [0,1]$ is defined by $F_\mu(x) = \mu(\da x)$.  
 \end{definition}
 
 \begin{proposition}\label{prop:cmd}
 For each $\mu\in \SPr{D}$, $F_\mu$ preserves all infima. 
 \end{proposition}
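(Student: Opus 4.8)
The plan is to show that $F_\mu$ preserves arbitrary infima, i.e.\ that for any subset $S \subseteq D$ we have $F_\mu(\inf S) = \inf_{x \in S} F_\mu(x)$. Since $D$ is a chain, the infimum $\inf S$ exists for every nonempty $S$ (and $\inf \emptyset = \top$), so the statement is well-posed. The definition $F_\mu(x) = \mu(\da x)$ immediately gives monotonicity: if $x \leq y$ then $\da x \subseteq \da y$, so $F_\mu(x) \leq F_\mu(y)$ by monotonicity of the valuation $\mu$. Hence $F_\mu(\inf S) \leq \inf_{x\in S} F_\mu(x)$ is automatic, and the real content is the reverse inequality $F_\mu(\inf S) \geq \inf_{x \in S} F_\mu(x)$.

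First I would reduce to a countable infimum. Because $D$ is a chain, the set $\{F_\mu(x) \mid x \in S\} \subseteq [0,1]$ has an infimum realized by a decreasing sequence $x_1 \geq x_2 \geq \cdots$ drawn from $S$ with $F_\mu(x_n) \to \inf_{x\in S} F_\mu(x)$; and since $D$ is a chain I can arrange $m = \inf_n x_n = \inf S$ (replacing $S$ by a cofinal decreasing sequence approaching $\inf S$ from above, using totality of the order). The key step is then to express $\da m$ in terms of the $\da x_n$. In a chain, $\da(\inf_n x_n) = \bigcap_n \da x_n$: any $z \leq \inf_n x_n$ lies in every $\da x_n$, and conversely if $z \leq x_n$ for all $n$ then $z$ is a lower bound of $\{x_n\}$, hence $z \leq \inf_n x_n = m$. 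So the goal becomes $\mu\!\left(\bigcap_n \da x_n\right) = \inf_n \mu(\da x_n)$ for a decreasing chain of lower sets $\da x_1 \supseteq \da x_2 \supseteq \cdots$.

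The main obstacle is that Scott-continuous valuations are only guaranteed to commute with \emph{directed suprema} of open sets, not with decreasing intersections of sets that need not be open. I would handle this by invoking the extension result of Alvarez-Manilla, Edalat and Saheb-Djromi cited in the excerpt: every Scott-continuous valuation extends uniquely to a Borel sub-probability measure on $D$. The sets $\da x_n$ are Scott-closed (down-sets of a point are closed in the Scott topology), hence Borel, and $\mu$ as a genuine Borel measure is countably additive and therefore continuous from above along a decreasing sequence of Borel sets of finite measure, giving $\mu\!\left(\bigcap_n \da x_n\right) = \lim_n \mu(\da x_n) = \inf_n \mu(\da x_n)$. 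Combining this with $\bigcap_n \da x_n = \da m = \da(\inf S)$ yields $F_\mu(\inf S) = \inf_n F_\mu(x_n) = \inf_{x\in S} F_\mu(x)$, completing the reverse inequality and hence the proof. The one point to verify carefully is that the decreasing sequence $\{x_n\}$ can be chosen simultaneously cofinal for $\inf S$ in $D$ and for $\inf_{x\in S} F_\mu(x)$ in $[0,1]$; totality of the chain order is exactly what makes this compatible selection possible.
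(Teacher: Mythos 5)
There is a genuine gap, and it sits exactly at the point you flagged for careful verification: the reduction of an arbitrary infimum to a countable one. In a complete chain a subset need not have any countable coinitial subset. Take $D$ to be the ordinals $\alpha\leq\omega_1$ under the \emph{reverse} order (a complete, indeed algebraic, chain) and $S$ the set of countable ordinals: then $\inf S=\omega_1=\perp_D\notin S$, while every countable subset of $S$ has its infimum strictly above $\perp_D$, because a countable family of countable ordinals has countable supremum. So no sequence $x_n$ of the kind your proof requires exists, and continuity from above of a Borel measure along countable decreasing sequences then says nothing about $F_\mu(\inf S)$. (Reversing the long line gives a similar example that even satisfies the hypothesis $KD=\{\perp\}$ used later in the section, so the extra hypotheses of Theorem~\ref{thm:chain} do not rescue the argument. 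Your reduction \emph{is} sound when $D$ is countably based, since a countable basis lets one extract a countable coinitial subset of any $S$; but neither the proposition nor this section assumes a countable basis.) Note also that the worry you did record -- the simultaneous selection -- is actually the harmless part: in a chain, any sequence $x_n\in S$ with $\inf_n x_n=\inf S$ automatically gets below every element of $S$ other than possibly $\inf S$ itself, so $\inf_n F_\mu(x_n)=\inf_{x\in S}F_\mu(x)$ comes for free; what fails is the existence of such a sequence. A last, smaller point: your parenthetical $\inf\emptyset=\top$ would make the statement false for a proper sub-probability measure, since $F_\mu(\top)=\mu(D)<1=\inf\emptyset$; ``all infima'' must be read as all nonempty infima.

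The paper's proof avoids cardinality issues altogether, and the tool it uses is the one your argument is missing. Since $D$ is a chain, every nonempty $A\subseteq D$ is automatically filtered, and $\da\inf A=\bigcap_{x\in A}\da x$ exactly as you argue for sequences; the paper then invokes the fact that $\mu$, being (the extension of) a Scott-continuous valuation, preserves directed unions of Scott-open sets \emph{of arbitrary cardinality}, hence by complementation (the total mass is finite, and $\{D\setminus\da x\}_{x\in A}$ is a directed family of Scott-open sets) preserves arbitrary filtered intersections of Scott-closed sets: $\mu\bigl(\bigcap_{x\in A}\da x\bigr)=\mu(D)-\sup_{x\in A}\mu(D\setminus\da x)=\inf_{x\in A}\mu(\da x)$. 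This $\tau$-smoothness is strictly stronger than countable additivity of a Borel extension -- the Dieudonn\'e measure on the ordinals $\leq\omega_1$ is the classical witness that a countably additive Borel measure can fail it -- and it is precisely the defining property of the elements of $\SPr{D}$ in the domain-theoretic sense. If you replace your appeal to continuity from above by this complementation argument applied to the whole family $\{D\setminus\da x\}_{x\in S}$, your proof goes through for arbitrary $S$ with no reduction step at all, and it becomes the paper's proof.
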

 \begin{proof}
 Let $\mu$ be a sub-probability measure. If $x\leq y\in D$, then $\da x\subseteq \da y$, so $F_\mu(x) = \mu(\da x) \leq \mu(\da y) = F_\mu(y)$. So $F_\mu$ is monotone, and since $D$ is a chain, this means $F_\mu$ also preserves finite infima. Now, any filtered set $A\subseteq D$ is totally ordered  because $D$ is. Then $\da \inf A = \bigcap_{x\in A} \da x$, and so\\[1ex]
 \centerline{$F_\mu(\da \inf A) = F_\mu(\bigcap_{x\in A} \da x) = \mu(\bigcap_{x\in A} \da x) = \inf_{x\in A} \mu(\da x) = \inf_{x\in A} F_\mu(x)$,}\\[1ex]
 where the next-to-last equality follows from the fact that, being a  Scott-continuous valuation on $D$, $\mu$ preserves directed unions of open sets, so it preserves filtered intersections of closed sets, such as $\{ \da x\mid x\in A\}$. This shows $F_\mu$ also preserves filtered infima, and so it preserves all infima.
 \end{proof}
 
Since $F_\mu$ preserves all infima and $D$ is a continuous lattice, it follows that $F_\mu$ is an upper adjoint, so it has a unique lower adjoint $G_\mu \colon [0,1]\to D$ defined by $G_\mu(r) = \inf F_\mu^{-1}(\ua r)$. We denote this relationship by $F_\mu \dashv G_\mu$. 

We recall some facts about such adjoint pairs; for more detail, see Chapter 0 of~\cite{comp}. 
First, each component of an adjoint pair $f\colon L \to M$, $g\colon M\to L$ with $f\dashv g$ determines the other. The formula for $G$ above shows how to define the lower adjoint, given an upper adjoint: $g(x) = \inf f^{-1}(\ua x)$. Conversely, given a lower adjoint $g$, the upper adjoint $f$ is given by $f(y) = \sup g^{-1}(\da y)$. Upper adjoints preserve all infima, and lower adjoints preserve all suprema. Moreover, if $f\dashv g$ and $f'\dashv g'$, then $f\leq f'$ iff $g'\geq g$. Finally, the components earn their names because of the relationship $f\circ g\geq 1_M$ and $g\circ f\leq 1_L$.
 
\begin{proposition}\label{prop:uuperadj}
 If $\mu$ is a sub-probability measure on $D$ with cumulative distribution function $F_\mu$, then the upper adjoint, $G_\mu\colon [0,1]\to D$ satisfies $G_\mu\, \lambda = \mu$, where $\lambda$ denotes Lebesgue measure. 
\end{proposition}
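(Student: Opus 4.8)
The plan is to recognize $G_\mu$ as the order-theoretic quantile transform of $\mu$ and to run the classical ``inverse CDF'' argument in the lattice setting. First I would record that, as the lower adjoint of $F_\mu$, the map $G_\mu$ preserves all suprema, hence is monotone and Scott-continuous; in particular monotonicity forces each preimage $G_\mu^{-1}(\da x)$ to be a down-set of $[0,1]$, so $G_\mu$ is Borel measurable and the push-forward $G_\mu\,\lambda$ is a well-defined (total mass $1$) Borel measure on $D$.

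The heart of the matter is a single computation of $G_\mu^{-1}(\da x)$ for $x\in D$. The defining Galois property of the adjunction gives the equivalence $G_\mu(r)\le x \iff r\le F_\mu(x)$: the direction $(\Leftarrow)$ is immediate from $G_\mu(r)=\inf F_\mu^{-1}(\ua r)$, while $(\Rightarrow)$ follows because $F_\mu$ preserves the (nonempty) infimum defining $G_\mu(r)$, whence $F_\mu(G_\mu(r))\ge r$, and monotonicity of $F_\mu$ closes it. Consequently $G_\mu^{-1}(\da x)=\{r\mid r\le F_\mu(x)\}=[0,F_\mu(x)]$, so that
\[
G_\mu\,\lambda(\da x)=\lambda\big([0,F_\mu(x)]\big)=F_\mu(x)=\mu(\da x)
\]
for every $x$; i.e.\ $G_\mu\,\lambda$ and $\mu$ induce the same cumulative distribution function. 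To upgrade this to equality of measures I would invoke uniqueness: since $D$ is a chain, the family $\{\da x\mid x\in D\}$ is closed under finite intersection ($\da x\cap\da y=\da(x\wedge y)$) and, $D$ being second countable, generates the Borel $\sigma$-algebra, so two finite Borel measures agreeing on this $\pi$-system (together with the total space) agree everywhere by the $\pi$--$\lambda$ theorem.

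The step I expect to demand the most care is the behaviour at the top $\top$ of $D$, which is precisely where the sub-probability case departs from the probability case. The implication $(\Rightarrow)$ above uses that $F_\mu^{-1}(\ua r)$ is nonempty, which holds exactly when $r\le\mu(D)$; for $r>\mu(D)$ one has $F_\mu^{-1}(\ua r)=\emptyset$ and hence $G_\mu(r)=\inf\emptyset=\top$, so the excess $\lambda\big((\mu(D),1]\big)=1-\mu(D)$ is transported to $\top$. When $\mu$ is a probability measure ($\mu(D)=1$) this excess is void and the argument closes cleanly, giving $G_\mu\,\lambda=\mu$ outright. For a genuine sub-probability measure the computation already shows $G_\mu\,\lambda$ agrees with $\mu$ on every proper ideal $\da x$ with $x\neq\top$, and the residual deficiency $1-\mu(D)$ accumulates as an atom at $\top$; identifying $G_\mu\,\lambda$ with $\mu$ therefore hinges on the convention used for locating the missing mass of a sub-probability measure, and this is the point of the proof that must be handled explicitly rather than by transcription of the classical argument.
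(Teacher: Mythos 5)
Your core argument is the paper's own: the Galois property $G_\mu(r)\le x\iff r\le F_\mu(x)$ gives $G_\mu^{-1}(\da x)=[0,F_\mu(x)]=\da F_\mu(x)$, hence $G_\mu\,\lambda(\da x)=F_\mu(x)=\mu(\da x)$, and agreement on the sets $\da x$ is then upgraded to equality of measures. (The paper phrases that last step as ``they agree on Scott-closed sets,'' which in a complete chain are exactly the principal ideals $\da x$ together with $\emptyset$; your $\pi$--$\lambda$ argument is the same step made explicit.)

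The real content of your proposal, however, is your final paragraph, and your caution there is fully justified: the difficulty you isolate at $\top$ is not a loose end in your write-up but an actual error in the paper. Since $G_\mu\colon[0,1]\to D$ is a total map, $G_\mu\,\lambda$ always has total mass $1$ --- you note this parenthetically at the outset, and that observation alone already contradicts the statement when $\mu(D)<1$. What is true is $G_\mu\,\lambda=\mu+(1-\mu(D))\,\delta_\top$. Concretely, take $D=[0,1]$ and $\mu=\frac{1}{2}\delta_{1/2}$: then $G_\mu(r)=\top$ for $r>\frac{1}{2}$, so $G_\mu\,\lambda=\frac{1}{2}\delta_{1/2}+\frac{1}{2}\delta_\top\neq\mu$. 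The paper's proof breaks at exactly the point you flag: the asserted identity $G_\mu^{-1}(\da x)=\da F_\mu(x)$ is the Galois property, which at $x=\top$ demands $r\le F_\mu(\top)$ for every $r$, i.e.\ $F_\mu(\top)=1$. Equivalently, Proposition~\ref{prop:cmd}'s claim that $F_\mu$ preserves \emph{all} infima overlooks the empty infimum: $F_\mu(\top)=\mu(D)$ while $\inf\emptyset=1$, so for a strict sub-probability measure $F_\mu$ is not an upper adjoint and $F_\mu\dashv G_\mu$ fails over $\top$. Thus the proposition, your proof, and the paper's proof are all correct precisely on $\Pr{D}$; to cover $\SPr{D}$ the statement itself must be repaired. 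The natural repair, consistent with the partial maps $f_\mu\colon C\rightharpoonup D$ used in Section~\ref{sec:cantor}, is to take $G_\mu$ to be the partial map with domain $[0,\mu(D)]$, for which the restriction of $\lambda$ does push forward to $\mu$; the alternative convention of identifying $\mu$ with $\mu+(1-\mu(D))\delta_\top$ clashes with the paper's own convention in Theorem~\ref{thm:chain}, where the missing mass of a sub-probability is placed at $\perp$. Note also that the error propagates: in Theorem~\ref{thm:chain}, $G\mapsto G\,\lambda$ sends total Scott-continuous maps to probability measures, so as stated it cannot be onto $\SPr{D}$.
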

\begin{proof}
If $x\in D$, then 
\begin{eqnarray*}
G_\mu\, \lambda(\da x) & = & \lambda(G_\mu^{-1}(\da x))\\
& = & \lambda(\da F_\mu(x)) \qquad\qquad \qquad F_\mu\dashv G_\mu\\
& = & F_\mu(x) = \mu(\da x)
\end{eqnarray*}
Since $G_\mu\,\lambda$ and $\mu$ agree on Scott-closed sets, it follows that $G_\mu\, \lambda = \mu.$
\end{proof} 
 
 \begin{theorem}\label{thm:chain}
 If $D$ is a chain and $KD = \{\perp\}$, then $G\mapsto G\, \lambda\colon [[0,1]\to D]\to \SPr{D}$ is an order-isomorphism. Therefore, $\SPr{D}$ is a continuous lattice, and the same is true of $\Pr{D}$ is a domain.
 \end{theorem}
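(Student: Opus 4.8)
The plan is to realize $\Phi\colon G\mapsto G\,\lambda$ as the two-sided inverse of the quantile assignment $\mu\mapsto G_\mu$ coming from the cumulative distribution function, and then to transport a function-space continuous-lattice structure across the resulting order-isomorphism. First I would assemble the bijection from the results in hand. Proposition~\ref{prop:cmd} says $F_\mu$ preserves all infima, so it is an upper adjoint whose lower adjoint $G_\mu\colon[0,1]\to D$ preserves all suprema; in particular $G_\mu$ is Scott continuous, so $G_\mu\in[[0,1]\to D]$, and Proposition~\ref{prop:uuperadj} gives $G_\mu\,\lambda=\mu$. Thus $\Phi$ is surjective with right inverse $\mu\mapsto G_\mu$. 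For the other composite I would recover a supremum-preserving $G\in[[0,1]\to D]$ from $G\,\lambda$ through the adjunction: as $G$ is a lower adjoint it equals the lower adjoint of its upper adjoint $F$, and $G^{-1}(\da x)=\{r\mid G(r)\le x\}$ is a closed initial segment $[0,s]$ of $[0,1]$ with $s=\sup G^{-1}(\da x)=F(x)$, whence $G\,\lambda(\da x)=\lambda([0,F(x)])=F(x)$. So $G\,\lambda$ determines $F$ and therefore $G$, making $\Phi$ a bijection with $F_{G\lambda}=F$.

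Next I would show $\Phi$ and its inverse are monotone. In the forward direction, if $G\le H$ then $f\circ G\le f\circ H$ for every Scott-continuous $f\colon D\to\Re_+$ (these are monotone), so the change-of-variables identity gives
\[
\int f\,d(G\lambda)=\int(f\circ G)\,d\lambda\le\int(f\circ H)\,d\lambda=\int f\,d(H\lambda),
\]
and Proposition~\ref{weak-law} yields $G\lambda\le H\lambda$ in $\V D$. For the reverse direction I would feed $G\lambda\le H\lambda$ back through Proposition~\ref{weak-law}, test against the Scott-continuous indicators of the sets $(x,\top]$ together with the other monotone maps into $\Re_+$, and invoke the order-duality of adjoint pairs (a comparison of upper adjoints reverses to the opposite comparison of their lower adjoints) to turn the resulting comparison of cumulative distribution functions into $G\le H$. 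The two directions together make $\Phi$ an order-isomorphism.

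The ``therefore'' clause is then transport of structure. Both $[0,1]$ and $D$ are continuous lattices, and $[[0,1]\to D]$ is a continuous lattice because continuous lattices form a cartesian-closed category (cf.~\cite{comp}); an order-isomorphism carries this property over, so $\SPr{D}$ is a continuous lattice. For $\Pr{D}$ I would specialize the construction: by Proposition~\ref{prop:prob-embed}, $\Pr{D}=\text{Max}\,\SPr{D}$ is the set of full-mass measures, which correspond under $\Phi$ to the maps $G$ with $G\lambda(D)=1$; these form a function space of the same kind, to which the identical argument applies.

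I expect the main obstacle to be the order-reflecting half of the isomorphism — confirming that the valuation order on $\SPr{D}$ is faithfully mirrored by the pointwise order on maps — and the difficulty is concentrated entirely in the total mass. A sub-probability measure of mass below $1$ leaves a defect that the quantile construction must track, and reconciling the pointwise comparison of the $G$'s with the upper-tail comparison $\mu((x,\top])\le\nu((x,\top])$ that actually defines $\mu\le\nu$ in $\V D$ is exactly what has to be made to line up; one must also be careful to restrict to the supremum-preserving maps, on which $\Phi$ is injective and the continuous-lattice structure persists. It is this bookkeeping of mass, rather than any single inequality, that is the heart of the proof.
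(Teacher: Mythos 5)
Your core argument for the $\SPr{D}$ clause is essentially the paper's own proof: you realize $\Phi\colon G\mapsto G\,\lambda$ as inverse to the quantile assignment $\mu\mapsto G_\mu$ via the adjunction $F_\mu\dashv G_\mu$ (Propositions~\ref{prop:cmd} and~\ref{prop:uuperadj}), identify $F_{G\lambda}$ as the upper adjoint of $G$ so that adjoint pairs determine each other, reverse the comparison of cumulative distribution functions through the order-duality of adjoint pairs, and transport the continuous-lattice structure of $[[0,1]\to D]$ across the isomorphism. Your only real departure there is that you obtain monotonicity of $\Phi$ from Proposition~\ref{weak-law} and change of variables, where the paper computes directly with the sets $\da x$ and $\Ua x$, using $KD=\{\perp\}$ to write $D=\da x\,\stackrel{\cdot}{\cup}\,\Ua x$ and pass between values on Scott-closed and Scott-open sets; both routes are fine. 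Your remark that one must restrict to supremum-preserving maps is also a sound observation: a map with $G(0)>\perp$ and the map agreeing with it on $(0,1]$ but sending $0$ to $\perp$ push $\lambda$ forward to the same measure, so injectivity genuinely needs that normalization, a point the paper glosses over.

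The genuine gap is the final clause, about $\Pr{D}$. Your proposed argument --- that probability measures correspond under $\Phi$ to ``the maps $G$ with $G\lambda(D)=1$,'' and that ``these form a function space of the same kind, to which the identical argument applies'' --- does not work. Every total Scott-continuous $G\colon[0,1]\to D$ satisfies $G\,\lambda(D)=\lambda([0,1])=1$, so that condition singles out nothing; and in any case the sub-poset of $[[0,1]\to D]$ corresponding to $\Pr{D}$ is not exhibited as a function space $[[0,1]\to D']$, so there is no object for ``the identical argument'' to apply to --- a sub-poset of a continuous lattice need not be a continuous lattice. The paper instead deduces the $\Pr{D}$ clause \emph{from} the $\SPr{D}$ clause by an entirely different mechanism: the map $\mu\mapsto\mu+(1-\mu(D))\delta_\perp\colon\SPr{D}\to\Pr{D}$ is a closure operator preserving directed suprema, and the image of a continuous lattice under such a closure operator is again a continuous lattice (\cite{comp}, Definition 0-2.10ff). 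That one-line argument, or some substitute for it, is what your proposal is missing; your own closing paragraph correctly senses that the total-mass bookkeeping is where the difficulty lives, but the proposal never resolves it for the probability-measure case.
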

 \begin{proof}
 Each Scott-continuous map $G\colon [0,1]\to D$ preserves all suprema, since the domain $D$ is a chain. And each such map determines a sub-probability measure $G\, \lambda$. Then the cumulative distribution $F_{G\,\lambda}\colon D\to [0,1]$ satisfies $F_{G\,\lambda}(x) = G\,\lambda(\da x) = \lambda(G^{-1}(\da x)) = \sup G^{-1}(\da x)$. This means $F_{G\,\lambda}$ is the upper adjoint of $G$. Since upper and lower adjoints uniquely determine one another, the mapping $G\mapsto G\,\lambda$ has an inverse sending $\mu$ to the lower adjoint of $F_\mu$. 
 
 For the order structure, suppose $G\leq G'\in [[0,1]\to D]$. We show $G\,\lambda\leq G'\,\lambda$: Then given $x\in D$ and $r\in [0,1]$, if $G'(r) \leq x$, then $G(r)\leq x$; said another way, $G'^{-1}(\da x) \subseteq G^{-1}(\da x)$, so \\[1ex]
 \centerline{$G'\,\lambda(\da x) = \lambda(G'^{-1}(\da x)) = \sup G'^{-1}(\da x) \leq \sup G^{-1}(\da x) = \lambda(G^{-1}(\da x)) = G\,\lambda(\da x)$.}\\[1ex]
If $x =\, \perp$, $G\,\lambda(\Ua x) = G\,\lambda (D) \leq G'\,\lambda(D) = G'\,\lambda(\Ua x)$. On the other hand, since $KD = \{\perp\}$, then $x > \perp$ implies  $D = \da x\stackrel{\cdot}{\cup} \Ua x$, so we have\\[1ex]
 \centerline{$G\,\lambda(\Ua x) = G\,\lambda(D) - G\,\lambda(\da x) \leq G\lambda(D) - G'\,\lambda(\da x)\leq G'\lambda(D) - G'\lambda(\Ua x)  = G'\,\lambda(\Ua x)$.}\\[1ex] 
Since $D$ is a chain, every Scott-open set has the form $\Ua x$ for some $x\in D$, so $G\,\lambda\leq G'\,\lambda$. 

Conversely, if $\mu\leq \nu$, then $\mu(\da x)\geq \nu(\da x)$ by the same argument we used above,  so $F_\mu(x) = \sup \mu(\da x) \geq \sup  \nu(\da x) = F_\nu(x)$. It follows that $G_\mu\leq G_\nu$ from our remarks about adjoint pairs. 

Thus, the correspondence $G\mapsto G\,\lambda$ is an order-isomorphism. Since $[0,1]$ and $D$ are continuous lattices, they are both bounded complete domains, so $[[0,1]\to D]$ is a bounded complete domain. But $x\mapsto \top$ is the largest element of $[[0,1]\to D]$, so this is a continuous lattice.   It follows that $\SPr{D}$ is a continuous lattice as well. 

For the final claim, the mapping $\mu\mapsto \mu + (1 - \mu(D))\delta_\perp: \SPr{D}\to \Pr{D}$ is a closure operator that preserves directed sups, and the image of a continuous lattice under such a closure operator is a continuous lattice (cf.~\cite{comp}, Definition 0-2.10ff.).
 \end{proof}
\section{Summary and Future Work}
In this paper we have developed a domain-theoretic approach to random variables. Our main results have proved some standard results in probability theory using the Cantor tree as a domain, in which measurable (partial) maps from the Cantor set are approximated by Scott-continuous (partial) maps defined on subsets of the tree. We have obtained extensions of Skorohod's Theorem to the case of sub-probabilities, both for domains and for the classic case of Polish spaces. In the latter case, we have shown that the approximating maps on the Cantor tree are Scott-continuous. A novel aspect of our approach is the approximation of measurable (partial) maps from the Cantor set to a domain by Scott-continuous (partial) maps defined on the underlying tree. The novel proof-theoretic technique was the use of the transport numbers from the Splitting Lemma to show how to define the approximating partial maps. Finally, we have given a direct proof that the sub-probability measures and the probability measures on a complete chain form a continuous lattice. This result offers the first new insight in over two decades to the domain structure of $\SPr{D}$ and $\Pr{D}$, the last such results having appeared in~\cite{jungtix}.

There are a number of interesting questions to be explored. Finding further results from random variables that can be obtained using the techniques presented here is an obvious issue. Another question we have been exploring is the potential use of the disintegration theory for product measures, in order to understand the domain structure of $\SPr{(D\times E)}$, in the case $D$ and $E$ are chains. In particular, we'd like to know if we can use the fact that  $\SPr{D}$ and $\SPr{E}$ are continuous lattices to derive some insight into the domain structure of $\SPr{(D\times E)}$. Our first idea -- that this family of measures also would be a lattice -- is debunked by a simple example in~\cite{jones}, so more subtle issues are at play here. Last, we're interested in investigating the potential application of our ideas to the setting of quantum computation and quantum information; in particular, is there a role for this approach if one regards measurements as random variables?
\begin{acknowledgement}
The author wishes to acknowledge the support of the US AFOSR during the preparation of this work.
\end{acknowledgement}
\bibliographystyle{plain}

\begin{thebibliography}{00}
\bibitem{abrjung} Abramsky, S. and A. Jung, Domain Theory, in: Handbook of Logic in Computer Science, Clarendon Press (1994), pp.~1--168.
\bibitem{alvman} Alvarez-Manilla, M., A. Edalat and N. Saheb-Djarhomi, An extension result for continuous valuations, ENTCS \textbf{13} (1998), pp.~2--11.
\bibitem{barker} Barker, T., A monad for randomized algorithms, Tulane PhD dissertation, May, 2016.
\bibitem{bilokon} Bilokon, P. and A. Edalat, A domain-theoretic approach to Brownian motion and general continuous stochastic processes, Proceedings of CSL--LICS 2014, IEEE Press, pp.~15:1 -- 15:10. 
\bibitem{billings} Billingsley, P. Convergence of Probability Measures, John Wily \& Sons, 1968.
\bibitem{bolla} Bollabas, B., Graph Theory, An Introductory Course, Springer-Verlag, 1979.
\bibitem{vanbreug} van Breugal, F., M. Mislove, J. Ouaknine, J. Worrell, Domain theory, testing and simulations for labelled Markov processes, Theoretical Computer Science \textbf{333} (2005), pp~171--197. 
\bibitem{brian-misl} Brian, W. and M. Mislove, From Haar to Lebesgue via Domain Theory, Lecture Notes in Computer Science \textbf{8464} (2014), pp.~214-228.
\bibitem{kopperman} Ciesielski, K., R. C. Flagg and R. Kopperman, Characterizing topologies with bounded complete computational models, in: Proceedings of MFPS XV, ENTCS \textbf{20} (1999).
\bibitem{kopperman2} Ciesielski, K., R. C. Flagg and R. Kopperman, Polish spaces, computable approximation and bitopological spaces, Topology and Its Applications \textbf{119} (2002), pp.~241--256.
\bibitem{edalat1} Edalat, A., Domain theory and integration, Proceedings of LICS 1994, Paris, France, IEEE Press, pp.~115-124.
\bibitem{edalat2} Edalat, A., Domain theory in stochastic processes, Proceedings of LICS 1995, Paris, France, IEEE Press, pp.~244--254.
\bibitem{edalat3} Edalat, A., Dynamical systems, measures and fractals in domain theory, Information and Computation \textbf{120} (1995), pp.~32--48.
\bibitem{weaktop} Edalat, A., When Scott is weak on the top, Mathematical Structures for Computer Science \textbf{7} (1997), pp.~401--417. 
\bibitem{edalat5} Edalat, A. and R. Heckmann, A computational model for metric spaces, Theroetical Computer Science \textbf{193} (1998), pp.~53--73.
\bibitem{comp} Gierz, G., K. H. Hofmann, J. D. Lawson, M. Mislove and D. S. Scott, Continuous Lattices and Domains, Cambridge University Press, 2003. 
\bibitem{g-l-v-lics} Goubault-Larrecq, J. and D. Varacca, Continuous Random Variables. LICS 2011, IEEE Press,  pp.~97--106.
\bibitem{jones} Jones, C., Probabilistic nondeterminism, PhD Thesis, University of Edinburgh, (1988).
\bibitem{jungtix} Jung, A. and R. Tix, The troublesome probabilistic powerdomain, ENTCS \textbf{13} (1998) pp.~70--91.
\bibitem{kechr} Kechris, A., Classical Descriptive Set Theory. Springer-Verlag, 1994.
\bibitem{lawson} Lawson, J. D., Spaces of maximal points, Mathematical Structures in Computer Science \textbf{7} (1997), pp.~543--555.
\bibitem{martin} Martin, K., The space of maximal elements in a compact domain, ENTCS \textbf{40} (2001). 
\bibitem{mislove-icalp} Mislove, M., Discrete random variables over domains, Theoretical Computer Science \textbf{380}, July 2007, pp.~181-198.
\bibitem{misl-anat1} Mislove, M., Anatomy of a Domain of Continuous Random Variables I, Theoretical Computer Science  \textbf{546} (2014), pp.~176--187.
\bibitem{misl-anat2} Mislove, M., Anatomy of a Domain of Continuous Random Variables II, in: 
Bob Coecke, Luke Ong, Prakash Panangaden, editors, 
Computation, Logic, Games, and Quantum Foundations. The Many Facets of Samson Abramsky. Lecture Notes in Computer Science \textbf{7860} (2013), pp.~225--245.
\bibitem{mislove-newmon} Mislove, M., A new monad for probabilistic choiceI, Tulane preprint, 2016. 
\bibitem{munkres} Munkres, Topology, 2nd Edition, Prentice Hall Publishers, 2000, 535 pages.
\bibitem{ploto2}  Plotkin, G. D. and  J. Power: Notions of Computation Determine Monads. FoSSaCS 2002, pp~342-356.
\bibitem{scott-stoch} Scott, D., The stochastic lambda calculus, 
\bibitem{skorohod} Skorohod, A. V. , Limit theorems for stochastic processes. Theory of Probability \& Its Applications, 1 (1956), pp.~261Ð290.
\bibitem{varad} Varadarajan, V. R., Groups of automorphisms of Borel spaces, Transactions of the American Mathematical Society \textbf{109} (1963), pp.~191--220.
\end{thebibliography}

\end{document}